\newtheorem{thm}{Theorem}[section]
\newtheorem{prop}[thm]{Proposition}
\newtheorem{lemma}[thm]{Lemma}
\newtheorem{cor}[thm]{Corollary}
\newtheorem{remark}[thm]{Remark}
\newtheorem{example}[thm]{Example}
\def\bN{\mathbb{N}}
\def\bM{\mathbb{M}}
\def\bH{\mathbb{H}}
\def\bP{\mathbb{P}}
\def\Tr{\mathrm{Tr}\,}
\def\HS{\mathrm{HS}}
\def\im{\mathrm{i}}
\def\<{\langle}
\def\>{\rangle}
\def\pont{\,\cdot\,}
\def\bL{\mathbb{L}}
\def\bbbr{\mathbb{R}}
\def\bR{\mathbb{R}}
\def\bJ{\mathbb{J}}
\def\cD{\mathcal{D}}
\def\diag{\mathrm{Diag}}
\def\eps{\varepsilon}
\def\MM{\frak{M}}
\def\bI{\mathbb{I}}
\begin{document}

\ \vskip 1cm 
\centerline{\LARGE {\bf Riemannian metrics}}
\bigskip
\centerline{\LARGE {\bf on positive definite matrices}}
\bigskip
\centerline{\LARGE {\bf related to means}} 
\bigskip
\bigskip
\centerline{\Large
Fumio Hiai\footnote{E-mail: hiai@math.is.tohoku.ac.jp}
and D\'enes Petz\footnote{E-mail: petz@math.bme.hu}}

\medskip
\begin{center}
$^1$\,Graduate School of Information Sciences, Tohoku University \\
Aoba-ku, Sendai 980-8579, Japan
\end{center}
\begin{center}
$^2$\,Alfr\'ed R\'enyi Institute of Mathematics, \\ H-1364 Budapest,
POB 127, Hungary
\end{center}

\medskip
\begin{abstract}
The Riemannian metric on the manifold of positive definite matrices is defined by a
kernel function $\phi$ in the form
$K_D^\phi(H,K)=\sum_{i,j}\phi(\lambda_i,\lambda_j)^{-1}\Tr P_i H P_j K$ when
$\sum_i \lambda_i P_i$ is the spectral decomposition of the foot point $D$ and the
Hermitian matrices $H,K$ are tangent vectors. For such kernel metrics the tangent space
has an orthogonal decomposition. The pull-back of a kernel metric under a mapping
$D \mapsto G(D)$ is a kernel metric as well. Several Riemannian geometries of the
literature are particular cases, for example, the Fisher-Rao metric for multivariate
Gaussian distributions and the quantum Fisher information. In the paper the case
$\phi(x,y)=M(x,y)^\theta$ is mostly studied when $M(x,y)$ is a mean of the positive
numbers $x$ and $y$. There are results about the geodesic curves and geodesic distances.
The geometric mean, the logarithmic mean and the root mean are important cases.
 
\bigskip\noindent
{\it AMS classification:}
15A45; 15A48; 53B21; 53C22

\medskip\noindent
{\it Keywords:}
positive definite matrix; Riemannian metric; Fisher-Rao metric; quantum Fisher
information; quantum skew information; symmetric homogeneous mean; logarithmic mean;
geometric mean; geodesic curve; geodesic distance; Fr\'echet derivative; divided
difference
\end{abstract}

\medskip
\section*{Introduction}
The $n \times n$ positive definite matrices with complex entries can be parametrized
by the real and imaginary parts of the entries, and they form an open subset of the space
$\bH_n$ of $n\times n$ Hermitian matrices regarded as the 
Euclidean space $\bbbr^m$, where $m=n^2$. Hence the tangent space of their manifold
$\bP_n$ at any foot point can be identified with $\bH_n$. A Riemannian metric $K_D(H,K)$
is a family of inner products on  $\bH_n$ depending smoothly on the foot point $D$.
If $\phi(x,y)$ is a positive kernel function on $(0,\infty)\times(0,\infty)$ and $D$ has
the spectral decomposition $\sum_{i=1}^k\lambda_iP_i$, then a Riemannian metric can be
defined as
\begin{equation}\label{F-0.1}
K_D^\phi(H,K):=\sum_{i,j=1}^k\phi(\lambda_i,\lambda_j)^{-1}\Tr P_i H P_j K,
\end{equation}
where $\Tr$ is the usual trace on matrices. The goal of the present paper is to study
this kind of Riemannian metrics.

As far as the authors know, the first example of \eqref{F-0.1} is historically the case
$\phi(x,y)=xy$ which was considered by Skovgaard \cite{Sk} as a Fisher-Rao statistical
Riemannian metric on positive definite matrices describing multivariate Gaussian
distributions. Another example is also related to Fisher information. In the quantum
mechanical setting the states correspond to positive semidefinite  matrices of trace 1,
and in \cite{Pe3,PS} the metric \eqref{F-0.1} was justified in the particular case
$\phi(x,y)=yf(x/y)$, where $f:(0,\infty)\to  (0,\infty)$ is an operator monotone function.
More details on these examples are presented in the rest of this section.

The trivial choice $\phi(x,y)\equiv 1$ gives a flat space where the Riemannian metric
is the Hilbert-Schmidt inner product $\<H,K\>_\HS$ on $\bH_n$. The Hilbert-Schmidt inner
product $\<X,Y\>_\HS:=\Tr X^*Y$ and the Hilbert-Schmidt norm $\|X\|_\HS:=(\Tr X^*X)^{1/2}$
are defined on the space $\bM_n$ of all $n\times n$ complex matrices, and the space
$(\bH_n,\<\cdot,\cdot\>_\HS)$ is a real subspace of the Hilbert space
$(\bM_n,\<\cdot,\cdot\>_\HS)$.

The positive definite real matrices might be considered as the variance of multivariate
normal distributions and the information geometry of Gaussians yields a natural Riemannian
metric. The simplest way to construct an information geometry is to start with an
information potential function and to introduce the Riemannian metric by the Hessian of
the potential. We want a geometry on the family of non-degenerate multivariate Gaussian
distributions with zero mean vector. Those distributions are given by a positive definite
real matrix $D$ in the form
$$
p_D(x):=\frac{1}{\sqrt{(2\pi)^{n}\det D}} \exp\biggl(-{\<D^{-1}x,x\>\over2}\biggr),
\qquad x \in \bR^n.
$$
We identify the Gaussian $p_D$ with the matrix $D$, and we can say that the
Riemannian geometry is constructed on the space of positive definite real matrices. There
are many reasons (originated from statistical mechanics, information theory and
mathematical statistics) that the Boltzmann entropy
$$
S(p_D):=\frac{1}{2}\log(\det D)+\mbox{const.}
$$
is a candidate for being an information potential.

The $n\times n$ real symmetric matrices can be identified with the Euclidean space of
dimension $n(n+1)/2$ and the positive definite matrices form an open subset. Therefore the
set of Gaussians has a simple and natural manifold structure. The tangent space at each
foot point is the set of symmetric matrices. The Riemannian metric is defined as the
Hessian
$$
g_D(H,K):=\frac{\partial^2}{\partial s\partial t}
S(p_{D+sH+tK})\Big|_{s=t=0}\,,
$$
where $H$ and $K$ are tangents at $D$. The differentiation easily gives
\begin{equation}\label{F-0.2}
g_D(H,K)=\Tr D^{-1}HD^{-1}K.
\end{equation}
The corresponding information geometry of the Gaussians was discussed in \cite{OSA} in
detail. In the statistical model of multivariate Gaussian distributions, \eqref{F-0.2}
plays the role of the {\it Fisher-Rao metric}. We note here that this geometry has many
symmetries. Each congruence transformation of the matrices becomes a symmetry, namely
\begin{equation}\label{F-0.3}
g_{TDT^t}(THT^t,TKT^t)=g_D(H,K)
\end{equation}
for every real invertible matrix $T$.

Formula \eqref{F-0.2} determines a Riemannian metric on the manifold $\bP_n$ as well and
below we prefer to consider the complex case. Note that if we want to find the geodesic
curve between $A$ and $B$, then it is sufficient to find the geodesic joining $I$ and
$A^{-1/2}BA^{-1/2}$ due to property \eqref{F-0.3}. This is essentially easier since
they commute. In fact, concerning the geodesic curves in the Riemannian manifold
$(\bP_n,g)$, it is known \cite{LL,Mo,BH} that for each $A,B\in\bP_n$ there exists a unique
geodesic shortest curve joining $A,B\in\bP_n$ given by
\begin{equation}\label{F-0.4}
\gamma(t)=A\,\#_t\,B:=A^{1/2}(A^{-1/2}BA^{-1/2})^t A^{1/2},\qquad 0\le t\le 1,
\end{equation}
and the geodesic midpoint $\gamma(1/2)$ is just the {\it geometric mean} (\cite {PW,An})
$$
A\,\#\,B:=A^{1/2}(A^{-1/2}BA^{-1/2})^{1/2}A^{1/2}.
$$
Furthermore, the geodesic distance is
\begin{equation}\label{F-0.5}
\delta(A,B)=\|\log(A^{-1/2}BA^{-1/2})\|_\HS.
\end{equation}
In this way, the information Riemannian geometry is adequate to treat the geometric mean
of positive definite matrices.

For each $A,B\in\bP_n$ the mean $C':=A\,\#\,B$ is the midpoint of the geodesic joining
$A$ and $B$, $A':=B\,\# C\,$ and $B':=C\,\#\,A$ are similar. Since
$\delta(B\,\#\,C,C\,\#\,A)\le{1\over2}\delta(A,B)$ by \cite[Proposition 6]{BH}, the diameter
of the triangle $A'B'C'$ is at most the half of the diameter of $ABC$. This result gives
a geometric proof of the recursive construction
of geometric mean of 3 positive matrices proposed in \cite{ALM}. Note that another
``geometric mean" of $A_1,\dots,A_k\in\bP_n$ was introduced in \cite{Mo,BH} as the unique
minimizer of $A\in\bP_n\mapsto\sum_{j=1}^k\delta^2(A,A_j)$.

We denote by $\cD_n$ the set of all $n\times n$ positive definite matrices of trace 1,
which is a smooth differentiable manifold as a submanifold of $\bP_n$. The tangent space
of the manifold $\cD_n$ at each foot point $D$ is the subspace of $\bH_n$ consisting of
$n\times n$ Hermitian matrices of trace $0$, i.e.,
$T_D\cD_n=\bH_n\ominus\bR I:=\{H\in\bH_n:\Tr H=0\}$.
One can define a Riemannian metric on $\cD_n$ in the form
$$
K_D(H,K)=\<H,\bJ_D^{-1}K\>_\HS,\qquad D\in\cD_n,\ H,K\in\bH_n\ominus\bR I,
$$
where $\bJ_D$ is a positive linear operator on the real Hilbert space
$(\bH_n\ominus\bR I,\<\cdot,\cdot\>_\HS)$. One can extend $\bJ_D$ to a positive symmetric
operator on $\bH_n$ and furthermore to a positive operator on the Hilbert space
$(\bM_n,\<\cdot,\cdot\>_\HS)$ by complexification. So we may assume that a Riemannian
metric $K_D$ is given on $\cD_n$, $n\in\bN$, by
$K_D(X,Y)=\<X,\bJ_D^{-1}Y\>_\HS$ for $X,Y\in\bM_n$.
The metric $K_D$ (more precisely, a sequence of metrics $K_D$ on $\cD_n$, $n\in\bN$) is
{\it monotone} if for any completely positive and trace preserving map (or coarse graining)
$\beta:\bM_n\to\bM_m$ we have
$$
K_{\beta(D)}(\beta(X),\beta(X))\le K_D(X,X),\qquad D\in\cD_n,\ X\in\bN_n.
$$
Recall that $\beta$ is completely positive and
trace preserving if and only if $\beta^*$ is completely positive and unital. It was proved
in Petz \cite{Pe3} that the monotone metrics $K_D$ with normalization
$K_D(I,I)=\Tr(D^{-1})$ correspond one-to-one to the operator monotone functions
$f:(0,\infty)\to(0,\infty)$ with normalization $f(1)=1$ as follows:
\begin{equation}\label{F-0.6}
K_D^f(X,Y):=\<X,(\bJ_D^f)^{-1}Y\>_\HS\quad\mbox{and}\quad
\bJ_D^f:=f(\bL_D\bR_D^{-1})\bR_D.
\end{equation}
Furthermore, $K_D^f$ is symmetric if and only if $f$ is symmetric, i.e., $xf(x^{-1})=f(x)$,
$x>0$. We say that an operator monotone function $f\ge0$ on $(0,\infty)$ is {\it standard}
if $f(1)=1$ and $xf(x^{-1})=f(x)$.

On the other hand, the theory of {\it operator means} due to Kubo and Ando \cite{KA} says
that there is a one-to-one correspondence between the symmetric operator means (or matrix
means) and the standard operator monotone functions $f$ as follows:
$$
\sigma_f(A,B):=A^{1/2}f(A^{-1/2}BA^{-1/2})A^{1/2},\qquad A,B\in\bP_n.
$$
Thus one may write
\begin{equation}\label{F-0.7}
K_D^f(X,Y)=\<X,\sigma_f(\bL_D,\bR_D)^{-1}Y\>_\HS.
\end{equation}
When $D=\diag(\lambda_1,\dots,\lambda_n)$ is diagonal, one can more explicitly write
$$
K_D^f(X,X)=\sum_{i,j=1}^n{1\over\lambda_jf(\lambda_i/\lambda_j)}|X_{ij}|^2,
\qquad X=[X_{ij}]\in\bM_n.
$$
For each standard operator monotone function $f$, the symmetric monotone metric (or the
{\it quantum Fisher information}) $K_D^f$ originally defined on $\cD_n$ by \eqref{F-0.6}
or \eqref{F-0.7} can be automatically extended to $\bP_n$ by the same formula.

It was also observed in Lesniewski and Ruskai \cite{LR} that any of the above metrics $K^f$
can be realized as the Hessian
$$
K_D^f(H,K)=-{\partial^2\over\partial s\partial t}
S_F(D+sH,D+tK)\Big|_{s=t=0},
$$
of a quasi-entropy \cite{Pe1,Pe2} $S_F(D_1,D_2)$ defined by a function $F$ on $(0,\infty)$
with the relation $1/f(x)=(F(x)+xF(x^{-1}))/(x-1)^2$.

The {\it Wigner-Yanase-Dyson skew information} is the quantity
$$
I_D^\mathrm{WYD}(p,K):=-{1\over2}\Tr[D^p,K][D^{1-p},K],
\qquad D\in\cD_n,\ K\in\bH_n,
$$
where $0<p<1$. The case $p=1/2$ is the original Wigner-Yanase skew information. It was
observed in \cite{PH} that the Wigner-Yanase-Dyson skew information $I_D^\mathrm{WYD}(p,K)$
coincides, apart from a constant factor, with a monotone Riemannian metric
$$
K_D^{f_p}(\im[D,K],\im[D,K]),
$$
where $f_p$ is a standard operator monotone function defined by
\begin{equation}\label{F-0.8}
f_p(x):=p(1-p){(x-1)^2\over(x^p-1)(x^{1-p}-1)}.
\end{equation}
The notion of skew information was recently generalized by Hansen \cite{Ha1} as follows:
For each standard operator monotone function $f$ that is regular, i.e.,
$f(0)\ (:=\lim_{x\searrow0}f(x))>0$, the {\it the metric adjusted skew information} (or
the {\it quantum skew information}) corresponding to
$f$ is
\begin{equation}\label{F-0.9}
I_D^f(K):={f(0)\over2}K_D^f(\im[D,K],\im[D,K]),
\qquad D\in\cD_n,\ K\in\bH_n,
\end{equation}
which is explicitly written as
$$
I_D^f(K)={f(0)\over2}\sum_{i,j=1}^n
{(\lambda_i-\lambda_j)^2\over\lambda_jf(\lambda_i/\lambda_j)}|K_{ij}|^2
$$
if $D=\diag(\lambda_1,\dots,\lambda_n)$.

Via the operator $\bJ_D^f$ in \eqref{F-0.6}, each standard operator monotone function $f$
defines a quantity
\begin{equation}\label{F-0.10}
\varphi_D[K,K]:=\<K,\bJ_D^fK\>_\HS,\qquad D\in\cD_n,\ K\in\bH_n,
\end{equation}
which was called {\it generalized variance} in \cite{Pe4}. Any such variance has the
property $\varphi_D[K,K]=\Tr DK^2$ for commuting $D$ and $K$.

In the present paper we study Riemannian geometry on $\bP_n$ with kernel metrics $K^\phi$
in \eqref{F-0.1} when the kernel function $\phi(x,y)$ is in the form $M(x,y)^\theta$, a
degree $\theta\in\bR$ power of a certain mean $M(x,y)$ for two positive numbers (as
prescribed at the beginning of Section 2). The above quantities \eqref{F-0.2},
\eqref{F-0.6} and \eqref{F-0.10} are important special cases where $\theta=2,1$ and $-1$,
respectively. The paper is organized as follows. After describing our setting in Section 1
in more detail, in Section 2 we determine Riemannian metrics in our class which are
written as a pull-back of the Euclidean metric. For such metrics the geodesic curve and
the geodesic distance are explicitly given (Theorem \ref{T-2.1}). Section 3 is concerned
with the (non-)completeness of Riemannian metrics in our class (Theorem \ref{T-3.1}) and
pull-back metrics from the Fisher-Rao metric $g$ (Theorem \ref{T-3.3}). In Section 4 we
discuss comparison properties among our Riemannian metrics. The comparison of geodesic
distances for two metrics is easily described in terms of the corresponding means and the
degrees of power (Theorem \ref{T-4.1}). Finally in Section 5, we treat the generalized
situation (of Finsler metrics rather than Riemannian metrics) where unitarily invariant
norms are applied in place of the Hilbert-Schmidt norm.

For basics on Riemannian geometry the reader may refer to texts \cite{KN, Michor} for
example.

\section{Riemannian metrics induced by kernel functions}
\setcounter{equation}{0}

For each $D\in\bP_n$ the {\it left} and {\it right multiplication} operators $\bL_D$ and
$\bR_D$   are defined as $\bL_DX:=DX$ and $\bR_DX:=XD$ for $X\in\bM_n$. Note that $\bL_D$
and $\bR_D$ are commuting positive operators on the Hilbert space
$(\bM_n,\<\pont,\pont\>_\HS)$, i.e., $\bL_D\bR_D=\bR_D\bL_D$, $\<X,\bL_DX\>_\HS\ge0$ and
$\<X,\bR_DX\>_\HS\ge0$ for all $X\in\bM_n$.  For a kernel function
$\phi:(0,\infty)\times(0,\infty)\to(0,\infty)$, a positive  operator $\phi(\bL_D,\bR_D)$
on $(\bM_n,\<\pont,\pont\>_\HS)$ is defined via  functional calculus, that is, when
$D=\sum_{i=1}^k\lambda_iP_i$ is the spectral decomposition,
$$
\phi(\bL_D,\bR_D)X:=\sum_{i=1}^k\phi(\lambda_i,\lambda_j)P_iXP_j,\qquad X\in\bM_n.
$$
When $\phi(x,y)$ is smooth in $x$ and $y$, one can define a {\it Riemannian metric}
$K^\phi$ on $\bP_n$ by
\begin{equation}\label{F-1.1}
K_D^\phi(H,K):=\<H,\phi(\bL_D,\bR_D)^{-1}K\>_\HS
=\sum_{i,j=1}^k\phi(\lambda_i,\lambda_j)^{-1}\Tr P_i H P_j K
\end{equation}
when $H,K\in\bH_n$. 

By taking the diagonalization
$D=U\diag(\lambda_1,\dots,\lambda_n)U^*$ with a unitary $U$, one can also write
\begin{equation}\label{F-1.2}
\phi(\bL_D,\bR_D)^{-1/2}H
=U\Biggl(\Biggl[{1\over\sqrt{\phi(\lambda_i,\lambda_j)}}\Biggr]_{ij}
\circ(U^*HU)\Biggr)U^*,
\end{equation}
where $\circ$ denotes the {\it Schur} (or {\it Hadamard}\,) {\it product}\,.

\begin{lemma}\label{L-1.1}
For each $D\in\bP_n$ let
$$
T_D^c:=\{H\in\bH_n:HD=DH\}\quad \mbox{and}\quad
T_D^q:=\{\im[D,K]:K\in\bH_n\}.
$$
Then
\begin{itemize}
\item[\rm(1)] $K_D^\phi(H,K)=\Tr\hat\phi(D)HK$ if $H \in T_D^c$ and $K\in\bH_n$, where
$\hat\phi(x):=1/\phi(x,x)$, $x>0$.
\item[\rm(2)] $K_D^\phi(H,\im[D,K])=0$ if $H \in T_D^c$ and $K\in\bH_n$.
\item[\rm(3)] $K_D^\phi(\im[D,K],\im[D,K])=\<K,\tilde\phi(\bL_D,\bR_D)K\>_\HS$ for all
$K\in\bH_n$, where 
$$
\tilde\phi(x,y):=\frac{(x-y)^2}{\phi(x,y)},\qquad x,y>0.
$$
\end{itemize}
In particular, the tangent space $T_D=\bH_n$ has an orthogonal decomposition 
$T_D=T_D^c\oplus T_D^q$ with respect to $K_D^\phi$.
\end{lemma}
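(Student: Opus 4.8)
The plan is to work in the spectral basis of $D$ and to exploit one structural observation: writing $H\in\bH_n$ in blocks $H=\sum_{i,j}P_iHP_j$, the condition $H\in T_D^c$ says exactly that the off-diagonal blocks $P_iHP_j$ ($i\ne j$) vanish, whereas the elementary identity $P_i(\im[D,K])P_j=\im(\lambda_i-\lambda_j)P_iKP_j$ shows that $\im[D,K]$ has all of its diagonal blocks equal to zero. These two complementary vanishing patterns are what drive parts (1)--(3) and the final decomposition.

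For (1): if $H\in T_D^c$ then $P_iHP_j=\delta_{ij}P_iH$, so only the diagonal terms of \eqref{F-1.1} survive and $K_D^\phi(H,K)=\sum_i\phi(\lambda_i,\lambda_i)^{-1}\Tr P_iHK=\sum_i\hat\phi(\lambda_i)\Tr P_iHK=\Tr\hat\phi(D)HK$, the last equality using $\hat\phi(D)=\sum_i\hat\phi(\lambda_i)P_i$. For (2): apply (1) with second argument $\im[D,K]$; the matrix $A:=\hat\phi(D)H$ commutes with $D$, so cyclicity of the trace gives $\Tr A\,\im[D,K]=\im\,\Tr(AD-DA)K=0$. (Equivalently, in the sum obtained in (1) only the blocks $P_iHP_i$ meet $\im[D,K]$, and those blocks of $\im[D,K]$ vanish.)

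For (3) I would pass to the operator form $K_D^\phi(X,Y)=\<X,\phi(\bL_D,\bR_D)^{-1}Y\>_\HS$ and write $\im[D,K]=\im(\bL_D-\bR_D)K=g(\bL_D,\bR_D)K$ with $g(x,y):=\im(x-y)$. Since the adjoint of a function of the commuting self-adjoint pair $\bL_D,\bR_D$ is the corresponding conjugate function, $g(\bL_D,\bR_D)^*=\overline{g}(\bL_D,\bR_D)=-g(\bL_D,\bR_D)$, and therefore
\begin{align*}
K_D^\phi(\im[D,K],\im[D,K])
&=\<K,\,\overline{g}(\bL_D,\bR_D)\,\phi(\bL_D,\bR_D)^{-1}g(\bL_D,\bR_D)\,K\>_\HS \\
&=\<K,\,(|g|^2/\phi)(\bL_D,\bR_D)\,K\>_\HS,
\end{align*}
and $|g(x,y)|^2=(x-y)^2$ identifies $|g|^2/\phi$ with $\tilde\phi$. (Alternatively, substitute $P_i(\im[D,K])P_j=\im(\lambda_i-\lambda_j)P_iKP_j$ directly into \eqref{F-1.1} and use $\Tr(P_iXP_j)Y=\Tr(P_iXP_j)(P_jYP_i)$ to collapse the double sum.)

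For the orthogonal decomposition: the map $H\mapsto\im[D,H]$ is a skew-adjoint operator on the real Hilbert space $(\bH_n,\<\pont,\pont\>_\HS)$ with kernel $T_D^c$ (as $\im[D,H]=0\iff DH=HD$) and range $T_D^q$, so it yields the $\<\pont,\pont\>_\HS$-orthogonal splitting $\bH_n=T_D^c\oplus T_D^q$; in particular $T_D^c\cap T_D^q=\{0\}$ and $T_D^c+T_D^q=\bH_n$ (concretely, $H=\sum_iP_iHP_i+\sum_{i\ne j}P_iHP_j$, where the first summand lies in $T_D^c$ and the second equals $\im[D,K]$ for the Hermitian matrix $K:=\sum_{i\ne j}\bigl(\im(\lambda_j-\lambda_i)\bigr)^{-1}P_iHP_j$). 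Part (2) then upgrades this to a $K_D^\phi$-orthogonal direct sum, which is the assertion. I do not expect a genuine obstacle here; the only points that call for a bit of care are the bookkeeping of the factors $\im$ and the sign in (3), and verifying that the $K$ exhibited above is actually Hermitian — a short conjugate-and-relabel check.
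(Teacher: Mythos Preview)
Your proof is correct and is precisely the direct verification via \eqref{F-1.1} that the paper has in mind: the paper itself gives no proof of this lemma, stating only that ``the proof of the lemma is left to the reader, which is easy by using \eqref{F-1.1}.'' One small slip: in your explicit splitting, the off-diagonal part $\sum_{i\ne j}P_iHP_j$ equals $\im[D,K]$ for $K=\sum_{i\ne j}\bigl(\im(\lambda_i-\lambda_j)\bigr)^{-1}P_iHP_j$, not with $\lambda_j-\lambda_i$ in the denominator (your own caveat about ``bookkeeping of the factors $\im$'' applies here), but this does not affect the argument.
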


The proof of the lemma is left to the reader, which is easy by using \eqref{F-1.1}.

When $\gamma:[0,1]\to\bP_n$ is a $C^1$ curve (or more generally, a continuous and
piecewise $C^1$ curve), the {\it length} of $\gamma$ with respect to the metric $K^\phi$
is given by
\begin{equation}\label{F-1.3}
L_\phi(\gamma):=\int_0^1\sqrt{K_{\gamma(t)}^\phi(\gamma'(t),\gamma'(t))}\,dt
=\int_0^1\|\phi(\bL_{\gamma(t)},\bR_{\gamma(t)})^{-1/2}\gamma'(t)\|_\HS\,dt.
\end{equation}
Note that the length $L_\phi(\gamma)$ is independent of the choice of the parametrization
of $\gamma$. The {\it geodesic distance} $\delta_\phi(A,B)$ between $A,B\in\bP_n$ is the
infimum of $L_\phi(\gamma)$ over all $C^1$ curves (or equivalently, over all smooth curves)
$\gamma$ from $A$ to $B$. A {\it geodesic shortest curve} is a curve from $A$ to $B$ such
that $L_\phi(\gamma)=\delta_\phi(A,B)$.

Now let $G$ be a smooth function from an open interval $(a,b)$ into $(0,\infty)$. Assume
that $G'(x)\ne0$ for all $x\in(a,b)$ so that $G$ is a diffeomorphism from $(a,b)$ onto a
subinterval of $(0,\infty)$. Let $\bH_n(a,b)$ denote the submanifold $\{A\in\bH_n:a<A<b\}$
of $\bH_n$, where $a<A<b$ means that all the eigenvalues of $A$ are in $(a,b)$. Then the
map $A\mapsto G(A)$ defined via functional calculus is a smooth diffeomorphism from
$\bH_n(a,b)$ into $\bP_n$. Our next aim is to determine a Riemannian metric on $\bH_n(a,b)$
such that $A\mapsto G(A)$ is an isometry into the Riemannian space $(\bP_n,K^\phi)$. This
Riemannian metric on $\bH_n(a,b)$ is called the {\it pull-back} of $K^\phi$ under the
transformation $A\mapsto G(A)$.

\begin{lemma}\label{L-1.2}
Let $K_A$, $A\in\bH_n(a,b)$, be the pull-back of the Riemannian metric $K^\phi$ on $\bP_n$
under $A\mapsto G(A)$ as mentioned above. Let $A\in\bH_n(a,b)$ and
$A=\sum_{i=1}^k\lambda_iP_i$ be the spectral decomposition. Furthermore, let
$T_A^c:=\{H\in\bH_n:HA=AH\}$ as in Lemma \ref{L-1.1}. Then
\begin{itemize}
\item[\rm(1)] ${d\over dt}G(A+tH)\big|_{t=0}=G'(A)H$ if $H\in T_A^c$.
\item[\rm(2)] ${d\over dt}G(A+t\im[A,K])\big|_{t=0}=\im[G(A),K]$ for all $K\in\bH_n$. 
\item[\rm(3)] For every $H\in T_A^c$,
$$
K_A(H,H)=\sum_{i=1}^k\frac{G'(\lambda_i)^2}
{\phi(G(\lambda_i),G(\lambda_i))}\Tr P_iH^2.
$$
\item[\rm(4)] For every $H\in T_A^c$ and $K\in\bH_n$, $K_A(H,\im[D,K])=0$.
\item[\rm(5)] For every $K\in\bH_n$,
$$
K_A(\im[A,K],\im[A,K])=\sum_{i=1}^k\frac{(G(\lambda_i)-G(\lambda_j))^2}
{\phi(G(\lambda_i),G(\lambda_j))}\Tr P_i K P_jK.
$$
\end{itemize}
\end{lemma}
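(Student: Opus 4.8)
The plan is to unwind the definition of the pull-back: for $H,K\in\bH_n$ one has $K_A(H,K)=K^\phi_{G(A)}(\Phi(H),\Phi(K))$, where $\Phi:=DG(A)$ is the Fr\'echet derivative at $A$ of the functional-calculus map $X\mapsto G(X)$ (a diffeomorphism, so $\Phi$ is a linear isomorphism of $\bH_n$). Everything reduces to computing $\Phi$ on the two pieces $T_A^c$ and $T_A^q=\{\im[A,K]:K\in\bH_n\}$ and then applying Lemma \ref{L-1.1} at the foot point $G(A)$; note that $G(A)=\sum_iG(\lambda_i)P_i$ is the spectral decomposition of $G(A)$, so $A$ and $G(A)$ share the eigenprojections $P_i$.

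For (1): since $H$ commutes with $A$, diagonalize both simultaneously by a unitary $V$, so that $G(A+tH)=V\,G(V^*AV+tV^*HV)\,V^*$ with $V^*AV$ and $V^*HV$ diagonal. The map is then entrywise $t\mapsto G(a_j+th_j)$ with derivative $G'(a_j)h_j$ at $t=0$; transforming back gives $\tfrac{d}{dt}G(A+tH)\big|_{t=0}=G'(A)H=\sum_iG'(\lambda_i)P_iHP_i$. In particular $\Phi(H)=G'(A)H$ commutes with $G(A)$, so $\Phi$ carries $T_A^c$ into $T_{G(A)}^c$.

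For (2): the key device is the unitarily conjugated curve $A(t):=e^{-\im tK}Ae^{\im tK}$, which satisfies $A(0)=A$, $A'(0)=\im[A,K]$, and $G(A(t))=e^{-\im tK}G(A)e^{\im tK}$ because functional calculus commutes with unitary conjugation; differentiating at $t=0$ gives $\tfrac{d}{dt}G(A(t))\big|_{t=0}=\im[G(A),K]$. Since $G$ is Fr\'echet differentiable at $A$, the value $\Phi(\im[A,K])$ depends only on the first-order data of a curve through $A$ with velocity $\im[A,K]$, and $t\mapsto A+t\,\im[A,K]$ and $t\mapsto A(t)$ share that data; hence $\Phi(\im[A,K])=\im[G(A),K]$, so $\Phi$ carries $T_A^q$ into $T_{G(A)}^q$. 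This is the only nontrivial step — obtaining a clean closed form for the derivative of $G$ along a commutator direction — and the conjugation trick is precisely what makes it work.

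With (1) and (2) in hand, (3)--(5) are bookkeeping via Lemma \ref{L-1.1} at $G(A)$. For (3): $\Phi(H)=\sum_iG'(\lambda_i)P_iHP_i\in T_{G(A)}^c$, so Lemma \ref{L-1.1}(1) gives $K_A(H,H)=\Tr\hat\phi(G(A))(G'(A)H)^2$; expanding $\hat\phi(G(A))=\sum_i\phi(G(\lambda_i),G(\lambda_i))^{-1}P_i$ and using $P_iHP_iHP_i=P_iH^2$ (valid since $H$ commutes with $P_i$) yields the stated formula. For (4): $\Phi(H)\in T_{G(A)}^c$ and $\Phi(\im[A,K])=\im[G(A),K]$, so Lemma \ref{L-1.1}(2) gives $K_A(H,\im[A,K])=0$ (the ``$\im[D,K]$'' in the statement being a misprint for $\im[A,K]$). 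For (5): Lemma \ref{L-1.1}(3), with the decomposition $G(A)=\sum_iG(\lambda_i)P_i$, gives $K_A(\im[A,K],\im[A,K])=\<K,\tilde\phi(\bL_{G(A)},\bR_{G(A)})K\>_\HS=\sum_{i,j}(G(\lambda_i)-G(\lambda_j))^2\phi(G(\lambda_i),G(\lambda_j))^{-1}\Tr P_iKP_jK$.
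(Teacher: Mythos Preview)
Your proof is correct. Parts (1), (3), (4), (5) match the paper's argument essentially line for line: compute the Fr\'echet derivative on the commuting and commutator pieces, then feed those images into the three parts of Lemma~\ref{L-1.1} at the foot point $G(A)$.

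The genuine difference is in (2). The paper's proof assumes $A=\diag(\alpha_1,\dots,\alpha_n)$ and invokes the Daleckii--Krein differential formula
\[
DG(A)(\im[A,K])=\Bigl[G^{[1]}(\alpha_i,\alpha_j)\Bigr]_{ij}\circ\bigl[\im(\alpha_i-\alpha_j)K_{ij}\bigr]_{ij}
=\im\bigl[(G(\alpha_i)-G(\alpha_j))K_{ij}\bigr]_{ij}=\im[G(A),K],
\]
i.e.\ a direct Schur-product computation with divided differences. Your route via the curve $A(t)=e^{-\im tK}Ae^{\im tK}$ and the identity $G(e^{-\im tK}Ae^{\im tK})=e^{-\im tK}G(A)e^{\im tK}$ is more conceptual and avoids any coordinate choice or appeal to the divided-difference formula; it works for any functional calculus that respects unitary conjugation. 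The paper's approach, on the other hand, is the one that naturally foreshadows the later computations (e.g.\ \eqref{F-2.6}--\eqref{F-2.7}) where $G^{[1]}$ appears explicitly. Either argument is complete; yours is the cleaner standalone proof of (2), while the paper's ties into the machinery used downstream.
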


\begin{proof}
(1) is obvious.

(2)\enspace
This is found in \cite{Pe-book} but a short proof using the differential formula (see
\cite{Bh}) is given here. We may assume without loss of generality that $A$ is diagonal as
$A=\diag(\alpha_1,\dots,\alpha_n)$. With the Fr\'echet derivative $DG(A):\bH_n\to\bH_n$ of
$G$ at $A$, for $K=[K_{ij}]$ we have
\begin{align*}
{d\over dt}G(A+t\im[A,K])\Big|_{t=0}
&=DG(A)(\im[A,K])=\biggl[{G(\alpha_i)-G(\alpha_j)\over\alpha_i-\alpha_j}\biggr]_{ij}
\circ[\im(\alpha_i-\alpha_j)K_{ij}]_{ij} \\
&=\im[(G(\alpha_i)-G(\alpha_j))K_{ij}]=\im[G(A),K].
\end{align*}

(3)\enspace
By the isometry property together with the above (1) and Lemma \ref{L-1.1}\,(1) we get
\begin{align*}
K_A(H,H)&=K_{G(A)}(G'(A)H,G'(A)H)
=\Tr\hat\phi(G(A))G'(A)^2H^2 \\
&=\sum_{i=1}^k{G'(\lambda_i)^2\over\phi(G(\lambda_i),G(\lambda_i))}\Tr P_iH^2.
\end{align*}

(4)\enspace
By the isometry property together with the above (1), (2) and Lemma \ref{L-1.1}\,(2) we get
$$
K_A(H,\im[A,K])=K_{G(A)}(G'(A)H,\im[G(A),K])=0.
$$

(5)\enspace
Similarly, by Lemma \ref{L-1.1}\,(3),
\begin{align*}
K_A(\im[A,K],\im[A,K])
&=K_{G(A)}(\im[G(A),K],\im[G(A),K]) \\
&=\sum_{i,j=1}^k{G(\lambda_i)-G(\lambda_j))^2\over\phi(G(\lambda_i),G(\lambda_j))}
\Tr P_iKP_jK.
\end{align*}
\end{proof}

In particular, let $G$ be a smooth function from $(0,\infty)$ into $(0,\infty)$ such that
$G'(x)\ne0$ for all $x>0$. Let $G^{[1]}(x,y)$ be the {\it divided difference} of $G$, i.e.,
$$
G^{[1]}(x,y):=\begin{cases}
{G(x)-G(y)\over x-y} & \text{if $x\ne y$}, \\
G'(x) & \text{if $x=y$}.
\end{cases}
$$

Then, from Lemmas \ref{L-1.1} and \ref{L-1.2} we arrive at the following result.

\begin{thm}
The pull-back of the kernel metric $K^\phi$ under the mapping $D\in\bP_n\mapsto G(D)\in\bP_n$ is a kernel metric $K^\psi$ corresponding to the function
$$
\psi(x,y):={\phi(x,y)\over G^{[1]}(x,y)^2},\qquad x,y>0.
$$
\end{thm}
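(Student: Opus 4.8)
The plan is to reduce the computation to the two orthogonal pieces $T_D^c$ and $T_D^q$ supplied by Lemmas \ref{L-1.1} and \ref{L-1.2}, and to check that on each piece the pull-back metric $K_A$ agrees with the kernel metric $K^\psi$ for the claimed $\psi$. First I would fix $A\in\bP_n$ with spectral decomposition $A=\sum_{i=1}^k\lambda_iP_i$; here $G$ is defined on all of $(0,\infty)$ with $G'\ne0$, so $\bH_n(a,b)=\bP_n$ and $T_A^c,T_A^q$ coincide with the $T_D^c,T_D^q$ of Lemma \ref{L-1.1} for $D=A$. Since $T_A=T_A^c\oplus T_A^q$ is an orthogonal decomposition with respect to $K_A$ (Lemma \ref{L-1.2}(4)) and, I should note, also with respect to $K^\psi$ (Lemma \ref{L-1.1}, applied with $\phi$ replaced by $\psi$), it suffices to verify $K_A(H,H)=K^\psi_A(H,H)$ separately for $H\in T_A^c$ and for $H\in T_A^q$, and then invoke polarization.

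Next I would handle the commuting part. For $H\in T_A^c$, Lemma \ref{L-1.2}(3) gives
$$
K_A(H,H)=\sum_{i=1}^k\frac{G'(\lambda_i)^2}{\phi(G(\lambda_i),G(\lambda_i))}\Tr P_iH^2,
$$
while Lemma \ref{L-1.1}(1) applied to the kernel $\psi$ gives $K^\psi_A(H,H)=\sum_i\psi(\lambda_i,\lambda_i)^{-1}\Tr P_iH^2$. Since $G^{[1]}(x,x)=G'(x)$, the proposed $\psi$ satisfies $\psi(\lambda_i,\lambda_i)=\phi(G(\lambda_i),G(\lambda_i))/G'(\lambda_i)^2$, so the two expressions match term by term. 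For the off-diagonal part, take $H=\im[A,K]$ with $K\in\bH_n$; Lemma \ref{L-1.2}(5) yields
$$
K_A(\im[A,K],\im[A,K])=\sum_{i,j=1}^k\frac{(G(\lambda_i)-G(\lambda_j))^2}{\phi(G(\lambda_i),G(\lambda_j))}\Tr P_iKP_jK,
$$
whereas Lemma \ref{L-1.1}(3) for the kernel $\psi$ gives $K^\psi_A(\im[A,K],\im[A,K])=\sum_{i,j}\frac{(\lambda_i-\lambda_j)^2}{\psi(\lambda_i,\lambda_j)}\Tr P_iKP_jK$. Now $\frac{(\lambda_i-\lambda_j)^2}{\psi(\lambda_i,\lambda_j)}=\frac{(\lambda_i-\lambda_j)^2 G^{[1]}(\lambda_i,\lambda_j)^2}{\phi(G(\lambda_i),G(\lambda_j))}=\frac{(G(\lambda_i)-G(\lambda_j))^2}{\phi(G(\lambda_i),G(\lambda_j))}$ for $i\ne j$, using the definition of the divided difference, so again the two sums agree term by term (the $i=j$ terms vanish on both sides). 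Combining the two parts and polarizing gives $K_A=K^\psi_A$ on all of $\bH_n$.

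The only genuine point requiring care — what I would flag as the ``main obstacle'', though it is mild — is the passage from the two special families $T_A^c$ and $T_A^q$ to all of $\bH_n$: this relies on the fact that these subspaces span $\bH_n$ and are $K_A$-orthogonal, which is exactly the content of the orthogonal decomposition clause in Lemmas \ref{L-1.1} and \ref{L-1.2}, together with the observation that every $H\in T_A^q$ is literally of the form $\im[A,K]$ (so Lemma \ref{L-1.2}(5) applies to each such $H$), and that a symmetric bilinear form is determined by its quadratic form via polarization. I would also remark that smoothness of $\psi$ follows from smoothness of $\phi$ and $G$ together with $G'\ne0$, so that $K^\psi$ is indeed a bona fide kernel metric in the sense of \eqref{F-1.1}. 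With these remarks in place the proof is complete.
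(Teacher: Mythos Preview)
Your approach is correct and is exactly what the paper intends: the theorem is stated without proof, preceded only by the sentence ``from Lemmas \ref{L-1.1} and \ref{L-1.2} we arrive at the following result,'' and your argument fleshes out precisely that derivation via the orthogonal decomposition $T_A=T_A^c\oplus T_A^q$ together with polarization. One small point worth flagging: the kernel you actually compute (and which Lemma \ref{L-1.2} forces) is $\psi(x,y)=\phi(G(x),G(y))/G^{[1]}(x,y)^2$, whereas the displayed formula in the statement reads $\phi(x,y)/G^{[1]}(x,y)^2$; this is evidently a typographical slip in the paper, and your version is the correct one.
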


\section{Pull-back metrics from the Euclidean metric}
\setcounter{equation}{0}

We are concerned with the Riemannian metric $K^\phi$ related to a kernel function $\phi$
which is a power of a certain mean for two positive numbers. As in \cite{HK} a
{\it symmetric homogeneous mean} is a function $M:(0,\infty)\times(0,\infty)\to(0,\infty)$
such that for every $x,y>0$,
\begin{itemize}
\item[(1)] $M(x,y)=M(y,x)$,
\item[(2)] $M(\alpha x,\alpha y)=\alpha M(x,y)$ for all $\alpha>0$,
\item[(3)] $M(x,y)$ is non-decreasing in $x,y$,
\item[(4)] $\min\{x,y\}\le M(x,y)\le\max\{x,y\}$.
\end{itemize}
The above mean $M$ is determined by a single variable function $M(x,1)$ since
$M(x,y)=yM(x/y,1)$. The set of all symmetric homogeneous means was denoted by $\MM$ in
\cite{HK}, so in this paper we denote by $\MM_0$ the set of all smooth symmetric
homogeneous means. Here a symmetric homogeneous mean $M(x,y)$ is smooth if so is $M(x,1)$.
This means that $M(x,y)$ is smooth in $x,y>0$.

In the rest of the paper we always assume $n\ge2$ since the situation is trivial when
$n=1$. We assume that $\phi$ is a power of an $M\in\MM_0$ with degree $\theta\in\bR$,
i.e., $\phi(x,y):=M(x,y)^\theta$. The aim of this section is to determine when the
Riemannian metric $K^\phi$ derived from $M$ and $\theta$ is a pull-back of the Euclidean
metric. We are interested in this problem because the geodesic shortest path in that case
is explicitly written as the pull-back of a segment in the Euclidean space.

\begin{thm}\label{T-2.1}
Let $M\in\MM_0$, $\theta\in\bR$ with $\theta\ne0$ and $\phi(x,y):=M(x,y)^\theta$. Assume
that $F$ is a smooth function from $(0,\infty)$ into $\bR$ such that $F'(x)\ne0$ for all
$x>0$. Then the transformation $D\in\bP_n\mapsto F(D)\in\bH_n$ is isometric from
$(\bP_n,K^\phi)$ into the Euclidean manifold $(\bH_n,\|\cdot\|_\HS)$ if and only if
\begin{equation}\label{F-2.1}
F(x)=\begin{cases}
\pm{2\over2-\theta}x^{2-\theta\over2}+c & \text{if $\theta\ne0,2$}, \\
\pm\log x+c & \text{if $\theta=2$},
\end{cases}
\end{equation}
(up to a constant $c$) and
\begin{equation}\label{F-2.2}
M(x,y)=\begin{cases}
\displaystyle
\Biggl({2-\theta\over2}\cdot{x-y\over x^{2-\theta\over2}-y^{2-\theta\over2}}
\Biggr)^{2/\theta} & \text{if $\theta\ne0,2$}, \\
\displaystyle{x-y\over\log x-\log y} & \text{if $\theta=2$}.
\end{cases}
\end{equation}

Moreover, in this case, for every $A,B\in\bP_n$ a unique geodesic shortest curve from $A$
to $B$ is given by
\begin{equation}\label{F-2.3}
\gamma(t)=\begin{cases}
\Bigl((1-t)A^{2-\theta\over2}+tB^{2-\theta\over2}\Bigr)^{2\over2-\theta},
& \text{$0\le t\le 1$\qquad if $\theta\ne0,2$}, \\
\exp((1-t)\log A+t\log B), & \text{$0\le t\le1$\qquad if $\theta=2$},
\end{cases}
\end{equation}
and the geodesic distance between $A$ and $B$ is
$$
\delta_\phi(A,B)=\begin{cases}
{2\over|2-\theta|}\|A^{2-\theta\over2}-B^{2-\theta\over2}\|_\HS
& \text{if $\theta\ne0,2$}, \\
\|\log A-\log B\|_\HS & \text{if $\theta=2$}.
\end{cases}
$$
\end{thm}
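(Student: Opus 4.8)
The plan is to exploit Theorem~1.1 (the pull-back composition rule) together with the structure result of Lemma~\ref{L-1.2}. First I would observe that saying $D\mapsto F(D)$ is isometric from $(\bP_n,K^\phi)$ into the flat Euclidean space $(\bH_n,\|\cdot\|_\HS)$ is exactly saying that the pull-back of the trivial kernel metric $\psi\equiv 1$ under $D\mapsto G(D)$, where $G:=F^{-1}$ is the inverse diffeomorphism, equals $K^\phi$. By Theorem~1.1 the pull-back of $\psi\equiv1$ under $G$ is the kernel metric $K^{\widetilde\psi}$ with $\widetilde\psi(x,y)=1/G^{[1]}(x,y)^2$. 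So the isometry condition is precisely the functional equation
\[
M(x,y)^\theta=\frac{1}{G^{[1]}(x,y)^2},\qquad x,y>0,
\]
i.e. $G^{[1]}(x,y)^2=M(x,y)^{-\theta}$. (Here I am using that a kernel metric determines, and is determined by, its kernel function, which follows from \eqref{F-1.1} applied at diagonal foot points.)

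Next I would solve this functional equation. Setting $x=y$ forces $G'(x)^2=M(x,x)^{-\theta}=x^{-\theta}$ (using homogeneity and $M(x,x)=x$), so $G'(x)=\pm x^{-\theta/2}$, which upon integration gives $G(x)=\pm\frac{2}{2-\theta}x^{(2-\theta)/2}+\mathrm{const}$ when $\theta\ne2$ and $G(x)=\pm\log x+\mathrm{const}$ when $\theta=2$; inverting $G$ yields \eqref{F-2.1} for $F$. Conversely, with $G$ of this form the off-diagonal divided difference is computed directly: for $\theta\ne0,2$,
\[
G^{[1]}(x,y)=\pm\frac{2}{2-\theta}\cdot\frac{x^{(2-\theta)/2}-y^{(2-\theta)/2}}{x-y},
\]
so $G^{[1]}(x,y)^{-2}=\bigl(\frac{2-\theta}{2}\cdot\frac{x-y}{x^{(2-\theta)/2}-y^{(2-\theta)/2}}\bigr)^2$, and setting this equal to $M(x,y)^\theta$ gives exactly \eqref{F-2.2}; the $\theta=2$ case (logarithmic mean) is the analogous computation with the logarithm. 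One must also verify that the $M$ so produced genuinely lies in $\MM_0$, i.e. that the formulas \eqref{F-2.2} do define smooth symmetric homogeneous means satisfying (1)--(4); symmetry and homogeneity are manifest, smoothness follows since the divided difference of a smooth function is smooth, and the mean-value bounds $\min\le M\le\max$ follow from the mean value theorem applied to $x\mapsto x^{(2-\theta)/2}$ (resp. $\log x$).

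For the geodesic statement, once $F$ is an isometry onto its image, shortest curves in $(\bP_n,K^\phi)$ are pull-backs of straight segments in $\bH_n$: the segment from $F(A)$ to $F(B)$ is $t\mapsto (1-t)F(A)+tF(B)$, and $\gamma(t):=G\bigl((1-t)F(A)+tF(B)\bigr)=F^{-1}\bigl((1-t)F(A)+tF(B)\bigr)$ gives \eqref{F-2.3} after substituting $F(x)=\pm\frac{2}{2-\theta}x^{(2-\theta)/2}$ (the constant $c$ and the sign cancel in the affine combination), and likewise $F(x)=\pm\log x$ in the $\theta=2$ case. The geodesic distance is then just the Euclidean distance $\|F(A)-F(B)\|_\HS$, which produces the stated formulas after the same substitution. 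The one genuine subtlety — and the step I expect to be the main obstacle — is arguing that $F$ maps \emph{onto} a convex (indeed, here all of $\bH_n(a,b)$ for the relevant interval) subset so that the straight segment between $F(A)$ and $F(B)$ actually lies in the image of $F$, and hence the pulled-back curve stays in $\bP_n$; for $\theta<2$ the image of $x\mapsto x^{(2-\theta)/2}$ is $(0,\infty)$ and one needs positivity of the convex combination of positive matrices, while for $\theta>2$ the image is again $(0,\infty)$ with orientation reversed, and for $\theta=2$ the image is all of $\bR$ so convexity is automatic — in every case the segment stays in the image, but this should be stated carefully. The uniqueness of the shortest curve likewise transfers from the (obvious) uniqueness of the Euclidean segment, using that $F$ is a global diffeomorphism onto a convex image.
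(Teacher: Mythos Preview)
Your approach is the same as the paper's (particularly its second proof and the concluding remark there, which reduces everything to the divided-difference identity $F^{[1]}(x,y)=\pm\,\phi(x,y)^{-1/2}$), but you have the direction of the pull-back reversed. Saying that $F$ is isometric from $(\bP_n,K^\phi)$ into $(\bH_n,\|\cdot\|_\HS)$ means that the pull-back of the Euclidean metric \emph{under $F$} (not under $G=F^{-1}$) equals $K^\phi$; applying the pull-back theorem with $F$ in the role of the map gives the kernel identity $\phi(x,y)=1/F^{[1]}(x,y)^2$ on $(0,\infty)^2$. Equivalently --- and this is what the paper does in its first proof --- one may pull back $K^\phi$ under $G$ and set it equal to the Euclidean metric, obtaining $G^{[1]}(s,t)^2=\phi(G(s),G(t))$ for $s,t$ in the range $(a,b)$ of $F$. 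Your equation $G^{[1]}(x,y)^2=M(x,y)^{-\theta}$ conflates these two: it uses the map $G$ but with variables $x,y$ ranging over $(0,\infty)$, the domain of $F$.

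The error is not merely notational. Your diagonal ODE $G'(x)^2=x^{-\theta}$ integrates to \eqref{F-2.1} for $G$, and then the step ``inverting $G$ yields \eqref{F-2.1} for $F$'' is false: the inverse of $x\mapsto\tfrac{2}{2-\theta}\,x^{(2-\theta)/2}$ is $y\mapsto\bigl(\tfrac{2-\theta}{2}\,y\bigr)^{2/(2-\theta)}$, which is not of the form \eqref{F-2.1}. Once $G^{[1]}$ is replaced by $F^{[1]}$ throughout, your argument is correct and coincides with the paper's; the inversion step then simply disappears. Your remarks on checking that $M_\theta\in\MM_0$ and that the Euclidean segment from $F(A)$ to $F(B)$ stays in the image of $F$ are valid points that the paper leaves implicit.
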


\begin{proof}
Let $(a,b)$ be the range of $F$ (which must be an open interval by assumption) and
$G:=F^{-1}:(a,b)\to(0,\infty)$ be the inverse of $F$. The stated property of isometric
transformation means that the pull-back of $K^\phi$ via $G$ is the Euclidean metric on
the submanifold $\bH_n(a,b)$ of $\bH_n$. From (3)--(5) of Lemma \ref{L-1.2} one can easily
see that this property is equivalent to that the following two conditions hold:
$$
{G'(t)^2\over G(t)^\theta}=1,\qquad t\in(a,b),
$$
$$
{(G(s)-G(t))^2\over\phi(G(s),G(t))}=(s-t)^2,\qquad s,t\in(a,b).
$$
It is obvious that the above two are respectively equivalent to the following:
\begin{equation}\label{F-2.4}
F'(x)^2=x^{-\theta},\qquad x>0,
\end{equation}
\begin{equation}\label{F-2.5}
{(x-y)^2\over\phi(x,y)}=(F(x)-F(y))^2,\qquad x,y>0.
\end{equation}
The differential equation \eqref{F-2.4} determines $F$ as \eqref{F-2.1}, and this together
with \eqref{F-2.5} determines $M$ as \eqref{F-2.2}.

The rest of the theorem immediately follows from the isometric transformation via $F$
in \eqref{F-2.1}. One may just note that the segment joining $H,K\in\bH_n$ is a unique
shortest path between $H$ and $K$ in the Euclidean manifold $(\bH_n,\|\cdot\|_\HS)$.
\end{proof}

In the following we present a bit more direct proof of Theorem \ref{T-2.1}. Formula
\eqref{F-2.7} below will be also useful in our discussions in the rest of the paper. Let
$F$ and $G:=F^{-1}$ be as above. For each $C^1$ curve $\gamma:[0,1]\to\bP_n$ we make a
change of variable $\xi(t):=F(\gamma(t))$, hence $\gamma(t)=G(\xi(t))$. We then have
$$
K_{\gamma(t)}^\phi(\gamma'(t),\gamma'(t))
=\|\phi(\bL_{\gamma(t)},\bR_{\gamma(t)})^{-1/2}\gamma'(t)\|_\HS^2
$$
and
$$
\gamma'(t)=DG(\xi(t))(\xi'(t)),
$$
where $DG(\xi(t)):\bH_n\to\bH_n$ is the Fr\'echet derivative of $G$ at $\xi(t)$. Under the
diagonalization $\xi(t)=U\diag(\lambda_1,\dots,\lambda_n)U^*$ for each fixed $t\in[0,1]$,
thanks to the differential formula (see \cite{Bh})
\begin{equation}\label{F-2.6}
DG(\xi(t))(\xi'(t))
=U\bigl(\bigl[G^{[1]}(\lambda_i,\lambda_j)\bigr]_{ij}\circ(U^*\xi'(t)U)\bigr)U^*
=G^{[1]}(\bL_{\xi(t)},\bR_{\xi(t)})\xi'(t)
\end{equation}
as well as \eqref{F-1.2}, we obtain
\begin{align}\label{F-2.7}
\phi(\bL_{\gamma(t)},\bR_{\gamma(t)})^{-1/2}\gamma'(t)
&=\phi(\bL_{G(\xi(t))},\bR_{G(\xi(t))})^{-1/2}
G^{[1]}(\bL_{\xi(t)},\bR_{\xi(t)})\xi'(t) \nonumber\\
&=U\Biggl(\Biggl[{G^{[1]}(\lambda_i,\lambda_j)\over
\sqrt{\phi(G(\lambda_i),G(\lambda_j))}}\Biggr]_{ij}\circ(U^*\xi'(t)U)\Biggr)U^*.
\end{align}
Hence we see that the metric $K^\phi$ on $\bP_n$ is the pull-back of the Euclidean metric
on $\bH_n(a,b)$ via $F$ if and only if
\begin{equation}\label{F-2.8}
{G^{[1]}(s,t)\over\sqrt{\phi(G(s),G(t))}}=\pm1
\end{equation}
for all $s,t\in(a,b)$, where the right-hand side of \eqref{F-2.8} is $1$ or $-1$ according
to $G$ being increasing or decreasing. Since $\phi(x,x)=x^\theta$, \eqref{F-2.8} for $s=t$
yields the differential equation
$$
G'(t)=\pm G(t)^{\theta/2},\qquad t\in(a,b).
$$
This is equivalently written as $F'(x)=\pm x^{-\theta/2}$, $x>0$, which is solved as
\eqref{F-2.1}. From \eqref{F-2.1} and \eqref{F-2.8} we obtain \eqref{F-2.2}. Thus we have
proved Theorem \ref{T-2.1} again. Note that one can even more simply prove the theorem by
appealing to
$$
F^{[1]}(x,y)=\pm{1\over\sqrt{\phi(x,y)}}.
$$

For $\theta\in\bR$, $\theta\ne0$, we write $M_\theta(x,y)$ for $M(x,y)$ given in
\eqref{F-2.2} and $\phi_\theta(x,y)$ for $M_\theta(x,y)^\theta$. The family of means
$M_\theta$ interpolates the following typical means:
\begin{align}
M_{-2}(x,y)&=M_\mathrm{A}(x,y):={x+y\over2}\quad\mbox{(arithmetic mean)},
\label{F-2.9}\\
M_1(x,y)&=M_{\sqrt{\phantom{a}}}(x,y):=\biggl({\sqrt x+\sqrt y\over2}\biggr)^2
\quad\mbox{(root mean)}, \label{F-2.10}\\
M_2(x,y)&=M_\mathrm{L}(x,y):={x-y\over\log x-\log y}\quad\mbox{(logarithmic mean)},
\label{F-2.11}\\
M_4(x,y)&=M_\mathrm{G}(x,y):=\sqrt{xy}\quad\mbox{(geometric mean)}.
\label{F-2.12}
\end{align}
Furthermore, we may define $M_0(x,y)$ by taking the limit
\begin{equation}\label{F-2.13}
M_0(x,y):=\lim_{\theta\to0}M_\theta(x,y)
={1\over e}\biggl({x^x\over y^y}\biggr)^{1/(x-y)}\quad\mbox{(identric mean)},
\end{equation}
and $\phi_0(x,y)\equiv1$. Note also that
$$
{x-y\over\log x-\log y}=\lim_{\theta\to2}
{2-\theta\over2}\cdot{x-y\over x^{2-\theta\over2}-y^{2-\theta\over2}}.
$$

As mentioned in Introduction, monotone metrics (\cite{Pe3}) are among particularly
important class of Riemannian metrics. Those are the kernel metrics $K^\phi$ in the case
where $\theta=1$ and $M(x,1)$ is operator monotone. In the case $\theta=1$, the theorem
says that the metric corresponding to the root mean $M_{\sqrt{\phantom{a}}}$ (that is a
special case of binomial means \cite{HK}), called the {\it Wigner-Yanase metric}, is a
unique monotone metric which is a pull-back of the Euclidean metric. This was in fact
proved by Gibilisco and Isola \cite{GI} in a slightly different approach. Other famous
monotone metrics are the {\it Bogoliubov metric} (also called the {\it Kubo-Mori metric})
corresponding to the logarithmic mean $M_\mathrm{L}$ and the {\it Bures-Uhlmann metric}
corresponding to the arithmetic mean $M_\mathrm{A}$.

In this way, we have found a one-parameter family $M_\theta\in\MM_0$, $\theta\in\bR$,
given in \eqref{F-2.2} and \eqref{F-2.13}. It is remarkable that this is a rather familiar
family of means introduced in \cite{St} with a different parametrization and called
{\it Stolarsky means} in \cite[\S2.6]{BK}. A monotonicity property of the family was
proved in \cite{St}, which we state in the next lemma for the convenience of references.

\begin{lemma}\label{L-2.2}{\rm(\cite{St})}\enspace
For every $x,y>0$ with $x\ne y$, $M_\theta(x,y)$ is strictly decreasing in $\theta\in\bR$.
Furthermore, $\lim_{\theta\to-\infty}M_\theta(x,y)=\max\{x,y\}$ and
$\lim_{\theta\to\infty}M_\theta(x,y)=\min\{x,y\}$.
\end{lemma}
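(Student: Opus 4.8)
The plan is to reduce the claim to a single convexity property of a one-parameter family of integral averages. Fix $x,y>0$; by symmetry (property (1) of a mean) we may assume $x>y$, and by homogeneity it suffices to understand $\theta\mapsto M_\theta(x,y)$ for this fixed pair. Starting from the elementary identity $x^r-y^r=r\int_y^x w^{r-1}\,dw$, valid for all $r\in\bR$, set $r:=(2-\theta)/2$ and define $\psi(\xi):=\frac1{x-y}\int_y^x w^{\xi}\,dw$ for $\xi\in\bR$. Comparing with \eqref{F-2.2} (and with \eqref{F-2.11}, \eqref{F-2.13} for the special values) one checks that $M_\theta(x,y)=\psi(r-1)^{1/(r-1)}$ for every $\theta$, where at $r=1$, i.e.\ $\theta=0$, the right-hand side is read as its limit $\exp\psi'(0)$, which is exactly the identric mean. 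Writing $\xi:=r-1=-\theta/2$ and $\varphi:=\log\psi$, this becomes the single formula $\log M_\theta(x,y)=\varphi(\xi)/\xi$, with value $\varphi'(0)$ at $\xi=0$. Since $\xi=-\theta/2$ is a strictly decreasing bijection of $\bR$, the monotonicity assertion is equivalent to the statement that $\xi\mapsto\varphi(\xi)/\xi$ is strictly increasing on $\bR$.

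The next step is to prove that $\varphi=\log\psi$ is strictly convex. The function $\psi(\xi)=\frac1{x-y}\int_y^x e^{\xi\log w}\,dw$ is smooth and positive, and $\psi''\psi-(\psi')^2>0$ by the Cauchy--Schwarz inequality applied to $\log w$ against the probability density proportional to $w^\xi$ on $[y,x]$ (using that $\log w$ is non-constant there); hence $\varphi''>0$. Also $\varphi(0)=\log\psi(0)=0$. Now I would invoke the elementary ``slope from the origin'' lemma: if $\varphi$ is strictly convex on $\bR$ with $\varphi(0)=0$, then $\xi\mapsto\varphi(\xi)/\xi$ is strictly increasing on $\bR$, taking the value $\varphi'(0)$ at $0$. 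Indeed, the derivative of $\varphi(\xi)/\xi$ has the sign of $g(\xi):=\xi\varphi'(\xi)-\varphi(\xi)$, and $g(0)=0$ with $g'(\xi)=\xi\varphi''(\xi)$, so $g>0$ off the origin; a Taylor expansion at $0$ (where $g(\xi)/\xi^2\to\tfrac12\varphi''(0)>0$) extends the strict increase across $\xi=0$. Together with the first paragraph this yields strict decrease of $M_\theta(x,y)$ in $\theta$.

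For the limiting values I would use the asymptotics of $M_\theta(x,y)=\psi(\xi)^{1/\xi}$ as $\xi\to\pm\infty$. For $\xi>0$ and small $\varepsilon>0$ one has $\frac{\varepsilon}{x-y}(x-\varepsilon)^{\xi}\le\psi(\xi)\le x^{\xi}$, by bounding $w^\xi$ from below on $[x-\varepsilon,x]$ and from above on $[y,x]$; taking $\xi$th roots and letting $\xi\to+\infty$ gives $\psi(\xi)^{1/\xi}\to x$, and letting $\varepsilon\to0$ is not even needed since $x-\varepsilon$ can be made arbitrarily close to $x$. A symmetric estimate near $w=y$ gives $\psi(\xi)^{1/\xi}\to y$ as $\xi\to-\infty$. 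Translating through $\xi=-\theta/2$ yields $M_\theta(x,y)\to x=\max\{x,y\}$ as $\theta\to-\infty$ and $M_\theta(x,y)\to y=\min\{x,y\}$ as $\theta\to+\infty$.

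The one step that requires care is the bookkeeping behind $\log M_\theta(x,y)=\varphi(\xi)/\xi$: one must match the three branches in the definition of $M_\theta$ (the generic formula for $\theta\ne0,2$, the logarithmic mean at $\theta=2$, and the identric mean at $\theta=0$) with the single analytic expression $\psi(r-1)^{1/(r-1)}$ and its continuous extension at $r=1$. Once this identification is in place, strict convexity of $\varphi$ together with the slope-from-the-origin lemma settles the monotonicity, and the two-sided sup/inf estimates on $\psi$ settle the limits; no further difficulty arises. (Alternatively, one may simply quote \cite{St}, but the argument above is short and self-contained.)
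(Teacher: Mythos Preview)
Your argument is correct. The paper does not supply its own proof of Lemma~\ref{L-2.2}; it merely quotes the result from Stolarsky \cite{St}. Your integral-mean representation $M_\theta(x,y)=\psi(-\theta/2)^{-2/\theta}$ with $\psi(\xi)=\frac{1}{x-y}\int_y^x w^\xi\,dw$, followed by the log-convexity of $\psi$ (via Cauchy--Schwarz) and the slope-from-the-origin lemma, is in fact essentially the approach in \cite{St}, though you have organized it somewhat more cleanly. Two small points of exposition: in the limit as $\xi\to-\infty$ the inequalities reverse when you take the $\xi$th root (since $\xi<0$), which you should say explicitly; and in the $\xi\to+\infty$ estimate you \emph{do} need to let $\varepsilon\to0$ after sending $\xi\to\infty$ to conclude $\liminf\psi(\xi)^{1/\xi}\ge x$, contrary to your parenthetical remark. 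Neither affects the validity of the proof.
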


Next we are concerned with the relation among the metrics $K^\phi$ under the reflection
map $A\mapsto A^{-1}$.

\begin{prop}\label{P-2.3}
Let $M^{(1)},M^{(2)}\in\MM_0$, $\theta_1,\theta_2\in\bR$ and
$\phi^{(k)}(x,y):=M^{(k)}(x,y)^{\theta_k}$, $k=1,2$. Then the Riemannian manifolds
$(\bP_n,K^{\phi^{(1)}})$ and $(\bP_n,K^{\phi^{(2)}})$ are isometric under the reflection
$A\mapsto A^{-1}$ on $\bP_n$ if and only if
$$
\theta_1+\theta_2=4\quad\mbox{and}\quad
\biggl({M^{(1)}(x,y)\over\sqrt{xy}}\biggr)^{\theta_1}
=\biggl({M^{(2)}(x,y)\over\sqrt{xy}}\biggr)^{\theta_2},\quad x,y>0.
$$
In particular, if $\phi(x,y)=M(x,y)^2$ with an arbitrary $M\in\MM_0$, then
$A\mapsto A^{-1}$ is an isometric transformation on $(\bP_n,K^\phi)$. Moreover, for every
$\theta\in\bR$, $(\bP_n,K^{\phi_\theta})$ and $(\bP_n,K^{\phi_{4-\theta}})$ are isometric
under $A\mapsto A^{-1}$.
\end{prop}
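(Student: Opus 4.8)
The plan is to realise the reflection as the functional-calculus map $A\mapsto G(A)$ for $G(x):=1/x$ on $(0,\infty)$ and to apply the pull-back description from Section~1. Since $G$ is an involution, the assertion that $(\bP_n,K^{\phi^{(1)}})$ and $(\bP_n,K^{\phi^{(2)}})$ are isometric under $A\mapsto A^{-1}$ is equivalent to: the pull-back of $K^{\phi^{(2)}}$ under $A\mapsto G(A)$ equals $K^{\phi^{(1)}}$. By Lemma~\ref{L-1.2}\,(3),(5) this pull-back is again a kernel metric $K^\psi$, with
$$
\psi(x,y)=\frac{\phi^{(2)}(G(x),G(y))}{G^{[1]}(x,y)^2}
$$
(note that the original kernel is evaluated at $(G(x),G(y))$). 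Here $G'(x)=-x^{-2}\ne0$ and $G^{[1]}(x,y)=(x^{-1}-y^{-1})/(x-y)=-1/(xy)$, so $\psi(x,y)=(xy)^2\phi^{(2)}(1/x,1/y)$; and by the homogeneity and symmetry of $M^{(2)}$ one has $M^{(2)}(1/x,1/y)=(xy)^{-1}M^{(2)}(y,x)=(xy)^{-1}M^{(2)}(x,y)$, hence
$$
\psi(x,y)=(xy)^{2-\theta_2}M^{(2)}(x,y)^{\theta_2},\qquad x,y>0.
$$

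Since a kernel metric determines its kernel (the coefficients in \eqref{F-1.1} are read off from diagonal foot points), the reflection is an isometry if and only if $\psi=\phi^{(1)}$, i.e.
\begin{equation*}
M^{(1)}(x,y)^{\theta_1}=(xy)^{2-\theta_2}M^{(2)}(x,y)^{\theta_2},\qquad x,y>0.
\tag{$\star$}
\end{equation*}
The equivalence with the two stated conditions is then elementary. The left side of $(\star)$ is homogeneous of degree $\theta_1$ and the right side is homogeneous of degree $4-\theta_2$, and both are strictly positive, so $(\star)$ forces $\theta_1=4-\theta_2$, i.e.\ $\theta_1+\theta_2=4$. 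Granting this, dividing $(\star)$ by $(xy)^{\theta_1/2}$ and using $2-\theta_2-\theta_1/2=-\theta_2/2$ converts it into $\bigl(M^{(1)}(x,y)/\sqrt{xy}\bigr)^{\theta_1}=\bigl(M^{(2)}(x,y)/\sqrt{xy}\bigr)^{\theta_2}$, and this last step is reversible. So $(\star)$ is equivalent to the conjunction of $\theta_1+\theta_2=4$ and the displayed identity of means, which is the claimed criterion.

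The two ``in particular'' statements are now specialisations of this criterion. For $\phi(x,y)=M(x,y)^2$ take $M^{(1)}=M^{(2)}=M$ and $\theta_1=\theta_2=2$: then $\theta_1+\theta_2=4$ and the second condition holds trivially, so $A\mapsto A^{-1}$ is an isometry on $(\bP_n,K^\phi)$. For the family $M_\theta$ take $M^{(1)}=M_\theta$, $\theta_1=\theta$, $M^{(2)}=M_{4-\theta}$, $\theta_2=4-\theta$: then $\theta_1+\theta_2=4$ automatically, and the remaining identity $\bigl(M_\theta/\sqrt{xy}\bigr)^{\theta}=\bigl(M_{4-\theta}/\sqrt{xy}\bigr)^{4-\theta}$ follows either from the explicit formulas \eqref{F-2.2} and \eqref{F-2.13} by a direct computation, or, more conceptually, by observing that (by Theorem~\ref{T-2.1}) the maps $F_\theta,F_{4-\theta}$ of \eqref{F-2.1} can be normalised so that $F_{4-\theta}(x)=F_\theta(1/x)$; then $F_\theta$ and $F_{4-\theta}$ carry $(\bP_n,K^{\phi_\theta})$ and $(\bP_n,K^{\phi_{4-\theta}})$ isometrically onto the \emph{same} open subset of $(\bH_n,\|\cdot\|_\HS)$, and $F_{4-\theta}^{-1}\circ F_\theta$ is exactly $A\mapsto A^{-1}$.

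No step here is genuinely hard; once the pull-back kernel $\psi$ has been identified the rest is a short computation. The points that require care are: that in the pull-back the original kernel $\phi^{(2)}$ enters composed with $G$, i.e.\ as $\phi^{(2)}(G(x),G(y))$ (this is precisely what Lemma~\ref{L-1.2}\,(5) records); the use of both homogeneity and symmetry of the means when simplifying $M^{(2)}(1/x,1/y)$; and, in the forward direction, the degree comparison in $(\star)$ yielding $\theta_1+\theta_2=4$. The closing verifications for $\phi_\theta$ versus $\phi_{4-\theta}$ are then just the explicit algebra with \eqref{F-2.2}, or the isometry-to-Euclidean argument via Theorem~\ref{T-2.1}.
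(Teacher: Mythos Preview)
Your proof is correct and follows essentially the same route as the paper's: both arrive at the scalar condition $\phi^{(1)}(x,y)=(xy)^2\phi^{(2)}(1/x,1/y)$ (equivalently $\phi^{(1)}(x^{-1},y^{-1})x^2y^2=\phi^{(2)}(x,y)$) and then extract $\theta_1+\theta_2=4$ and the mean identity. The only cosmetic differences are that you invoke the pull-back framework of Lemma~\ref{L-1.2} rather than redoing the $\bL_D,\bR_D$ computation from scratch, and you obtain $\theta_1+\theta_2=4$ by comparing homogeneity degrees whereas the paper substitutes $x=y$ --- these amount to the same step since $M(x,x)=x$. Your parenthetical remark that the pull-back kernel is $\phi^{(2)}(G(x),G(y))/G^{[1]}(x,y)^2$ (with $\phi^{(2)}$ composed with $G$) is right and worth flagging, as the paper's unnumbered theorem after Lemma~\ref{L-1.2} omits this composition; your derivation from Lemma~\ref{L-1.2}\,(3),(5) is the accurate one. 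The alternative argument via $F_{4-\theta}(x)=F_\theta(1/x)$ and Theorem~\ref{T-2.1} is a nice touch not in the paper (which just says ``a simple computation''), though for $\theta=2$ one needs the minor tweak $F_2(1/x)=-F_2(x)$ and the observation that $H\mapsto -H$ is a Euclidean isometry.
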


\begin{proof}
If $\gamma$ is a $C^1$ curve in $\bP_n$, then we have
$$
\phi^{(1)}(\bL_{\gamma(t)^{-1}},\bR_{\gamma(t)^{-1}})^{-1/2}
\biggl({d\over dt}\gamma(t)^{-1}\biggr)
=\phi^{(1)}(\bL_{\gamma(t)}^{-1},\bR_{\gamma(t)}^{-1})^{-1/2}
\bL_{\gamma(t)}^{-1}\bR_{\gamma(t)}^{-1}\gamma'(t).
$$
Hence $A\mapsto A^{-1}$ gives an isometry between $(\bP_n,K^{\phi_1})$ and
$(\bP_n,K^{\phi_2})$ if and only if
$$
\|\phi^{(1)}(\bL_D^{-1},\bR_D^{-1})^{-1/2}\bL_D^{-1}\bR_D^{-1}H\|_\HS
=\|\phi^{(2)}(\bL_D,\bR_D)^{-1/2}H\|_\HS
$$
for all $D\in\bP_n$ and $H\in\bH_n$. We may assume that $D$ is diagonal. For
$D=\diag(\lambda_1,\dots,\lambda_n)$ the above equality is written as
$$
\left\|\left[{1\over\sqrt{\phi^{(1)}(\lambda^{-1},\lambda_j^{-1})}\,\lambda_i\lambda_j}
\right]_{ij}\circ H\right\|_\HS
=\left\|\left[{1\over\sqrt{\phi^{(2)}(\lambda_i,\lambda_j)}}\right]_{ij}\circ H\right\|_\HS.
$$
This hold for all $H\in\bH_n$ if and only if
$$
\phi^{(1)}(x^{-1},y^{-1})x^2y^2=\phi^{(2)}(x,y),\qquad x,y>0,
$$
that is,
$$
M^{(1)}(x^{-1},y^{-1})^{\theta_1}x^2y^2=M^{(2)}(x,y)^{\theta_2},\qquad x,y>0.
$$
Letting $x=y$ implies $x^{4-\theta_1}=x^{\theta_2}$. Hence $\theta_1+\theta_2=4$ must hold
and the above condition for $M^{(1)}$ and $M^{(2)}$ is rewritten as
$$
\biggl({M^{(1)}(x,y)\over\sqrt{xy}}\biggr)^{\theta_1}
=\biggl({M^{(2)}(x,y)\over\sqrt{xy}}\biggr)^{\theta_2},\qquad x,y>0.
$$
Since this is obviously satisfied for $\theta_1=\theta_2=2$ and $M^{(1)}=M^{(2)}$, the
second assertion follows. A simple computation with \eqref{F-2.2} gives the last
assertion.
\end{proof}

\begin{remark}\label{R-2.4}{\rm
The latter assertions of Proposition \ref{P-2.3} can be extended as follows: For every
$\theta,\theta'\in\bR\setminus\{2\}$ the Riemannian manifolds $(\bP_n,K^{\phi_\theta})$
and $(\bP_n,K^{\phi_{\theta'}})$ are isometric under the diffeomorphism
$$
A\mapsto\bigg|{2-\theta\over2-\theta'}\bigg|^{2\over2-\theta}A^{2-\theta'\over2-\theta},
$$
and for every $\alpha\in\bR\setminus\{0\}$, $A\mapsto A^\alpha$ is an isometric
transformation on $(\bP_n,K^{\phi_2})$.
}\end{remark}

An interesting problem concerning the family $M_\theta$ is to determine the range of
$\theta$ for which $M_\theta$ is an operator monotone mean, i.e., $M_\theta(x,1)$ is
an operator monotone function on $(0,\infty)$. The cases $\theta=-2,1,2$ and $4$ are among
typical operator monotone functions as listed in \eqref{F-2.9}--\eqref{F-2.12}.
The problem has been settled by Kosaki \cite{Ko2} in such a way that $M_\theta(x,1)$ is
operator monotone if and only if $-2\le\theta\le6$.

We give the next lemma on $M_\theta$ for later use.

\begin{lemma}\label{L-2.5}
Let $M_\mathrm{H}(x,y):=2xy/(x+y)$, the harmonic mean. Then
$M_{10}(x,1)>H_\mathrm{H}(x,1)$ for all $x>0$ with $x\ne1$. For every $\theta>10$,
$M_\theta(x,1)<M_\mathrm{H}(x,1)$ if $x$ $(\ne1)$ is sufficiently near $1$.
\end{lemma}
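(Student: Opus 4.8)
The first assertion will be handled by a short explicit computation from \eqref{F-2.2}. With $\theta=10$ one has $(2-\theta)/2=-4$ and $2/\theta=1/5$, so, using $x^4-1=(x-1)(x+1)(x^2+1)$,
$$
M_{10}(x,1)=\Biggl(-4\cdot{x-1\over x^{-4}-1}\Biggr)^{1/5}
=\Biggl({4x^4\over(x+1)(x^2+1)}\Biggr)^{1/5}.
$$
Since $M_\mathrm{H}(x,1)=2x/(x+1)$, raising to the fifth power turns the desired inequality $M_{10}(x,1)>M_\mathrm{H}(x,1)$ into ${4x^4\over(x+1)(x^2+1)}>{32x^5\over(x+1)^5}$; clearing the (positive) denominators and cancelling $4x^4$ reduces it to $(x+1)^4>8x(x^2+1)$, that is, $x^4-4x^3+6x^2-4x+1>0$, i.e. $(x-1)^4>0$, which holds for every $x>0$ with $x\ne1$.

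For the second assertion the plan is to compare Taylor expansions of $M_\theta(x,1)$ and $M_\mathrm{H}(x,1)$ at $x=1$. The key device is the substitution $x=e^s$, which linearizes the reflection symmetry $M(x,1)=xM(1/x,1)$. Writing $p:=(2-\theta)/2=1-\theta/2$ and $L(u):=(e^u-1)/u$, formula \eqref{F-2.2} gives
$$
M_\theta(e^s,1)=\Biggl({L(s)\over L(ps)}\Biggr)^{1/(1-p)},\qquad
\log M_\theta(e^s,1)={1\over1-p}\bigl(\log L(s)-\log L(ps)\bigr).
$$
Since $L(u)e^{-u/2}=\sinh(u/2)/(u/2)$ is manifestly even in $u$, one has $\log L(u)={u\over2}+{u^2\over24}+O(u^4)$, whence
$$
\log M_\theta(e^s,1)={s\over2}+{1+p\over24}\,s^2+O(s^4).
$$
On the other hand $M_\mathrm{H}(e^s,1)=e^{s/2}/\cosh(s/2)$, so $\log M_\mathrm{H}(e^s,1)={s\over2}-{s^2\over8}+O(s^4)$.

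Subtracting, the coefficient of $s^2$ in $\log M_\theta(e^s,1)-\log M_\mathrm{H}(e^s,1)$ equals ${1+p\over24}+{1\over8}={p+4\over24}$, which vanishes exactly when $p=-4$, i.e. $\theta=10$ (in agreement with the first part, where the difference came out of order $(x-1)^4$), and is strictly negative as soon as $\theta>10$, since then $p=1-\theta/2<-4$. Hence for each fixed $\theta>10$ there is a $\delta>0$ such that $\log M_\theta(e^s,1)<\log M_\mathrm{H}(e^s,1)$ for $0<|s|<\delta$, that is, $M_\theta(x,1)<M_\mathrm{H}(x,1)$ for every $x\ne1$ sufficiently near $1$. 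The only step demanding genuine care is the expansion of $M_\theta(x,1)$ itself; once the substitution $x=e^s$ is made and the identity $\log L(u)-u/2=\log(\sinh(u/2)/(u/2))$ (even right-hand side) is used, the whole matter collapses to comparing a single Taylor coefficient, which is the heart of the argument.
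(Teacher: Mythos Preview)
Your proof is correct. The paper omits the proof of the first assertion as ``elementary''; you supply a clean explicit argument reducing it to $(x-1)^4>0$. For the second assertion both you and the paper argue by Taylor expansion at $x=1$: the paper raises both means to the power $\alpha+1=\theta/2$ (with $\alpha:=(\theta-2)/2$) and expands $M_\theta(x,1)^{\alpha+1}$ and $M_\mathrm{H}(x,1)^{\alpha+1}$ directly in powers of $(x-1)$, comparing the $(x-1)^2$ coefficients. You instead take logarithms and pass to the variable $s=\log x$, using the evenness of $\log L(u)-u/2=\log\bigl(\sinh(u/2)/(u/2)\bigr)$ to see immediately that the error term is $O(s^4)$. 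The two computations are equivalent---your $p$ is the paper's $-\alpha$, and your leading coefficient $(p+4)/24$ corresponds to the paper's $(\alpha+1)(4-\alpha)/24$ after accounting for the $(\alpha+1)$th power---but your substitution makes the parity structure transparent and the bookkeeping lighter.
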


\begin{proof}
The proof of the first assertion is elementary and omitted. To prove the second, let
$\theta>10$ and $\alpha:=(\theta-2)/2>4$. Direct computations show
\begin{align*}
M_\theta(x,1)^{\alpha+1}&=\alpha{x^{\alpha+1}-x^\alpha\over x^\alpha-1} \\
&=1+{\alpha+1\over2}(x-1)+{(\alpha+1)(\alpha-1)\over12}(x-1)^2+o((x-1)^2), \\
M_\mathrm{H}(x,1)^{\alpha+1}&=\biggl({2x\over x+1}\biggr)^{\alpha+1} \\
&=1+{\alpha+1\over2}(x-1)+{(\alpha+1)(\alpha-2)\over8}(x-1)^2+o((x-1)^2),
\end{align*}
which give the desired assertion.
\end{proof}

\section{The degree $2$ case}
\setcounter{equation}{0}

A Riemannian manifold said to be {\it complete} if the distance induced from the Riemannian
metric is complete. It is a general fact in Riemannian geometry that a geodesic shortest
curve joining any two points exists in a complete Riemannian manifold. The next theorem
shows that the Riemannian manifold $(\bP_n,K^\phi)$ treated in Section 2 is never complete
except the case of degree $\theta=2$.

\begin{thm}\label{T-3.1}
Let $M\in\MM_0$, $\theta\in\bR$ and $\phi(x,y):=M(x,y)^\theta$. Then the Riemannian
manifold $(\bP_n,K^\phi)$ is complete if and only if $\theta=2$. Hence, when $\theta=2$
(and $M\in\MM_0$ is arbitrary), for any $A,B\in\bP_n$ there is a geodesic shortest curve
joining $A,B$ in $(\bP_n,K^\phi)$.
\end{thm}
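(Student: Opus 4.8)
The plan is to prove the two directions separately. For the ``if'' direction (\,$\theta=2\Rightarrow$ completeness\,), I would invoke Theorem \ref{T-2.1} with $\theta=2$: in that case $\phi_2(x,y)=M_\mathrm{L}(x,y)^2$ and the map $D\mapsto\log D$ is an isometry from $(\bP_n,K^{\phi_2})$ onto the whole Euclidean space $(\bH_n,\|\cdot\|_\HS)$, which is complete. Completeness of the metric space is preserved by isometry, so $(\bP_n,K^{\phi_2})$ is complete, and the existence of geodesic shortest curves then follows from the general Hopf--Rinow-type fact quoted in the paragraph preceding the theorem (or directly: pull back the segment joining $\log A$ and $\log B$, recovering exactly the geodesic \eqref{F-2.3}). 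The only subtlety is that Theorem \ref{T-2.1} as stated fixes a \emph{particular} mean $M$ (namely $M_\mathrm{L}$) attached to $\theta=2$; here the hypothesis allows \emph{arbitrary} $M\in\MM_0$. I would handle this by Remark \ref{R-2.4}: for $\theta=2$ and any $M\in\MM_0$, the kernel is $\phi(x,y)=M(x,y)^2$, and one checks that $A\mapsto A^\alpha$ is isometric on $(\bP_n,K^{\phi_2})$; more to the point, I should argue that the choice of $M$ only rescales the metric in a controlled way near infinity of $\log$-coordinates and does not affect completeness — concretely, because $\min\{x,y\}\le M(x,y)\le\max\{x,y\}$, the kernel $\phi(x,y)=M(x,y)^2$ is sandwiched between $\phi_{-2}$ (arithmetic) and ... which is not quite a uniform two-sided bound. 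So instead the cleanest route is: compute the length of a curve $\gamma$ in $\log$-coordinates $\xi=\log\gamma$ using \eqref{F-2.7}, obtaining the integrand $\bigl\|[\,G^{[1]}(\lambda_i,\lambda_j)/M(e^{\lambda_i},e^{\lambda_j})]_{ij}\circ(U^*\xi'U)\bigr\|_\HS$ with $G=\exp$, note $G^{[1]}(\lambda_i,\lambda_j)=M_\mathrm{L}(e^{\lambda_i},e^{\lambda_j})$, and conclude the metric in $\xi$-coordinates is $\|(\,[M_\mathrm{L}(e^{\lambda_i},e^{\lambda_j})/M(e^{\lambda_i},e^{\lambda_j})]\circ\,\cdot\,)\|$; since $M_\mathrm{L}/M$ is a continuous strictly positive function on $(0,\infty)^2$ that is \emph{bounded below by a positive constant on every compact set of $\xi$}, a Cauchy sequence for $K^\phi$ stays in a bounded set of $\bH_n$ (one needs the lower bound of $M_\mathrm{L}/M$ to be locally positive, which holds) and hence converges. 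I expect this ``arbitrary $M$'' bookkeeping to be the main technical nuisance of the easy direction.

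For the ``only if'' direction, the plan is to exhibit, for each $\theta\ne2$, a divergent curve of finite length, showing the metric is not complete (a complete Riemannian metric has no geodesic escaping to the boundary in finite length). Fix a unit vector $e$ and consider the ray $\gamma(t)=\mathrm{Diag}(t,1,\dots,1)\cdot(\text{fixed})$, or more simply, $\gamma(s)=s\,I$ for $s\in(0,1]$ (letting $s\to0$) in one sub-case and $s\to\infty$ in another. Along $\gamma(s)=sI$ we have $\gamma'(s)=I$ and $\phi(\bL_{sI},\bR_{sI})^{-1/2}I=\phi(s,s)^{-1/2}I=s^{-\theta/2}I$, so $K^\phi_{sI}(I,I)=n\,s^{-\theta}$ and the length of $\gamma$ restricted to $[a,1]$ is $\sqrt n\int_a^1 s^{-\theta/2}\,ds$. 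As $a\searrow0$ this is finite precisely when $-\theta/2>-1$, i.e. $\theta<2$; and extending to $s\to\infty$, $\sqrt n\int_1^R s^{-\theta/2}\,ds$ is finite as $R\to\infty$ precisely when $\theta>2$. In either case $\theta\ne2$, the curve reaches the boundary of $\bP_n$ (either $0$ or $\infty$) in finite $K^\phi$-length, so $(\bP_n,K^\phi)$ is not complete. (Equivalently and more transparently: by Theorem \ref{T-2.1} or Remark \ref{R-2.4}, for $\theta\ne0,2$ the map $D\mapsto\tfrac{2}{|2-\theta|}D^{(2-\theta)/2}$ is an isometry of $(\bP_n,K^{\phi_\theta})$ into $(\bH_n,\|\cdot\|_\HS)$ whose image is \emph{not} all of $\bH_n$ but only $\{A\in\bH_n: A>0\}$ or $\{A: A>0\}$ in the relevant power — an open, non-closed subset of Euclidean space, hence incomplete; and for $\theta=0$, $\phi_0\equiv1$ gives the flat Hilbert--Schmidt metric on the \emph{open} set $\bP_n$, again incomplete. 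But one must still treat general $M$, not just the Stolarsky $M_\theta$; the explicit-curve argument above avoids that issue entirely, which is why I prefer it.)

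Assembling: the structure is (i) easy direction via $\log$-coordinates and the local positivity of $M_\mathrm{L}/M$, plus Hopf--Rinow for the shortest-curve conclusion; (ii) hard direction via the explicit curve $s\mapsto sI$, whose $K^\phi$-length to the boundary $\int_a^1 s^{-\theta/2}\,ds$ (or $\int_1^R$) is finite iff $\theta\ne2$. The main obstacle is genuinely the ``only if'' direction's need to show a \emph{Cauchy} sequence fails to converge, not merely that a curve has finite length — for that I would take the points $D_k=k^{-1}I$ (if $\theta<2$) or $D_k=kI$ (if $\theta>2$), show $\delta_\phi(D_k,D_{k+1})\le\sqrt n\int$ of the above integrand over the relevant shrinking interval, so $\{D_k\}$ is $\delta_\phi$-Cauchy, yet $D_k\to0\notin\bP_n$ (resp.\ $D_k\to\infty$), and no point of $\bP_n$ can be its $\delta_\phi$-limit since $\delta_\phi$ dominates a metric in which $D_k$ is divergent (e.g.\ using the lower bound $\phi(x,y)=M(x,y)^\theta\le(\max\{x,y\})^\theta$ to bound $\delta_\phi$ below near $0$, resp.\ $\ge(\min\{x,y\})^\theta$ near $\infty$, by an explicit one-variable integral that still diverges). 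This last comparison step, separating the two boundary regimes and choosing the right one-sided bound on $M$, is where the care is needed.
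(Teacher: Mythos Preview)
Your ``only if'' direction matches the paper's: the curve $\gamma(s)=sI$ and the integral $\sqrt{n}\int s^{-\theta/2}\,ds$ are exactly what the paper uses, and the Cauchy sequence $\{k^{-1}I\}$ or $\{kI\}$ fails to converge simply because the Riemannian distance induces the manifold topology (your extra care with one-sided bounds on $M$ is harmless but unnecessary).

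The ``if'' direction has a genuine gap. You correctly identify that arbitrary $M\in\MM_0$ is the difficulty and propose working in $\xi=\log D$ coordinates, where the integrand becomes $\bigl\|[M_\mathrm{L}(e^{\lambda_i},e^{\lambda_j})/M(e^{\lambda_i},e^{\lambda_j})]\circ(U^*\xi'U)\bigr\|_\HS$. But your assertion that ``a Cauchy sequence for $K^\phi$ stays in a bounded set of $\bH_n$'' from \emph{local} positivity of $M_\mathrm{L}/M$ is circular: you need boundedness \emph{before} you can invoke a lower bound on a compact set, and there is no global lower bound on $M_\mathrm{L}/M$ --- for $M$ close to the maximum mean this ratio tends to $0$ as $|\lambda_i-\lambda_j|\to\infty$, so the metric in $\xi$-coordinates is \emph{not} uniformly comparable to the Euclidean one.

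The paper plugs this hole with a lemma that is not visible in your sketch: for \emph{any} $M\in\MM_0$ with $\phi=M^2$, one has the exact identity $\delta_\phi(A,I)=\|\log A\|_\HS$. The lower bound $\ge\|\log A\|_\HS$ comes from diagonalizing $\gamma(t)$ and observing that the \emph{diagonal} entries of the Schur kernel are $1/M(\lambda_i(t),\lambda_i(t))=1/\lambda_i(t)$ regardless of $M$, so the integrand dominates $\bigl(\sum_i(\lambda_i'/\lambda_i)^2\bigr)^{1/2}=\|(\log\diag\lambda(t))'\|_\HS$; the upper bound is achieved by the commuting curve $A^{1-t}$. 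Once this is in hand, Cauchy implies $\|\log A_k\|_\HS$ bounded, hence the eigenvalues are trapped in some $[\eps,\eps^{-1}]$, a subsequence converges in norm, and \emph{then} your local comparison (now legitimately $M_\mathrm{L}/M\le\eps^{-2}$ on the trapped spectrum) upgrades this to $\delta_\phi$-convergence. The diagonal trick --- that all means agree on the diagonal $M(x,x)=x$, and that the diagonal part alone already controls the distance from $I$ --- is the key idea you are missing.
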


\begin{proof}
First assume $\theta\ne2$. The proof of the non-completeness of $(\bP_n,K^\phi)$ is easy.
Let $\gamma(t):=tI$ for $t>0$, where $I$ is the $n\times n$ identity matrix. Since
$$
\|\phi(\bL_{\gamma(t)},\bR_{\gamma(t)})^{-1/2}\gamma'(t)\|_\HS
=\|M(t,t)^{-\theta/2}I\|_\HS=t^{-\theta/2}\sqrt n,
$$
we have
$$
\int_0^1\sqrt{K_{\gamma(t)}^\phi(\gamma'(t),\gamma'(t))}\,dt<+\infty
\quad\mbox{if $\theta<2$},
$$
$$
\int_1^\infty\sqrt{K_{\gamma(t)}^\phi(\gamma'(t),\gamma'(t))}\,dt<+\infty
\quad\mbox{if $\theta>2$}.
$$
Hence, if we define $A_k:={1\over k}I_n$ if $\theta<2$ and $A_k:=kI_n$ if $\theta>2$, then
it follows that $\{A_k\}_{k=1}^\infty$ is Cauchy with respect to the geodesic distance
$\delta_\phi$. But it is clear that $\{A_k\}$ does not converge in $(\bP_n,K^\phi)$.

Next assume $\theta=2$, and prove that $(\bP_n,K^\phi)$ is complete. To do so, we need a
lemma.

\begin{lemma}\label{L-3.2}
If $M\in\MM_0$ and $\phi(x,y):=M(x,y)^2$, then $\delta_\phi(A,I)=\|\log A\|_\HS$ for every
$A\in\bP_n$.
\end{lemma}

\begin{proof}
We may assume that $A$ is diagonal. Let $\gamma:[0,1]\to\bP_n$ be a $C^1$ curve from $A$
to $I$, and diagonalize $\gamma(t)$, $0\le t\le1$, so that
$$
\gamma(t)=U(t)\diag(\lambda_1(t),\dots,\lambda_n(t))U(t)^*
$$
with $\lambda_1(t)\le\dots\le\lambda_n(t)$ and unitary matrices $U(t)$. Here one can fix
$U(t)$, $0\le t\le1$, so that $\lambda_1(t),\dots,\lambda_n(t)$ and $U(t)$ are $C^1$
except branching points of $\lambda_1(t),\dots,\lambda_n(t)$ (see \cite{Ka} for example).
Note that the set of branching points is at most countable. Hence, for each $t$ except
such branching points, we have
\begin{align*}
\gamma'(t)=U(t)\diag(\lambda_1'(t),\dots,\lambda_n'(t))U(t)^*
&+U'(t)\diag(\lambda_1(t),\dots,\lambda_n(t))U(t)^* \\
&+U(t)\diag(\lambda_1(t),\dots,\lambda_n(t))U'(t)^*
\end{align*}
so that
\begin{align*}
U(t)^*\gamma'(t)U(t)=\diag(\lambda_1'(t),\dots,\lambda_n'(t))
&+U(t)^*U'(t)\diag(\lambda_1(t),\dots,\lambda_n(t)) \\
&+\diag(\lambda_1(t),\dots,\lambda_n(t))U'(t)^*U(t).
\end{align*}
Since $U(t)^*U(t)=I$ yields that $U'(t)^*U(t)+U(t)^*U'(t)=O$, the diagonal entries of
$U(t)^*\gamma'(t)U(t)$ are $\lambda_1'(t),\dots,\lambda_n'(t)$. Hence we get
\begin{align*}
&\|\phi(\bL_{\gamma(t)},\bR_{\gamma(t)})^{-1/2}\gamma'(t)\|_\HS \\
&\qquad=\Bigg\|\biggl[{1\over M(\lambda_i(t),\lambda_j(t))}\biggr]_{ij}
\circ(U(t)^*\gamma'(t)U(t))\Bigg\|_\HS
\ge\sqrt{\sum_{i=1}^n\biggl({\lambda_i'(t)\over\lambda_i(t)}\biggr)^2}
\end{align*}
for all $t$ except a countable set. Since
$\xi(t):=\diag(\log\lambda_1(t),\dots,\log\lambda_n(t))$ is a curve (continuous in
$0\le t\le1$ and $C^1$ except a countable set as mentioned above) from $\log A$ to $O$,
we get
$$
L_\phi(\gamma)\ge\int_0^1\|\xi'(t)\|_\HS\,dt\ge\|\log A\|_\HS.
$$
Furthermore, if $A=\diag(\lambda_1,\dots,\lambda_n)$ and
$\gamma_0(t):=A^{1-t}=\diag(\lambda_1^{1-t},\dots,\lambda_n^{1-t})$ for $0\le t\le1$, then
one can easily compute
$$
L_\phi(\gamma_0)=\sqrt{\sum_{i=1}^n(\log\lambda_i)^2}
=\|\log A\|_\HS,
$$
implying $\delta_\phi(A,I)=\|\log A\|_\HS$.
\end{proof}

\noindent
{\it Proof of Theorem \ref{T-3.1} (continued)}.\enspace
Let $\{A_k\}$ be a $\delta_\phi$-Cauchy sequence in $\bP_n$. Since
$|\delta_\phi(A_k,I)-\delta_\phi(A_l,I)|\le\delta_\phi(A_k,A_l)\to0$ as $k,l\to\infty$, it
follows from Lemma \ref{L-3.2} that $\delta_\phi(A_k,I)=\|\log A_k\|_\HS$ is a bounded
sequence and so $\sup_k\|\log A_k\|_\infty<+\infty$ ($\|\cdot\|_\infty$ being the operator
norm). Hence there is an $\eps>0$ such that $\eps I\le A_k\le\eps^{-1}I$ for all $k$. By
compactness we can choose a subsequence $\{A_{k_m}\}$ of $\{A_k\}$ such that
$\|A_{k_m}-A\|_\infty\to0$ for some $A\in\bP_n$ with $\eps I\le A\le\eps^{-1}I$. Here we
may assume that $\{A_k\}$ itself converges to $A$ in operator norm. Then we have
$\|\log A_k-\log A\|_\infty\to0$ and so $\|\log A_k-\log A\|_\HS\to0$. Define
$\xi_k(t):=(1-t)\log A_k+t\log A$ and $\gamma_k(t)=e^{\xi_k(t)}$ for $0\le t\le1$. For
each fixed $t\in[0,1]$ diagonalize $\xi_k(t)$ as $\xi_k(t)=V\diag(\mu_1,\dots,\mu_n)V^*$
with a unitary $V$. By \eqref{F-2.7} we get
\begin{align*}
\|M(\gamma_k(t),\gamma_k(t))^{-1}\gamma_k'(t)\|_\HS
&=\bigg\|\biggl[{1\over M(e^{\mu_i},e^{\mu_j})}\cdot
{e^{\mu_i}-e^{\mu_j}\over\mu_i-\mu_j}\biggr]\circ(V^*\xi_k'(t)V)\bigg\|_\HS \\
&=\bigg\|\biggl[{M_\mathrm{L}(e^{\mu_i},e^{\mu_j})\over
M(e^{\mu_i},e^{\mu_j})}\biggr]\circ(V^*\xi_k'(t)V)\bigg\|_\HS,
\end{align*}
where $M_\mathrm{L}(x,y)$ is the logarithmic mean. Since $\eps\le e^{\mu_i}\le\eps^{-1}$
for all $1\le i\le n$, it follows that
$$
{M_\mathrm{L}(e^{\mu_i},e^{\mu_j})\over M(e^{\mu_i},e^{\mu_j})}
\le{M_\mathrm{L}(\eps^{-1},\eps^{-1})\over M(\eps,\eps)}=\eps^{-2}
\quad\mbox{for all $i,j$}.
$$
Therefore, for every $0\le t\le1$,
$$
\|M(\gamma_k(t),\gamma_k(t))^{-1}\gamma_k'(t)\|_\HS
\le\eps^{-2}\|\xi_k'(t)\|_\HS=\eps^{-2}\|\log A_k-\log A\|_\HS
$$
so that
$$
\delta_\phi(A_k,A)\le L_\phi(\gamma_k)
\le\eps^{-2}\|\log A_k-\log A\|_\HS\longrightarrow0
\quad\mbox{as $k\to\infty$}.
$$
Hence the result follows.
\end{proof}

Let $\phi_\mathrm{G}(x,y)$ denote the degree $2$ power of the geometric mean, i.e.,
$\phi_\mathrm{G}(x,y):=M_\mathrm{G}(x,y)^2=xy$. The metric $K^{\phi_\mathrm{G}}$ induced
from $\phi_\mathrm{G}$ is the Fisher-Rao metric $g$ mentioned in Introduction. The
completeness of the Riemannian manifold $(\bP_n,K^{\phi_\mathrm{G}})$ was shown in
\cite{BH}. Now we define a one-parameter family of kernel functions
\begin{equation}\label{F-3.1}
N_\alpha(x,y):=\alpha(xy)^{\alpha/2}{x-y\over x^\alpha-y^\alpha},
\quad x,y>0,\quad\alpha\in\bR,
\end{equation}
where $N_0(x,y)$ is understood as
$$
N_0(x,y):=\lim_{\alpha\to0}N_\alpha(x,y)={x-y\over\log x-\log y}
\quad\mbox{(logarithmic mean)}.
$$
We have $N_1(x,y)=\sqrt{xy}$ (geometric mean) and $N_2(x,y)=2xy/(x+y)$ (harmonic mean).
Note that $N_\alpha(x,y)$ is symmetric and homogeneous in the sense of (1) and (2) at the
beginning of Section 2 and $N_{-\alpha}(x,y)=N_\alpha(x,y)$. When $\alpha>2$, $N_\alpha$
does not belong to $\MM_0$ since $N_\alpha(x,1)\to0$ as $x\to\infty$. When $0<\alpha\le2$,
one can easily see by elementary calculus that $N_\alpha(x,1)$ is increasing in $x>0$ and
$1\le N_\alpha(x,1)\le x$ for all $x\ge1$. It is also not difficult to see that
$N_\alpha(x,y)$ is strictly decreasing in $\alpha>0$ for each $x,y>0$ with $x\ne y$. Thus
$\{N_\alpha\}_{0\le\alpha\le2}$ is a family of means in $\MM_0$ interpolating the
logarithmic and the harmonic means.

We determine when our Riemannian metric $K^\phi$ is a pull-back of $K^{\phi_\mathrm{G}}$
up to a multiple constant, and moreover extend \eqref{F-0.4} and \eqref{F-0.5} for
$g=K^{\phi_\mathrm{G}}$ to the family of metrics induced from the above $N_\alpha$.

\begin{thm}\label{T-3.3}
Let $M\in\MM_0$, $\theta\in\bR$ and $\phi(x,y):=M(x,y)^\theta$. Let $\alpha>0$. Assume
that $F$ is a smooth function from $(0,\infty)$ into itself such that $F'(x)\ne0$ for all
$x>0$. Then the transformation $D\in\bP_n\mapsto F(D)\in\bP_n$ is isometric from
$(\bP_n,\alpha K^\phi)$ into $(\bP_n,K^{\phi_\mathrm{G}})$ if and only if $\theta=2$,
$\alpha\le2$, $F(x)=cx^\alpha$ (up to a constant $c>0$) and $M=N_\alpha$.

In the above case, for every $A,B\in\bP_n$ there exists a unique geodesic shortest curve
in $(\bP_n,K^\phi)$ from $A$ to $B$ given by
$$
\gamma(t):=(A^\alpha\,\#_t\,B^\alpha)^{1/\alpha}\ \Bigl(
=\bigl(A^{\alpha/2}(A^{-\alpha/2}B^\alpha A^{-\alpha/2})^tA^{\alpha/2}\bigr)^{1/\alpha}
\Bigr)
$$
and moreover
$$
\delta_\phi(A,B)=\|\log(A^{-\alpha/2}B^\alpha A^{-\alpha/2})^{1/\alpha}\|_\HS.
$$
\end{thm}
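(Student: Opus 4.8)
The plan is to reduce Theorem \ref{T-3.3} to the already-established Theorem \ref{T-2.1} together with the general pull-back computation in Theorem 1.1. Observe first that $K^{\phi_\mathrm{G}}$ is the Fisher-Rao metric $g$, and by \eqref{F-0.4}--\eqref{F-0.5} (which we are allowed to assume) a unique geodesic shortest curve of $(\bP_n, g)$ from $C$ to $E$ is $C\,\#_t\,E$, with distance $\|\log(C^{-1/2}EC^{-1/2})\|_\HS$. So if we can show that the isometric transformation in question forces $\theta=2$, $F(x)=cx^\alpha$, $\alpha\le2$ and $M=N_\alpha$, then transporting the Fisher-Rao geodesics through $F^{-1}$ immediately yields the stated curve $\gamma(t)=(A^\alpha\,\#_t\,B^\alpha)^{1/\alpha}$ and the stated distance formula (the multiplicative constant $\alpha$ in $\alpha K^\phi$ scales lengths, hence distances, by $\sqrt\alpha$, but does not change which curves are shortest; one should be slightly careful about whether the distance formula in the statement carries the factor $\sqrt\alpha$ or not — reading it literally it does not, so the claim is really about $K^\phi$ after absorbing constants, and I would just match the normalization $F(x)=cx^\alpha$ to kill it).

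For the characterization itself, I would mimic the ``more direct'' argument given after Theorem \ref{T-2.1}. Let $G:=F^{-1}$ and pull back $K^{\phi_\mathrm{G}}$ through $G$: by Theorem 1.1 the pull-back of the kernel metric $K^{\phi_\mathrm{G}}$ under $D\mapsto G(D)$ is the kernel metric $K^\psi$ with $\psi(x,y)=\phi_\mathrm{G}(G(x),G(y))/G^{[1]}(x,y)^2 = G(x)G(y)/G^{[1]}(x,y)^2$. The isometry requirement is then simply $\phi(x,y)=\alpha^{-1}\psi(x,y)$, i.e.
\begin{equation}\label{F-plan-1}
M(x,y)^\theta=\frac{1}{\alpha}\cdot\frac{G(x)G(y)}{G^{[1]}(x,y)^2},\qquad x,y>0.
\end{equation}
Setting $x=y$ in \eqref{F-plan-1} gives $x^\theta=\alpha^{-1}G(x)^2/G'(x)^2$, i.e. $G'(x)=\pm\alpha^{-1/2}G(x)x^{-\theta/2}$, equivalently $F'(x)=\pm\alpha^{1/2}x^{-\theta/2}/F(x)$; but we also want $F$ to map $(0,\infty)$ into $(0,\infty)$, and comparing with the Fisher-Rao structure (where $G$ is a positive power) one sees $\theta$ must equal $2$: indeed solving the ODE gives $F(x)=c\,x^{(2-\theta)/2}$ only when we drop the extra $1/F(x)$, so with the $1/F$ present we get $\frac12(F^2)'(x)=\pm\alpha^{1/2}x^{-\theta/2}$, hence $F(x)^2=\mp 2\alpha^{1/2}x^{1-\theta/2}/(1-\theta/2)+$const if $\theta\ne2$ and $F(x)^2=\pm 2\alpha^{1/2}\log x+$const if $\theta=2$. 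For $F$ to be a smooth positive function on all of $(0,\infty)$ with $F'$ nonvanishing, the only viable family is the $\theta=2$ one with the constant chosen so that $F(x)^2=c^2 x^{2\beta}$ for a suitable power; matching exponents shows $F(x)=c\,x^\alpha$ with $\alpha$ the free parameter coming from $\alpha^{1/2}$ in the ODE — this is the bookkeeping step I would do carefully. Then $G(x)=(x/c)^{1/\alpha}$, so $G^{[1]}(x,y)=\alpha^{-1}c^{-1/\alpha}(x^{1/\alpha}-y^{1/\alpha})/(x-y)$ and plugging into \eqref{F-plan-1} (with $\theta=2$) gives
\begin{equation}\label{F-plan-2}
M(x,y)^2=\alpha^2\,(xy)^{1/\alpha}\Biggl(\frac{x-y}{x^{1/\alpha}-y^{1/\alpha}}\Biggr)^2,
\end{equation}
and after replacing $x,y$ by $x^\alpha,y^\alpha$ (harmless since $M$ is determined by $M(\cdot,1)$ on all of $(0,\infty)$) this is exactly $M=N_\alpha$ as defined in \eqref{F-3.1} — here I would just invoke the formula \eqref{F-3.1}. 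Finally the constraint $\alpha\le2$ is forced by $M=N_\alpha\in\MM_0$: as already noted in the text right after \eqref{F-3.1}, $N_\alpha$ fails to be a mean (it is not $\ge\min\{x,y\}$, equivalently $N_\alpha(x,1)\to0$) when $\alpha>2$, whereas for $0<\alpha\le2$ it genuinely lies in $\MM_0$; so $M\in\MM_0$ iff $\alpha\le2$.

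The converse direction — that for $\theta=2$, $0<\alpha\le2$, $F(x)=cx^\alpha$ and $M=N_\alpha$ the map $D\mapsto F(D)$ really is isometric from $(\bP_n,\alpha K^\phi)$ onto $(\bP_n,K^{\phi_\mathrm{G}})$ — is then just the same computation run backwards: \eqref{F-plan-1} holds by \eqref{F-plan-2}, and Theorem 1.1 (or directly Lemma \ref{L-1.2}) certifies the pull-back identity. I expect the only real obstacle to be the ODE bookkeeping in the middle paragraph: getting the exponents and the $\alpha^{1/2}$ normalization to line up so that ``$F$ positive, smooth, nonvanishing derivative'' genuinely excludes every $\theta\ne2$ and pins $F$ down to $cx^\alpha$. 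One subtlety worth flagging: unlike in Theorem \ref{T-2.1}, here $F$ must land in $(0,\infty)$ (not merely in $\bR$), which is precisely what rules out the logarithmic solution and, in the $\theta\ne 2$ case, forces a sign/range contradiction; I would make that explicit rather than leaving it implicit. Everything else (reading off the geodesic and the distance from \eqref{F-0.4}--\eqref{F-0.5} by functoriality of pull-back along an isometry) is immediate.
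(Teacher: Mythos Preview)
Your overall plan---derive a pointwise kernel identity from the pull-back, specialize to $x=y$ to obtain an ODE for $F$, solve it, then use side constraints to pin down $\theta$, $F$ and $M$---is exactly the paper's strategy. But the execution goes wrong at the ODE step, and that error breaks your exclusion of $\theta\ne2$. First, the pull-back should be taken along $F$, not $G=F^{-1}$: the isometry condition is that the pull-back of $K^{\phi_\mathrm{G}}$ by $D\mapsto F(D)$ equals $\alpha K^\phi$, which (using $\phi_\mathrm G(u,v)=uv$) reads
\[
\frac{\alpha}{\phi(x,y)}=\frac{F^{[1]}(x,y)^2}{F(x)F(y)},\qquad x,y>0,
\]
i.e.\ the paper's \eqref{F-3.2}. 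Setting $x=y$ gives $(\log F)'(x)=\pm\,C\,x^{-\theta/2}$ (the paper's \eqref{F-3.3}), \emph{not} $\tfrac12(F^2)'(x)=\pm\,C\,x^{-\theta/2}$ as you wrote; the latter is the ODE from the Euclidean pull-back in Theorem~\ref{T-2.1}, and your passage from the $G$-equation to the ``equivalent'' $F$-equation is simply incorrect. With the right ODE, the $\theta\ne2$ solutions are
\[
F(x)=c\exp\!\Bigl(\beta\,x^{(2-\theta)/2}\Bigr),\qquad c>0,\ \beta\ne0,
\]
and these \emph{are} smooth, strictly positive on all of $(0,\infty)$, and have nowhere-vanishing derivative. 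So the mechanism you propose---``$F$ must land in $(0,\infty)$, which rules out the logarithmic solution and forces a sign/range contradiction for $\theta\ne2$''---does not work here.

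The actual obstruction when $\theta\ne2$ comes from $M\in\MM_0$, not from $F$. One must substitute the exponential $F$ back into the full identity to obtain $\phi(x,1)$ explicitly, and then check that the resulting function violates the a priori bounds $\min\{x,1\}\le M(x,1)\le\max\{x,1\}$ (equivalently $1\le\phi(x,1)\le x^\theta$ or the reverse, depending on $\theta$ and $x$). That asymptotic case analysis is precisely what the paper carries out after \eqref{F-3.3}. Once $\theta=2$ is forced, \eqref{F-3.3} integrates to $F(x)=cx^\alpha$, and plugging back gives $M=N_\alpha$ (not $N_{1/\alpha}$, which is what your version of the computation yields---another symptom of the $F$/$G$ mix-up); finally $\alpha\le2$ is needed for $N_\alpha\in\MM_0$, as you correctly note. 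The geodesic and distance formulas then follow by transporting \eqref{F-0.4}--\eqref{F-0.5} through the isometry, as you say.
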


\begin{proof}
For any $C^1$ curve $\gamma$ in $\bP_n$ let $\xi(t):=F(\gamma(t))$. Under the
diagonalization $\gamma(t)=U\diag(\lambda_1,\dots,\lambda_n)U^*$ for each fixed
$t\in[0,1]$ we have by \eqref{F-1.2} and \eqref{F-2.7}
\begin{align*}
\|\phi(\bL_{\gamma(t)},\bR_{\gamma(t)})\gamma'(t)\|_\HS
&=\Bigg\|\Biggl[{1\over\sqrt{\phi(\lambda_i,\lambda_j)}}\Biggr]_{ij}
\circ(U^*\gamma'(t)U)\Bigg\|_\HS, \\
\|\phi_\mathrm{G}(\bL_{\xi(t)},\bR_{\xi(t)})\xi'(t)\|_\HS
&=\Bigg\|\Biggl[{F^{[1]}(\lambda_i,\lambda_j)\over
\sqrt{F(\lambda_i)F(\lambda_j)}}\Biggr]_{ij}
\circ(U^*\gamma'(t)U)\Bigg\|_\HS.
\end{align*}
Hence the isometry property stated in the theorem implies that
\begin{equation}\label{F-3.2}
{F^{[1]}(x,y)\over\sqrt{F(x)F(y)}}={\alpha\over\sqrt{\phi(x,y)}},\qquad x,y>0.
\end{equation}
When $x=y$ this yields
\begin{equation}\label{F-3.3}
{F'(x)\over F(x)}=\alpha x^{-\theta/2},\qquad x>0.
\end{equation}
Suppose $\theta\ne2$. Then \eqref{F-3.3} is solved as
$$
F(x)=c\exp\biggl({2\alpha\over2-\theta}\,x^{2-\theta\over2}\biggr)
$$
with a constant $c>0$. By this and \eqref{F-3.2}, $\phi(x,1)$ is written as
$$
\phi(x,1)=\alpha^2e^\beta\,{e^{\beta x^r}(x-1)^2\over(e^{\beta x^r}-e^\beta)^2}
\quad\mbox{with}\ r:={2-\theta\over2},\ \beta:={2\alpha\over2-\theta}.
$$
If $0\le\theta<2$, then $\phi(x,1)\to0$ as $x\to\infty$. But this is inconsistent with
$\phi(x,1)=M(x,1)^\theta\ge1$ for $x\ge1$. If $\theta<0$, then $x^{-\theta}\phi(x,1)\to0$
as $x\to\infty$, which is inconsistent with $\phi(x,1)=M(x,1)^\theta\ge x^\theta$ for
$x\ge1$. If $\theta>2$, then $x^{-\theta}\phi(x,1)\to0$ as $x\to0$, which is also
inconsistent with $\phi(x,1)\ge x^\theta$ for $0<x\le1$. Hence all the cases except
$\theta=2$ are excluded. When $\theta=2$,  the solution of \eqref{F-3.3} is
$F(x)=cx^\alpha$ with a constant $c>0$. This and \eqref{F-3.2} determine $M$ as
$M=N_\alpha$. Then $\alpha$ obeys the restriction $\alpha\le2$ as shown before the
theorem. It is immediate to see that the isometry property actually holds if $\theta$,
$\alpha$, $F$ and $M$ are as stated in the theorem.

When $\alpha K^\phi$ is a pull-back of $K^{\phi_\mathrm{G}}$ as above, a geodesic shortest
curve in $(\bP_n,K^\phi)$ joining each $A,B\in\bP_n$ is uniquely determined as the image
under $A\mapsto A^{1/\alpha}$ of that in $(\bP_n,K^{\phi_G})$ joining $A^\alpha,B^\alpha$.
Thanks to \eqref{F-0.4} its explicit form is
$$
\gamma(t):=(A^\alpha\,\#_t\,B^\alpha)^{1/\alpha},\qquad0\le t\le1.
$$
Furthermore, thanks to \eqref{F-0.5} it is also immediate to get
\begin{align*}
\delta_\phi(A,B)&={1\over\alpha}\delta_{\phi_\mathrm{G}}(A^\alpha,B^\alpha)
={1\over\alpha}\|\log(A^{-\alpha/2}B^\alpha A^{-\alpha/2})\|_\HS \\
&=\|\log(A^{-\alpha/2}B^\alpha A^{-\alpha/2})^{1/\alpha}\|_\HS,
\end{align*}
as required.
\end{proof}

It is desirable to prove the uniqueness of geodesic shortest curves for all metrics
treated in Theorem \ref{T-3.1} in the degree $2$ case.

We write $\psi_\alpha$ for $\phi$ arising in Theorem \ref{T-3.3}, i.e.,
$\psi_\alpha(x,y):=N_\alpha(x,y)^2$ for $0<\alpha\le2$. It is worth noting that the
geodesic shortest path and its distance in $(\bP_n,K^{\psi_\alpha})$ converge as
$\alpha\searrow0$ to those in $(\bP_n,K^{\phi_\mathrm{L}})$ where
$\phi_\mathrm{L}(x,y):=M_\mathrm{L}(x,y)^2$, the degree 2 power of the logarithmic mean.
Namely, we have
$$
\lim_{\alpha\searrow0}(A^\alpha\,\#_t\,B^\alpha)^{1/\alpha}
=\exp((1-t)\log A+t\log B),\qquad0\le t\le1,
$$
$$
\lim_{\alpha\searrow0}\|\log(A^{-\alpha/2}B^\alpha A^{-\alpha/2})^{1/\alpha}
\|_\HS=\|\log A-\log B\|_\HS
$$
(see the $\theta=2$ case of Theorem \ref{T-2.1}). In fact, the latter follows from a
version of the Lie-Trotter formula
$$
\lim_{\alpha\to0}(A^{-\alpha/2}B^\alpha A^{-\alpha/2})^{1/\alpha}
=\exp(-\log A+\log B)
$$
and the former is its modification (see \cite[Lemma 3.3]{HP}). It is also worthwhile to
note that $\|\log(A^{-\alpha/2}B^\alpha A^{-\alpha/2})^{1/\alpha}\|_\HS$ is increasing in
$\alpha>0$ due to Araki's log-majorization \cite{Ar} (see also \cite{AH}). Hence
$\delta_{\psi_\alpha}(A,B)$ decreases to $\delta_{\phi_\mathrm{L}}(A,B)$ as
$\alpha\searrow0$ while $\psi_\alpha(x,y)$ increases to $\phi_\mathrm{L}(x,y)$ as
$\alpha\searrow0$. In fact, this kind of comparison property is true in general as we will
see in the following sections.

When $A_1,\dots,A_k\in\bP_n$, since the arithmetic mean ${1\over k}\sum_{j=1}^kA_j$ is the
unique minimizer of $A\in\bP_n\mapsto\sum_{j=1}^k\|A-A_j\|_\HS^2$, it is immediate from
Theorem \ref{T-2.1} that a certain power mean
$\Bigl({1\over k}\sum_{j=1}^kA_j^{2-\theta\over2}\Bigr)^{2\over2-\theta}$
(understood as $\exp\bigl({1\over k}\sum_{j=1}^n\log A_j\bigr)$ if $\theta=2$) is
determined as a unique minimizer of
$A\in\bP_n\mapsto\sum_{j=1}^k\delta_{\phi_\theta}^2(A,A_j)$. Let $G(A_1,\dots,A_k)$ be the
``geometric mean" introduced in \cite{BH,Bh3}, i.e., the unique minimizer of
$A\mapsto\sum_{j=1}^k\delta_{M_\mathrm{G}^2}^2(A,A_j)$. It is also immediately seen from
Theorem \ref{T-3.3} that $G(A_1^\alpha,\dots,A_k^\alpha)^{1/\alpha}$ is a unique
minimizer of $A\mapsto\sum_{j=1}^k\delta_{\psi_\alpha}^2(A,A_j)$, which is regarded as a
$k$-variable extension of $(A^\alpha\,\#\,B^\alpha)^{1/\alpha}$.

\section{Comparison property}
\setcounter{equation}{0}

The aim of this section is to compare the geodesic distances for different Riemannian
metrics related to means in $\MM_0$. A general result of this kind is the following:

\begin{thm}\label{T-4.1}
Let $M^{(1)},M^{(2)}\in\MM_0$, $\theta_1,\theta_2\in\bR$ and
$\phi^{(k)}(x,y):=M^{(k)}(x,y)^{\theta_k}$, $k=1,2$. Then the
following conditions are equivalent:
\begin{itemize}
\item[\rm(i)] $\phi^{(1)}(x,y)\le\phi^{(2)}(x,y)$ for all $x,y>0$;
\item[\rm(ii)] $\theta_1=\theta_2=0$, or $\theta_1=\theta_2>0$ and
$M^{(1)}(x,1)\le M^{(2)}(x,1)$ for all $x>0$, or $\theta_1=\theta_2<0$ and
$M^{(1)}(x,1)\ge M^{(2)}(x,1)$ for all $x>0$;
\item[\rm(iii)] $L_{\phi^{(1)}}(\gamma)\ge L_{\phi^{(2)}}(\gamma)$ for all $C^1$ curve in
$\bP_n$;
\item[\rm(iv)] $\delta_{\phi^{(1)}}(A,B)\ge\delta_{\phi^{(2)}}(A,B)$ for all $A,B\in\bP_n$.
\end{itemize}
\end{thm}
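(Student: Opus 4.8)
The plan is to prove the cycle of implications $\mathrm{(ii)}\Rightarrow\mathrm{(i)}\Rightarrow\mathrm{(iii)}\Rightarrow\mathrm{(iv)}\Rightarrow\mathrm{(i)}\Rightarrow\mathrm{(ii)}$, where the only substantial work is the last step (the ``only if'' direction: pointwise comparison of the kernel functions forces the degrees to coincide). The implications $\mathrm{(i)}\Rightarrow\mathrm{(iii)}$ and $\mathrm{(iii)}\Rightarrow\mathrm{(iv)}$ are immediate from the length formula \eqref{F-1.3}: if $\phi^{(1)}(x,y)\le\phi^{(2)}(x,y)$ everywhere, then using the representation \eqref{F-1.2} of $\phi(\bL_D,\bR_D)^{-1/2}$ via a Schur multiplier with entries $\phi(\lambda_i,\lambda_j)^{-1/2}$, we get $K_D^{\phi^{(1)}}(H,H)\ge K_D^{\phi^{(2)}}(H,H)$ for every foot point $D$ and every tangent $H$, hence $L_{\phi^{(1)}}(\gamma)\ge L_{\phi^{(2)}}(\gamma)$ for every $C^1$ curve, and then taking the infimum over curves from $A$ to $B$ gives $\mathrm{(iv)}$. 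For $\mathrm{(iv)}\Rightarrow\mathrm{(i)}$ I would use infinitesimal/short-curve testing: fix $D\in\bP_n$ and a tangent direction $H$, take the straight segment $\gamma_\varepsilon(t)=D+tH$ for $t\in[0,\varepsilon]$, and note $\delta_\phi(D,D+\varepsilon H)\le L_\phi(\gamma_\varepsilon)$, while a lower bound of the same leading order $\varepsilon\sqrt{K_D^\phi(H,H)}$ holds by a standard first-variation argument (or by comparing with the flat metric on a small ball around $D$). Dividing by $\varepsilon$ and letting $\varepsilon\to0$ yields $K_D^{\phi^{(1)}}(H,H)\ge K_D^{\phi^{(2)}}(H,H)$ for all $D,H$; specializing $D=\diag(x,y,1,\dots,1)$ and $H=E_{12}+E_{21}$ then gives $\phi^{(1)}(x,y)^{-1}\ge\phi^{(2)}(x,y)^{-1}$, i.e.\ $\mathrm{(i)}$.

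It remains to prove $\mathrm{(i)}\Leftrightarrow\mathrm{(ii)}$, which is an inequality purely about means. The direction $\mathrm{(ii)}\Rightarrow\mathrm{(i)}$ is routine: if $\theta_1=\theta_2=\theta>0$ and $M^{(1)}(x,1)\le M^{(2)}(x,1)$, then by homogeneity $M^{(1)}(x,y)\le M^{(2)}(x,y)$ for all $x,y>0$, so raising to the $\theta$th power preserves the inequality; the $\theta<0$ case is the same with the inequality reversed; and $\theta=0$ gives $\phi^{(1)}\equiv\phi^{(2)}\equiv1$. The heart of the matter is $\mathrm{(i)}\Rightarrow\mathrm{(ii)}$: from $M^{(1)}(x,y)^{\theta_1}\le M^{(2)}(x,y)^{\theta_2}$ for all $x,y>0$ I must extract $\theta_1=\theta_2$. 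Setting $x=y$ gives $x^{\theta_1}\le x^{\theta_2}$ for all $x>0$, which forces $\theta_1=\theta_2=:\theta$ (an inequality $x^{\theta_1}\le x^{\theta_2}$ cannot hold both for $x>1$ and for $x<1$ unless $\theta_1=\theta_2$). Once the degrees agree, if $\theta=0$ we are in the first case of $\mathrm{(ii)}$; if $\theta>0$ we may take $\theta$th roots to get $M^{(1)}(x,1)\le M^{(2)}(x,1)$; if $\theta<0$ taking the (negative) $\theta$th root reverses the inequality, giving $M^{(1)}(x,1)\ge M^{(2)}(x,1)$.

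The main obstacle I anticipate is the infinitesimal comparison $\mathrm{(iv)}\Rightarrow\mathrm{(i)}$: one must be careful that the geodesic distance $\delta_\phi(D,D+\varepsilon H)$ really has leading term $\varepsilon\sqrt{K_D^\phi(H,H)}$ rather than something smaller, i.e.\ that short paths cannot ``shortcut'' below the Riemannian norm. This is a general and standard fact for Riemannian (or even length-space) metrics, and can be handled by noting that $K^\phi$ is continuous in the foot point, so on a small closed ball $B(D,r)$ the metric is uniformly comparable to the constant metric $K_D^\phi$, whence $\delta_\phi(D,D+\varepsilon H)\ge(1-o(1))\varepsilon\sqrt{K_D^\phi(H,H)}$; alternatively one quotes that the Riemannian metric tensor is recovered from the distance function as a second-order Taylor coefficient. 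Everything else is either the Schur-multiplier monotonicity from \eqref{F-1.2} or elementary facts about homogeneous means and powers.
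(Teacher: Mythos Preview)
Your proposal is correct and follows essentially the same route as the paper: the paper also treats (i)\,$\Leftrightarrow$\,(ii) as elementary, derives (i)\,$\Rightarrow$\,(iii) from the Schur-multiplier formula \eqref{F-1.2}, and for (iv)\,$\Rightarrow$\,(i) isolates exactly your infinitesimal comparison as a separate lemma (Lemma~\ref{L-4.2}), proved via continuity of $D\mapsto\phi(\bL_D,\bR_D)^{-1}$ on a small ball. Your choice of test data $H=E_{12}+E_{21}$ is in fact slightly cleaner than the paper's $H=\bigl(\begin{smallmatrix}1&1\\1&1\end{smallmatrix}\bigr)\oplus O_{n-2}$, since it avoids the diagonal terms $x^{-\theta_k}$ that would otherwise have to be dealt with before concluding $\phi^{(1)}(x,y)\le\phi^{(2)}(x,y)$.
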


The next lemma is useful to prove the theorem while it is meaningful by itself.

\begin{lemma}\label{L-4.2}
Let $M\in\MM_0$, $\theta\in\bR$ and $\phi(x,y):=M(x,y)^\theta$. Then for every $D\in\bP_n$
and $H\in\bH_n$,
$$
\lim_{\eps\searrow0}{\delta_\phi(D,D+\eps H)\over\eps}
=\|\phi(\bL_D,\bR_D)^{-1/2}H\|_\HS.
$$
\end{lemma}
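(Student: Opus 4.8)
The plan is to establish matching upper and lower bounds on $\delta_\phi(D,D+\eps H)/\eps$, both tending to $\|\phi(\bL_D,\bR_D)^{-1/2}H\|_\HS$ as $\eps\searrow0$. The upper bound is the easy half: taking the straight-line curve $\gamma_\eps(t):=D+t\eps H$, $0\le t\le1$, which lies in $\bP_n$ for $\eps$ small, we have $\delta_\phi(D,D+\eps H)\le L_\phi(\gamma_\eps)=\eps\int_0^1\|\phi(\bL_{\gamma_\eps(t)},\bR_{\gamma_\eps(t)})^{-1/2}H\|_\HS\,dt$. Since $(D,X)\mapsto\phi(\bL_D,\bR_D)^{-1/2}X$ depends continuously on $D\in\bP_n$ (the map $D\mapsto\phi(\bL_D,\bR_D)^{-1/2}$ being continuous in operator norm, as $\phi$ is smooth and positive), the integrand converges uniformly in $t$ to $\|\phi(\bL_D,\bR_D)^{-1/2}H\|_\HS$, giving $\limsup_{\eps\searrow0}\delta_\phi(D,D+\eps H)/\eps\le\|\phi(\bL_D,\bR_D)^{-1/2}H\|_\HS$.

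For the lower bound, the idea is that near the foot point $D$ the metric $K^\phi$ is uniformly comparable to the constant-coefficient inner product $K_D^\phi$, so lengths of short curves cannot be much smaller than the corresponding Euclidean-type lengths. Concretely, fix a small ball $B_r(D)\subset\bP_n$ and let $\eta>0$. By continuity of $D'\mapsto\phi(\bL_{D'},\bR_{D'})^{-1/2}$ one can choose $r$ so small that $\|\phi(\bL_{D'},\bR_{D'})^{-1/2}X\|_\HS\ge(1-\eta)\|\phi(\bL_D,\bR_D)^{-1/2}X\|_\HS$ for all $D'\in B_r(D)$ and all $X\in\bH_n$ (this is a statement about two positive operators on the finite-dimensional space $\bH_n$ being close). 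Now for $\eps$ small enough that $D+\eps H\in B_{r/2}(D)$, any curve $\gamma$ from $D$ to $D+\eps H$ of length less than, say, $r/2$ must stay inside $B_r(D)$; hence $L_\phi(\gamma)\ge(1-\eta)\int_0^1\|\phi(\bL_D,\bR_D)^{-1/2}\gamma'(t)\|_\HS\,dt\ge(1-\eta)\|\phi(\bL_D,\bR_D)^{-1/2}(\gamma(1)-\gamma(0))\|_\HS=(1-\eta)\,\eps\,\|\phi(\bL_D,\bR_D)^{-1/2}H\|_\HS$, where the middle inequality is the triangle inequality for the fixed norm $X\mapsto\|\phi(\bL_D,\bR_D)^{-1/2}X\|_\HS$ on $\bH_n$ applied to the curve integral. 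Curves of length $\ge r/2$ are irrelevant since $\delta_\phi(D,D+\eps H)\le\eps\cdot C\|H\|_\HS<r/2$ for $\eps$ small by the upper bound. Taking the infimum over $\gamma$ and then $\eps\searrow0$ gives $\liminf_{\eps\searrow0}\delta_\phi(D,D+\eps H)/\eps\ge(1-\eta)\|\phi(\bL_D,\bR_D)^{-1/2}H\|_\HS$, and letting $\eta\searrow0$ finishes the proof.

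The main obstacle is the lower bound, and within it the point requiring care is the "confinement" argument: one must be sure that a near-minimizing curve stays in the small ball $B_r(D)$ where the metric comparison holds, which is precisely why the upper bound is proved first (to bound $\delta_\phi$ and hence force short curves to be short in the Euclidean sense too). The remaining ingredients — smoothness and positivity of $\phi$ yielding norm-continuity of $D'\mapsto\phi(\bL_{D'},\bR_{D'})^{-1/2}$, and the triangle inequality for a fixed Hilbertian norm applied under the length integral — are routine.
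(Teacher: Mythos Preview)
Your approach is essentially the paper's: upper bound via the straight segment and continuity of $D'\mapsto\phi(\bL_{D'},\bR_{D'})^{-1}$, lower bound by freezing the metric at $D$ up to a small error on a neighborhood and observing that near-minimizing curves cannot escape that neighborhood. The one point where your writeup is loose is the confinement step: the assertion that ``any curve of $L_\phi$-length $<r/2$ stays in $B_r(D)$'' mixes the $K^\phi$-length with the Euclidean radius and is not literally correct without a comparison constant (there is no reason the ratio should be $1/2$); the paper handles this by invoking the standard fact that $\delta_\phi$ and $\|\cdot\|_\HS$ induce the same topology on $\bP_n$ \cite[Chapter~IV, Proposition~3.5]{KN}, first choosing $r_1$ so that the operator comparison $\|\phi(\bL_A,\bR_A)^{-1}-\phi(\bL_D,\bR_D)^{-1}\|_\infty<\rho$ holds on the $\|\cdot\|_\HS$-ball of radius $r_1$, and then $r_0$ so that $\delta_\phi(A,D)<r_0$ forces $\|A-D\|_\HS<r_1$. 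With that adjustment (or equivalently, replacing your threshold $r/2$ by $c(1-\eta)r$ where $c>0$ is a lower bound for $\|\phi(\bL_D,\bR_D)^{-1/2}X\|_\HS/\|X\|_\HS$), your argument goes through.
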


\begin{proof}
We may assume that $D=\diag(\lambda_1,\dots,\lambda_n)$. Notice that
$$
\phi(\bL_D,\bR_D)^{-1}H=\bigl[\phi(\lambda_i,\lambda_j)^{-1}\bigr]_{ij}\circ H
=\bigl[M(\lambda_i,\lambda_j)^{-\theta}\bigr]_{ij}\circ H
$$
so that
$$
\phi(\bL_D,\bR_D)^{-1}\ge\biggl(\min_{1\le i\le n}\lambda_i^{-\theta}\biggr)\bI
\quad\mbox{on\ \ $(\bM_n,\<\cdot,\cdot\>_\HS)$},
$$
where $\bI$ is the identity operator on $\bM_n$. For each $\rho>0$ with
$\rho<\min_i\lambda_i^{-\theta}$, since $A\in\bP_n\mapsto\phi(\bL_A,\bR_A)$ is continuous,
there exists an $r_1>0$ such that if $A\in\bP_n$ and $\|A-D\|_\HS<r_1$ then
$$
\|\phi(\bL_A,\bR_A)^{-1}-\phi(\bL_D,\bR_D)^{-1}\|_\infty<\rho,
$$
where $\|\cdot\|_\infty$ denotes the operator norm for operators on
$(\bM_n,\<\cdot,\cdot\>_\HS)$. Furthermore, since $\delta_\phi$ and $\|\cdot\|_\HS$ define
the same topology on $\bP_n$ (see \cite[Chapter IV, Proposition 3.5]{KN}), there exists an
$r_0>0$ such that if $A\in\bP_n$ and $\delta_\phi(A,D)<r_0$ then $\|A-D\|_\HS<r_1$.

Now let $H\in\bH_n$ and $\eps>0$ be sufficiently small so that
$\delta_\phi(D,D+\eps H)<r_0$ and $\eps\|H\|_\HS<r_1$. Let $\gamma:[0,1]\to\bP_n$ be any
$C^1$ curve from $D$ to $D+\eps H$ such that $L_\phi(\gamma)<r_0$. Since
$\delta_\phi(\gamma(t),D)<r_0$ and so $\|\gamma(t)-D\|_\HS<r_1$ for all $0\le t\le1$, we
get
\begin{align*}
L_\phi(\gamma)&=\int_0^1\sqrt{\<\gamma'(t),\phi(\bL_{\gamma(t)},\bR_{\gamma(t)})^{-1}
\gamma'(t)\>_\HS}\,dt \\
&\ge\int_0^1\sqrt{\<\gamma'(t),(\phi(\bL_D,\bR_D)^{-1}-\rho\bI)
\gamma'(t)\>_\HS}\,dt \\
&=\int_0^1\|(\phi(\bL_D,\bR_D)^{-1}-\rho\bI)^{1/2}\gamma'(t)\|_\HS\,dt \\
&\ge\|(\phi(\bL_D,\bR_D)^{-1}-\rho\bI)^{1/2}(\eps H)\|_\HS \\
&=\eps\Big\|\bigl[(\phi(\lambda_i,\lambda_j)^{-1}-\rho)^{1/2}
\bigr]_{ij}\circ H\Big\|_\HS.
\end{align*}
In the above, note that $\phi(\bL_D,\bR_D)^{-1}-\rho\bI\ge0$ on the Hilbert space
$(\bM_n,\<\cdot,\cdot\>_\HS)$ since
$$
\rho<\min_i\lambda_i^{-\theta}=\min_{i,j}\phi(\lambda_i,\lambda_j)^{-1}.
$$
Also, the second inequality above follows since
$\int_0^1\|(\phi(\bL_D,\bR_D)^{-1}-\rho I)^{1/2}\gamma'(t)\|_\HS\,dt$ is the length in the
Euclidean space $(\bH_n,\|\cdot\|_\HS)$ and it is shortest if $\gamma$ is the segment
between $D$ and $D+\eps H$. Taking the infimum of $L_\phi(\gamma)$ gives
$$
\delta_\phi(D,D+\eps H)\ge\eps\Big\|\bigl[(\phi(\lambda_i,\lambda_j)^{-1}-\rho)^{1/2}
\bigr]_{ij}\circ H\Big\|_\HS.
$$
On the other hand, let $\gamma_0(t):=D+t\eps H$. Since
$\|\gamma_0(t)-D\|_\HS\le\eps\|H\|_\HS<r_1$ for $0\le t\le1$, we get
\begin{align*}
\delta_\phi(D,D+\eps H)&\le L_\phi(\gamma_0) \\
&=\int_0^1\sqrt{\<\gamma_0'(t),\phi(\bL_{\gamma_0(t)},\bR_{\gamma_0(t)})^{-1}
\gamma_0'(t)\>_\HS}\,dt \\
&\le\int_0^1\sqrt{\<\gamma_0'(t),(\phi(\bL_D,\bR_D)^{-1}+\rho\bI)
\gamma_0'(t)\>_\HS}\,dt \\
&=\|(\phi(\bL_D,\bR_D)^{-1}+\rho\bI)^{1/2}(\eps H)\|_\HS \\
&=\eps\Big\|\bigl[(\phi(\lambda_i,\lambda_j)^{-1}+\rho)^{1/2}
\bigr]_{ij}\circ H\Big\|_\HS.
\end{align*}
Since $\rho$ is arbitrary,
$$
\lim_{\eps\searrow0}{\delta_\phi(D,D+\eps H)\over\eps}
=\Big\|\bigl[\phi(\lambda_i,\lambda_j)^{-1/2}\bigr]_{ij}\circ H\Big\|_\HS
=\|\phi(\bL_D,\bL_D)^{-1/2}\circ H\|_\HS.
$$
\end{proof}

\noindent
{\it Proof of Theorem \ref{T-4.1}.}\enspace
First, (i) $\Leftrightarrow$ (ii) is easy to check. To prove (i) $\Rightarrow$ (iii), it
suffices to show that (i) implies that
\begin{equation}\label{F-4.1}
\|\phi^{(1)}(\bL_D,\bR_D)^{-1/2}H\|_\HS\ge\|\phi^{(2)}(\bL_D,\bR_D)^{-1/2}H\|_\HS
\end{equation}
for all $D\in\bP_n$ and $H\in\bH_n$. But this implication is immediately seen thanks to
\eqref{F-1.2}. (iii) $\Rightarrow$ (iv) is obvious. Finally, assume (iv) and apply
Lemma \ref{L-4.2} to get \eqref{F-4.1} for all $D\in\bP_n$ and $H\in\bH_n$. When
$D:=\bmatrix x&0\\0&y\endbmatrix\oplus I_{n-2}$ with $x,y>0$ and
$H:=\bmatrix1&1\\1&1\endbmatrix\oplus O_{n-2}$, \eqref{F-4.1} means that
$$
\sqrt{x^{-1}+2\phi^{(1)}(x,y)^{-1}+y^{-1}}
\ge\sqrt{x^{-1}+2\phi^{(2)}(x,y)^{-1}+y^{-1}},
$$
which gives (i).\qed

\begin{remark}\label{E-4.3}{\rm
Let $\cD_n:=\{D\in\bP_n:\Tr D=1\}$, a submanifold of $\bP_n$. One can replace
$(\bP_n,\bH_n)$ by $(\cD_n,\bH_n\ominus\bR I)$ and slightly modify the above proof to
show that the above (i)--(iv) are also equivalent to the following conditions reduced on
$\cD_n$:
\begin{itemize}
\item[\rm(iii$'$)] $L_{\phi^{(1)}}(\gamma)\ge L_{\phi^{(2)}}(\gamma)$ for all $C^1$ curve
in $\cD_n$;
\item[\rm(iv$'$)] $\delta_{\phi^{(1)}}^\cD(A,B)\ge\delta_{\phi^{(2)}}^\cD(A,B)$ for all
$A,B\in\cD_n$, where $\delta_\phi^\cD(A,B)$ denotes the geodesic distance in the
Riemannian manifold $(\cD_n,K^\phi)$.
\end{itemize}
}\end{remark}

By Theorems \ref{T-2.1} and \ref{T-4.1} we have:

\begin{cor}\label{C-4.4}
Let $M\in\MM_0$, $\theta\in\bR$ and $\phi(x,y):=M(x,y)^\theta$. If
$\phi(x,y)\le\phi_\theta(x,y)$ for all $x,y>0$ (see Section 2 for $\phi_\theta$), then for
every $A,B\in\bP_n$,
\begin{equation}\label{F-4.2}
\delta_\phi(A,B)\ge\delta_{\phi_\theta}(A,B)
=\begin{cases}
{2\over|2-\theta|}\|A^{2-\theta\over2}-B^{2-\theta\over2}\|_\HS
& \text{if $\theta\ne2$}, \\
\|\log A-\log B\|_\HS & \text{if $\theta=2$}.
\end{cases}
\end{equation}
If $\phi(x,y)\ge\phi_\theta(x,y)$ for all $x,y>0$, then the reversed inequality holds in
\eqref{F-4.2}.
\end{cor}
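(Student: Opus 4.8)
The plan is to deduce Corollary \ref{C-4.4} directly by combining the two preceding results, with essentially no new computation. First I would invoke Theorem \ref{T-4.1} with the specific choice $M^{(1)}:=M$, $\theta_1:=\theta$, and $M^{(2)}:=M_\theta$, $\theta_2:=\theta$, so that $\phi^{(1)}=\phi$ and $\phi^{(2)}=\phi_\theta$. The hypothesis $\phi(x,y)\le\phi_\theta(x,y)$ for all $x,y>0$ is exactly condition (i) of that theorem, hence condition (iv) holds, i.e.\ $\delta_\phi(A,B)\ge\delta_{\phi_\theta}(A,B)$ for all $A,B\in\bP_n$. This gives the inequality in \eqref{F-4.2}.

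Next I would supply the explicit value of $\delta_{\phi_\theta}(A,B)$ appearing on the right-hand side. This is where Theorem \ref{T-2.1} enters: by construction $\phi_\theta(x,y)=M_\theta(x,y)^\theta$ with $M_\theta$ given by \eqref{F-2.2}, which is precisely the mean for which the metric $K^{\phi_\theta}$ is a pull-back of the Euclidean metric. Therefore the geodesic-distance formula in the conclusion of Theorem \ref{T-2.1} applies verbatim and yields
$$
\delta_{\phi_\theta}(A,B)=\begin{cases}
{2\over|2-\theta|}\|A^{2-\theta\over2}-B^{2-\theta\over2}\|_\HS & \text{if $\theta\ne2$},\\
\|\log A-\log B\|_\HS & \text{if $\theta=2$},
\end{cases}
$$
which is the stated equality in \eqref{F-4.2}. (One minor point: Theorem \ref{T-2.1} formally assumes $\theta\ne0$; for $\theta=0$ one has $\phi_0\equiv1$, the Euclidean metric itself, and the formula ${2\over|2-0|}\|A-B\|_\HS=\|A-B\|_\HS$ is trivially correct, so the case is covered.)

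Finally, for the reversed inequality I would apply Theorem \ref{T-4.1} again, this time with the roles interchanged: take $M^{(1)}:=M_\theta$, $M^{(2)}:=M$, both with exponent $\theta$, so that the hypothesis $\phi(x,y)\ge\phi_\theta(x,y)$ becomes condition (i) and condition (iv) gives $\delta_{\phi_\theta}(A,B)\ge\delta_\phi(A,B)$, i.e.\ the reversed inequality, with the same closed-form expression for $\delta_{\phi_\theta}$ as before.

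There is essentially no obstacle here: the corollary is a packaging of Theorems \ref{T-2.1} and \ref{T-4.1}. The only thing requiring a moment's care is making sure the hypotheses of Theorem \ref{T-4.1} are met — in particular that both kernel functions are powers \emph{with the same exponent} $\theta$ of means in $\MM_0$, which is automatic here since $M,M_\theta\in\MM_0$ and the exponent is $\theta$ in both cases — and checking the $\theta=0$ boundary case separately as noted above.
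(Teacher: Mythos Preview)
Your proposal is correct and matches the paper's approach exactly: the paper states the corollary immediately after the sentence ``By Theorems \ref{T-2.1} and \ref{T-4.1} we have,'' so the intended proof is precisely the combination you describe. Your extra remark handling the boundary case $\theta=0$ is a nice touch that the paper leaves implicit.
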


The next theorem is a refinement of Corollary \ref{C-4.4} with strict inequality under
additional assumptions.

\begin{thm}\label{T-4.5}
Let $M$, $\theta$ and $\phi$ be as in Corollary \ref{C-4.4}. Assume that $A,B\in\bP_n$ are
not commuting, i.e., $AB\ne BA$. If $\phi(x,y)<\phi_\theta(x,y)$ for all $x,y>0$ with
$x\ne y$, then $\delta_\phi(A,B)>\delta_{\phi_\theta}(A,B)$. Similarly,
$\delta_\phi(A,B)<\delta_{\phi_\theta}(A,B)$ if $\phi(x,y)>\phi_\theta(x,y)$ for all
$x,y>0$ with $x\ne y$.
\end{thm}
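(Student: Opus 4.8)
The plan is to upgrade the non-strict inequality of Corollary \ref{C-4.4} to a strict one by localizing the loss in length to the off-diagonal (``quantum'') part of the tangent vectors along any curve, and exploiting that non-commutativity of $A,B$ forces every joining curve to have a genuinely off-diagonal velocity on a set of positive measure. Concretely, I would treat the case $\phi(x,y)<\phi_\theta(x,y)$ for $x\ne y$ (the reversed case being identical). Fix $A,B\in\bP_n$ with $AB\ne BA$ and let $\gamma:[0,1]\to\bP_n$ be any $C^1$ curve from $A$ to $B$. At each $t$, diagonalize $\gamma(t)=U(t)\diag(\lambda_1(t),\dots,\lambda_n(t))U(t)^*$ and write $X(t):=U(t)^*\gamma'(t)U(t)$. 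By \eqref{F-1.2},
\begin{align*}
K_{\gamma(t)}^\phi(\gamma'(t),\gamma'(t))
&=\sum_{i,j}\frac{|X(t)_{ij}|^2}{\phi(\lambda_i(t),\lambda_j(t))},\\
K_{\gamma(t)}^{\phi_\theta}(\gamma'(t),\gamma'(t))
&=\sum_{i,j}\frac{|X(t)_{ij}|^2}{\phi_\theta(\lambda_i(t),\lambda_j(t))}.
\end{align*}
Since $\phi(x,x)=\phi_\theta(x,x)=x^\theta$, the diagonal terms $i=j$ coincide, while for $i\ne j$ with $\lambda_i(t)\ne\lambda_j(t)$ we have a strict pointwise gain $\phi_\theta(\lambda_i,\lambda_j)^{-1}<\phi(\lambda_i,\lambda_j)^{-1}$. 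Hence $K_{\gamma(t)}^\phi(\gamma'(t),\gamma'(t))\ge K_{\gamma(t)}^{\phi_\theta}(\gamma'(t),\gamma'(t))$ always, with equality at a given $t$ only if $X(t)_{ij}=0$ whenever $\lambda_i(t)\ne\lambda_j(t)$ — equivalently, only if $\gamma'(t)$ commutes with $\gamma(t)$, i.e. $\gamma'(t)\in T_{\gamma(t)}^c$.

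The next step is to show that this equality cannot hold for almost every $t$. Suppose it did; then $\gamma'(t)$ commutes with $\gamma(t)$ for a.e. $t$, so $\frac{d}{dt}[\gamma(t)^{-1}\gamma'(t)\gamma(t)-\gamma'(t)]$-type arguments, or more directly the fact that along such a curve the spectral projections of $\gamma(t)$ are locally constant, force $\gamma(t)$ to have the same eigenprojections as $A$ for all $t$ in each interval where no eigenvalue crossing occurs; tracking through the (at most finitely many) crossings one concludes $A$ and $B=\gamma(1)$ are simultaneously diagonalizable, contradicting $AB\ne BA$. A clean way to package this: if $\gamma'(t)$ commutes with $\gamma(t)$ for all $t$, then $\gamma(t)=\exp(\eta(t))$ with $\eta$ a curve in a fixed maximal abelian subalgebra (because $\frac{d}{dt}\log\gamma(t)=\gamma(t)^{-1}\gamma'(t)$ in the commuting case), whence $A$ and $B$ lie in that subalgebra together. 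Thus there is a set $E\subset[0,1]$ of positive Lebesgue measure on which $\gamma'(t)\notin T_{\gamma(t)}^c$, so that on $E$ the integrand strictly satisfies $\sqrt{K_{\gamma(t)}^\phi}>\sqrt{K_{\gamma(t)}^{\phi_\theta}}$.

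With this in hand I would conclude as follows. For a near-optimal curve $\gamma$ for $\delta_\phi(A,B)$ one already has $L_\phi(\gamma)\ge L_{\phi_\theta}(\gamma)\ge\delta_{\phi_\theta}(A,B)$; the issue is that the infimum for $\phi$ may not be attained, so one cannot simply plug in the minimizer. Instead I would argue by contradiction: if $\delta_\phi(A,B)=\delta_{\phi_\theta}(A,B)$, pick curves $\gamma_k$ with $L_\phi(\gamma_k)\to\delta_\phi(A,B)$; then $L_{\phi_\theta}(\gamma_k)\to\delta_{\phi_\theta}(A,B)$ as well, so each $\gamma_k$ is an asymptotically minimizing sequence for $\phi_\theta$. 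In the $\theta=2$ case the $\phi_\theta$-minimizer exists (Theorem \ref{T-3.1}) and is unique, so $\gamma_k$ converges to it, a commuting-type geodesic $t\mapsto\exp((1-t)\log A+t\log B)$; but along that limit curve $\gamma'$ does \emph{not} commute with $\gamma$ (since $AB\ne BA$), so by the positive-measure strict gain and a lower-semicontinuity/Fatou argument $\liminf_k L_\phi(\gamma_k)>\delta_{\phi_\theta}(A,B)$, a contradiction. In the $\theta\ne2$ case, after the isometry of Theorem \ref{T-2.1} the $\phi_\theta$-geodesic is the pull-back of a Euclidean segment; the same reasoning applies since that segment, pulled back, is not a commuting curve when $AB\ne BA$. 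I expect the main obstacle to be precisely this last point: making the passage from ``strict pointwise gain on a positive-measure set'' to ``strict gap between the infima'' rigorous despite possible non-attainment of $\delta_\phi$. The cleanest route is to quantify the gain — e.g. show that any curve with $L_{\phi_\theta}(\gamma)$ within $\varepsilon$ of $\delta_{\phi_\theta}(A,B)$ must, by a compactness/continuity estimate, spend a definite amount of ``off-diagonal'' velocity (bounded below in terms of the commutator $\|[A,B]\|$ and $\varepsilon$), yielding a uniform lower bound $L_\phi(\gamma)\ge\delta_{\phi_\theta}(A,B)+c$ with $c>0$ independent of $\gamma$, hence $\delta_\phi(A,B)\ge\delta_{\phi_\theta}(A,B)+c$.
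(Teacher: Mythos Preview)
Your overall strategy is exactly the paper's: argue by contradiction, take a $\phi$-minimizing sequence $\gamma_k$ which is then also $\phi_\theta$-minimizing, pass to the explicit $\phi_\theta$-geodesic $\gamma_0$ of Theorem~\ref{T-2.1}, and derive a contradiction from $L_\phi(\gamma_0)>L_{\phi_\theta}(\gamma_0)$ since $\gamma_0'(t)$ never commutes with $\gamma_0(t)$. Two points need attention.

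The step ``$\gamma_k$ converges to $\gamma_0$'' is the technical heart, and you gloss over it; it is not automatic that an asymptotically length-minimizing sequence converges to the unique geodesic in any sense strong enough for a Fatou argument on the $\phi$-integrand. The paper fills this by passing to $\xi_k:=\gamma_k^{(2-\theta)/2}$ (so that $\phi_\theta$-length becomes Euclidean length via Theorem~\ref{T-2.1}), reparametrizing each $\xi_k$ to constant Euclidean speed, and then observing that $\int_0^1\bigl(1-\<\xi_k'(t)/\|\xi_k'(t)\|_\HS,H_0\>_\HS\bigr)\,dt\to0$, where $H_0$ is the unit direction of the segment from $A^{(2-\theta)/2}$ to $B^{(2-\theta)/2}$. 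Along a subsequence this forces $\xi_k'(t)\to\xi_0'(t)$ a.e., and then by integration $\xi_k\to\xi_0$ uniformly. With both convergences in hand one applies \eqref{F-2.7} to get a.e.\ convergence of the $\phi$-integrands, and Fatou gives $L_\phi(\gamma_0)\le\liminf_kL_\phi(\gamma_k)=\delta_{\phi_\theta}(A,B)=L_{\phi_\theta}(\gamma_0)$, contradicting Lemma~\ref{L-4.6}. Your proposed ``cleanest route'' via a uniform quantitative gap $c>0$ would require a compactness argument for near-minimizing curves at least as delicate as this one.

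Two side remarks. Your digression showing that a curve with $\gamma'(t)\in T_{\gamma(t)}^c$ for a.e.\ $t$ forces $AB=BA$ is unnecessary: the paper only uses non-commutation along the single explicit curve $\gamma_0$, which is immediate from $AB\ne BA$. And the two halves of the theorem are not symmetric as you suggest: when $\phi>\phi_\theta$ no contradiction or sequence is needed at all, since $\delta_\phi(A,B)\le L_\phi(\gamma_0)<L_{\phi_\theta}(\gamma_0)=\delta_{\phi_\theta}(A,B)$ directly.
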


To prove the theorem, we need a simple lemma.

\begin{lemma}\label{L-4.6}
Let $\phi^{(k)}$, $k=1,2$, be as in Theorem \ref{T-4.1}, and assume that
$\phi^{(1)}(x,y)<\phi^{(2)}(x,y)$ for all $x,y>0$ with $x\ne y$. If $\gamma:[0,1]\to\bP_n$
is a $C^1$ curve and $\gamma(t)\gamma'(t)\ne\gamma'(t)\gamma(t)$ for some $t\in[0,1]$,
then $L_{\phi^{(1)}}(\gamma)>L_{\phi^{(2)}}(\gamma)$.
\end{lemma}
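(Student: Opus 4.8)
The plan is to compare the integrands in the length formula \eqref{F-1.3} pointwise in $t$, using the orthogonal decomposition $T_D = T_D^c \oplus T_D^q$ from Lemma \ref{L-1.1}. Fix $t\in[0,1]$ and write $D:=\gamma(t)$, $H:=\gamma'(t)$. Diagonalize $D = U\diag(\lambda_1,\dots,\lambda_n)U^*$ and set $\tilde H := U^*HU = [\tilde H_{ij}]$. By \eqref{F-1.2} we have
$$
K_D^{\phi^{(k)}}(H,H) = \sum_{i,j} \frac{|\tilde H_{ij}|^2}{\phi^{(k)}(\lambda_i,\lambda_j)},\qquad k=1,2.
$$
Since $\phi^{(1)}(x,y) \le \phi^{(2)}(x,y)$ for all $x,y$ (with equality allowed only when $x=y$), each term with $i,j$ such that $\lambda_i \ne \lambda_j$ satisfies $\phi^{(1)}(\lambda_i,\lambda_j)^{-1} > \phi^{(2)}(\lambda_i,\lambda_j)^{-1}$ whenever $\tilde H_{ij}\ne 0$, while the terms with $\lambda_i = \lambda_j$ are equal. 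Therefore $K_D^{\phi^{(1)}}(H,H) \ge K_D^{\phi^{(2)}}(H,H)$ always, with strict inequality precisely when there exist $i,j$ with $\lambda_i\ne\lambda_j$ and $\tilde H_{ij}\ne 0$.

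The key point is to translate that last condition into the hypothesis $D H \ne H D$. Indeed $[D,H]$ in the $U$-basis has $(i,j)$ entry $(\lambda_i - \lambda_j)\tilde H_{ij}$, so $DH = HD$ if and only if $\tilde H_{ij} = 0$ for all $i,j$ with $\lambda_i \ne \lambda_j$ — equivalently, $H \in T_D^c$. Hence if $\gamma(t)\gamma'(t) \ne \gamma'(t)\gamma(t)$, then there is at least one pair $i,j$ with $\lambda_i\ne\lambda_j$ and $\tilde H_{ij}\ne 0$, giving the strict inequality $K_{\gamma(t)}^{\phi^{(1)}}(\gamma'(t),\gamma'(t)) > K_{\gamma(t)}^{\phi^{(2)}}(\gamma'(t),\gamma'(t))$ at that $t$. (Alternatively one can invoke Lemma \ref{L-1.1}: on the $T_D^c$ component both metrics agree since they depend only on $\hat\phi(x)=1/\phi(x,x)$ and $\phi^{(1)}(x,x)=M^{(1)}(x,x)^{\theta_1}=x^{\theta_1}$, and by (ii)$\Leftrightarrow$(i) of Theorem \ref{T-4.1} we have $\theta_1=\theta_2$ so these coincide; on the $T_D^q$ component the strict inequality $\tilde\phi^{(1)} > \tilde\phi^{(2)}$ off the diagonal forces a strict gap whenever the $T_D^q$-component of $\gamma'(t)$ is nonzero, which is exactly the non-commuting condition.)

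Finally, to pass from a pointwise strict inequality at a single $t_0$ to the integrated strict inequality $L_{\phi^{(1)}}(\gamma) > L_{\phi^{(2)}}(\gamma)$, I use continuity: the functions $t\mapsto \sqrt{K_{\gamma(t)}^{\phi^{(k)}}(\gamma'(t),\gamma'(t))}$ are continuous on $[0,1]$, and their difference is $\ge 0$ everywhere and $>0$ at $t_0$, hence $>0$ on a nondegenerate subinterval around $t_0$ (intersected with $[0,1]$); integrating a nonnegative continuous function that is strictly positive on a set of positive measure yields a strictly positive integral, so $\int_0^1 \big(\sqrt{K^{\phi^{(1)}}} - \sqrt{K^{\phi^{(2)}}}\big)\,dt > 0$. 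The only mild subtlety — and the main thing to be careful about — is the boundary case $t_0\in\{0,1\}$, which is handled by the usual one-sided neighborhood argument, and the implicit use that $\theta_1=\theta_2$ (needed so that the diagonal terms cancel exactly); this is already guaranteed since $\phi^{(1)}<\phi^{(2)}$ off the diagonal together with $\phi^{(k)}(x,x)=x^{\theta_k}$ forces $\theta_1=\theta_2$ by taking $x\to y$.
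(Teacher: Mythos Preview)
Your proof is correct and follows essentially the same route as the paper: diagonalize $D=\gamma(t)$, express $K_D^{\phi^{(k)}}(H,H)$ as the sum $\sum_{i,j}|\tilde H_{ij}|^2/\phi^{(k)}(\lambda_i,\lambda_j)$, and observe that $DH\ne HD$ forces some $\tilde H_{ij}\ne0$ with $\lambda_i\ne\lambda_j$, giving strict inequality of the integrands at that $t$. You are in fact more explicit than the paper about the continuity step needed to pass from a pointwise strict inequality at a single $t_0$ to the strict integral inequality; the paper simply writes ``it suffices to show'' the pointwise statement and leaves that passage implicit. (One small remark: you do not actually need the diagonal terms to cancel \emph{exactly}; the weak inequality $\phi^{(1)}(x,x)\le\phi^{(2)}(x,x)$, which follows by continuity from the strict off-diagonal inequality, already suffices.)
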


\begin{proof}
It suffices to show that if $D\in\bP_n$ and $H\in\bH_n$ are not commuting, then
$$
\|\phi^{(1)}(\bL_D,\bR_D)^{-1/2}H\|_\HS>\|\phi^{(2)}(\bL_D,\bR_D)^{-1/2}H\|_\HS.
$$
To prove this, we may assume that $D=\diag(\lambda_1,\dots,\lambda_n)$. Then $DH\ne HD$
means that $H_{ij}\ne0$ for some $(i,j)$ with $\lambda_i\ne\lambda_j$, where $H=[H_{ij}]$.
Since $\phi^{(1)}(\lambda_i,\lambda_j)<\phi^{(2)}(\lambda_i,\lambda_j)$ for such $(i,j)$, \
we obviously get
$$
\|\phi^{(1)}(\bL_D,\bR_D)^{-1/2}H\|_\HS^2
=\sum_{i,j=1}^n{|H_{ij}|^2\over\phi^{(1)}(\lambda_i,\lambda_j)}
>\sum_{i,j=1}^n{|H_{ij}|^2\over\phi^{(2)}(\lambda_i,\lambda_j)}
=\|\phi^{(2)}(\bL_D,\bR_D)^{-1/2}H\|_\HS^2,
$$
as required.
\end{proof}

\noindent
{\it Proof of Theorem \ref{T-4.5}.}\enspace
Assume that $\phi(x,y)<\phi_\theta(x,y)$ for all $x\ne y$ and on the contrary that
$\delta_\phi(A,B)=\delta_{\phi_\theta}(A,B)$. Choose a sequence $\{\gamma_k\}$ of $C^1$
curves from $A$ to $B$ such that $L_\phi(\gamma_k)\to\delta_{\phi_\theta}(A,B)$ as
$k\to\infty$. The following proof is given in the case $\theta\ne2$ but the case
$\theta=2$ is similar with obvious modifications. Let
$\xi_k(t):=\gamma_k(t)^{2-\theta\over2}$ for $0\le t\le1$. Since Theorem \ref{T-4.1} gives
$$
\delta_{\phi_\theta}(A,B)\le L_{\phi_\theta}(\gamma_k)\le L_\phi(\gamma_k)
\longrightarrow\delta_{\phi_\theta}(A,B)
$$
so that by Theorem \ref{T-2.1}
$$
{|2-\theta|\over2}L_{\phi_\theta}(\gamma_k)=\int_0^1\|\xi_k'(t)\|_\HS\,dt
\longrightarrow\|A^{2-\theta\over2}-B^{2-\theta\over2}\|_\HS
\quad\mbox{as $k\to\infty$}.
$$
By reparametrizing $\xi_k(t)$'s (hence $\gamma_k(t)$'s) one may assume that each $\xi_k$
has a constant speed, i.e.,
$$
\|\xi_k'(t)\|_\HS={|2-\theta|\over2}L_{\phi_\theta}(\gamma_k),
\qquad0\le t\le1.
$$
Set $\alpha:=\|A^{2-\theta\over2}-B^{2-\theta\over2}\|_\HS$ and
$H_0:=\alpha^{-1}\bigl(B^{2-\theta\over2}-A^{2-\theta\over2}\bigr)$, a unit vector in
$(\bH_n,\<\cdot,\cdot\>_\HS)$. We notice
\begin{align*}
\int_0^1\Biggl(1-\biggl\<{\xi_k'(t)\over\|\xi_k'(t)\|_\HS},H_0\biggr\>_\HS\Biggr)\,dt
&=1-{2\over|2-\theta|L_{\phi_\theta}(\gamma_k)}
\bigl\<B^{2-\theta\over2}-A^{2-\theta\over2},H_0\bigr\>_\HS \\
&=1-{2\alpha\over|2-\theta|L_{\phi_\theta}(\gamma_k)}
\longrightarrow0\quad\mbox{as $k\to\infty$}.
\end{align*}
Hence, by taking a subsequence, one can assume that
$$
\bigg\|{\xi_k'(t)\over\|\xi_k'(t)\|_\HS}-H_0\bigg\|_\HS^2
=2\Biggl(1-\biggl\<{\xi_k'(t)\over\|\xi_k'(t)\|_\HS},H_0\biggr\>_\HS\Biggr)
\longrightarrow0\quad\mbox{a.e. $t\in[0,1]$}.
$$
Since $\|\xi_k'(t)\|_\HS=2^{-1}|2-\theta|L_{\phi_\theta}(\gamma_k)\to\alpha$, this means
that
\begin{equation}\label{F-4.3}
\|\xi_k'(t)-\bigl(B^{2-\theta\over2}-A^{2-\theta\over2}\bigr)\|_\HS
\longrightarrow0\quad\mbox{a.e.\ $t\in[0,1]$},
\end{equation}
which implies also that for every $0\le t\le1$
\begin{align}\label{F-4.4}
\|\xi_k(t)-\bigl((1-t)A^{2-\theta\over2}+tB^{2-\theta\over2}\bigr)\|_\HS
&=\bigg\|\int_0^t\bigl(\xi_k'(s)-\bigl(B^{2-\theta\over2}-A^{2-\theta\over2}
\bigr)\bigr)\,ds\bigg\|_\HS \nonumber\\
&\le\int_0^t\|\xi_k'(s)-\bigl(B^{2-\theta\over2}-A^{2-\theta\over2}
\bigr)\|_\HS\,ds\longrightarrow0.
\end{align}

Now define $\xi_0(t):=(1-t)A^{2-\theta\over2}+tB^{2-\theta\over2}$ and
$\gamma_0(t):=\xi_0(t)^{2\over2-\theta}$. With $G_\theta(x):=x^{2\over2-\theta}$ one can
apply \eqref{F-2.7}, \eqref{F-4.3} and \eqref{F-4.4} to obtain
\begin{align*}
&\|\phi(\bL_{\gamma_0(t)},\bR_{\gamma_0(t)})^{-1/2}\gamma_0'(t)\|_\HS \\
&\qquad=\|\phi(\bL_{G_\theta(\xi_0(t))},\bR_{G_\theta(\xi_0(t))})^{-1/2}
G_\theta^{[1]}(\bL_{\xi_0(t)},\bR_{\xi_0(t)})\xi_0'(t)\|_\HS \\
&\qquad=\lim_{k\to\infty}
\|\phi(\bL_{G_\theta(\xi_k(t))},\bR_{G_\theta(\xi_k(t))})^{-1/2}
G_\theta^{[1]}(\bL_{\xi_k(t)},\bR_{\xi_k(t)})\xi_k'(t)\|_\HS \\
&\qquad=\lim_{k\to\infty}
\|\phi(\bL_{\gamma_k(t)},\bR_{\gamma_k(t)})^{-1/2}\gamma_k'(t)\|_\HS
\end{align*}
for a.e.\ $t\in[0,1]$. Fatou's lemma gives
\begin{equation}\label{F-4.5}
L_\phi(\gamma_0)\le\liminf_{k\to\infty}L_\phi(\gamma_k)
=\delta_{\phi_\theta}(A,B)=L_{\phi_\theta}(\gamma_0)
\end{equation}
thanks to Theorem \ref{T-2.1}. Here it is clear that $\xi_0(t)$ and $\xi_0'(t)$ are not
commuting for any $0\le t\le1$. Hence $\gamma_0(t)$ and $\gamma_0'(t)$ never commute for
$0\le t\le1$. In fact, this is seen because $\xi_0'(t)$ can be approximated by polynomials
of $\gamma_0(t)$ and $\gamma_0'(t)$ thanks to \eqref{F-2.6} applied to
$\xi_0(t)=G_\theta^{-1}(\gamma_0(t))$ so that
$\gamma_0(t)\gamma_0'(t)=\gamma_0'(t)\gamma_0(t)$ implies
$\xi_0(t)\xi_0'(t)=\xi_0'(t)\xi_0(t)$. Hence \eqref{F-4.5} contradicts the conclusion of
Lemma \ref{L-4.6}.

The proof of the second assertion is easy. Assume that $\phi(x,y)>\phi_\theta(x,y)$ for all
$x\ne y$, and let $\gamma_0(t)$ be same as in the proof of the first assertion. Since
$\gamma_0(t)$ and $\gamma_0'(t)$ never commute for $0\le t\le1$ as mentioned above, Lemma
\ref{L-4.6} again implies that
$$
\delta_\phi(A,B)\le L_\phi(\gamma_0)<L_{\phi_\theta}(\gamma_0)
=\delta_{\phi_\theta}(A,B),
$$
as required.\qed

\bigskip
The above proof of the first assertion is a bit involved. The proof would be much simpler
if a geodesic shortest path joining $A$ and $B$ exists in $(\bP_n,K^\phi)$, which is not
known at the moment.

\begin{example}\label{E-4.7}{\rm
The following are examples of the inequality given in Corollary \ref{C-4.4} in the cases
of familiar means. In fact, these are immediate consequences of Corollary \ref{C-4.4} and
Lemma \ref{L-2.2} together with \eqref{F-2.9}--\eqref{F-2.12} and Lemma \ref{L-2.5}.
Furthermore, Theorem \ref{T-4.5} shows that all inequalities in the following become strict
if $A,B$ are not commuting and the respective closed range of $\theta$ is replaced by the
open range.
\begin{itemize}
\item[(1)] For the $\theta$-power $M_\mathrm{A}^\theta(x,y)=\bigl({x+y\over2}\bigr)^\theta$
of the arithmetic mean,
$$
\delta_{M_\mathrm{A}^\theta}(A,B)
\begin{cases}\le\delta_{\phi_\theta}(A,B) & \text{if $\theta\le-2$, $\theta\ge0$}, \\
\ge\delta_{\phi_\theta}(A,B) & \text{if $-2\le\theta\le0$}.
\end{cases}
$$
\item[(2)] For the $\theta$-power
$M_{\sqrt{\phantom{a}}}^\theta(x,y)=\Bigl({\sqrt x+\sqrt y\over2}\Bigr)^{2\theta}$ of the
root mean,
$$
\delta_{M_{\sqrt{\phantom{a}}}^\theta}(A,B)
\begin{cases}\le\delta_{\phi_\theta}(A,B) & \text{if $\theta\le0$, $\theta\ge1$}, \\
\ge\delta_{\phi_\theta}(A,B) & \text{if $0\le\theta\le1$}.
\end{cases}
$$
\item[(3)] For the $\theta$-power
$M_\mathrm{L}^\theta(x,y)=\Bigl({x-y\over\log x-\log y}\Bigr)^\theta$ of the logarithmic
mean,
$$
\delta_{M_\mathrm{L}^\theta}(A,B)
\begin{cases}\le\delta_{\phi_\theta}(A,B) & \text{if $\theta\le0$, $\theta\ge2$}, \\
\ge\delta_{\phi_\theta}(A,B) & \text{if $0\le\theta\le2$}.
\end{cases}
$$
\item[(4)] For the $\theta$-power $M_\mathrm{G}^\theta(x,y)=(xy)^{\theta/2}$ of the
geometric mean,
$$
\delta_{M_\mathrm{G}^\theta}(A,B)
\begin{cases}\le\delta_{\phi_\theta}(A,B) & \text{if $\theta\le0$, $\theta\ge4$}, \\
\ge\delta_{\phi_\theta}(A,B) & \text{if $0\le\theta\le4$}.
\end{cases}
$$
\item[(5)] For the $\theta$-power
$M_\mathrm{H}^\theta(x,y)=\Bigl({2xy\over x+y}\Bigr)^\theta$ of the harmonic mean,
$$
\delta_{M_\mathrm{H}^\theta}(A,B)
\begin{cases}\le\delta_{\phi_\theta}(A,B) & \text{if $\theta\le0$}, \\
\ge\delta_{\phi_\theta}(A,B) & \text{if $0\le\theta\le10$}.
\end{cases}
$$
For any $\theta\in\bR$, $M_\mathrm{H}(x,1)<M_\theta(x,1)$ holds for large $x>0$ since
$\lim_{x\to\infty}M_\theta(x,1)=+\infty$ while $\lim_{x\to\infty}M_\mathrm{H}(x,1)=2$.
From this and Lemma \ref{L-2.5} we observe that $\delta_{M_\mathrm{H}^\theta}(A,B)$
and $\delta_{\phi_\theta}(A,B)$ are not comparable when $\theta>10$.
\end{itemize}
}\end{example}

In the case $\theta=2$ the above example (4) with \eqref{F-0.5} says that
$$
\|\log(A^{-1/2}BA^{-1/2})\|_\HS\ge\|\log A-\log B\|_\HS,
\qquad A,B\in\bP_n.
$$
This is the so-called {\it exponential metric increasing} ({\it EMI}\,) property in
\cite{Bh2,BH}.
On the other hand, for instance, (1) says that
$$
\delta_{M_\mathrm{A}^2}(A,B)\le\|\log A-\log B\|_\HS,
\qquad A,B\in\bP_n,
$$
which may be called the ``exponential metric decreasing" property. In the case $\theta=1$
the above examples give
$$
\delta_{M_\mathrm{H}}(A,B)\ge\delta_{M_\mathrm{G}}(A,B)
\ge\delta_{M_\mathrm{L}}(A,B)\ge2\|A^{1/2}-B^{1/2}\|_\HS
\ge\delta_{M_\mathrm{A}}(A,B),
$$
which may be called the ``square metric increasing/decreasing" properties.

In the particular case where $\phi(x,y)=M(x,y)$ (of degree $\theta=1$) is an operator
monotone mean, i.e., $M(x,1)$ is a standard operator monotone function and moreover $A,B$
are commuting, the next theorem gives the exact formula for $\delta_M(A,B)$ independently
of the choice of $M$. It seems that this independence of $M$ is reflected by the
uniqueness of a monotone Riemannian metric in the classical case (see \cite{Pe3}).

\begin{thm}\label{T-4.8}
Let $M\in\MM_0$ and assume that $M(x,1)$ is an operator monotone function. If $A,B\in\bP_n$
are commuting, then
$$
\delta_M(A,B)=2\|A^{1/2}-B^{1/2}\|_\HS,
$$
and a geodesic shortest curve from $A$ to $B$ is given by
$$
\gamma_{A,B}(t):=\bigl((1-t)A^{1/2}+tB^{1/2}\bigr)^2,\qquad0\le t\le1,
$$
independently of the choice of $M$ as above. Furthermore, this $\gamma_{A,B}$ is a unique
geodesic shortest curve from $A$ to $B$ whenever $M\ne M_\mathrm{A}$.
\end{thm}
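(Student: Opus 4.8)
The plan is to pin the geodesic distance down by a matching upper and lower bound --- the upper one coming from the explicit curve $\gamma_{A,B}$, the lower one from contractivity of the metric under coarse graining --- and then to deduce uniqueness from Lemma~\ref{L-4.6}. Since $A$ and $B$ commute I fix an orthonormal basis $\{e_i\}$ of $\mathbb{C}^n$ in which $A=\diag(a_1,\dots,a_n)$ and $B=\diag(b_1,\dots,b_n)$, and write $\mathcal E:\bM_n\to\bM_n$ for the pinching $\mathcal E([X_{ij}]):=\diag(X_{11},\dots,X_{nn})$ in this basis, a completely positive, trace preserving and unital map. First I would test the candidate curve $\gamma_{A,B}(t)=((1-t)A^{1/2}+tB^{1/2})^2=\diag\bigl(((1-t)\sqrt{a_i}+t\sqrt{b_i})^2\bigr)$: it lies in $\bP_n$ and is diagonal in $\{e_i\}$, so in \eqref{F-1.1} all off-diagonal Schur entries drop out and, since $M(\lambda,\lambda)=\lambda$, $K^M_{\gamma_{A,B}(t)}(\gamma'_{A,B}(t),\gamma'_{A,B}(t))=\sum_i\lambda_i'(t)^2/\lambda_i(t)=4\sum_i(\sqrt{b_i}-\sqrt{a_i})^2$, a constant; hence $L_M(\gamma_{A,B})=2\|A^{1/2}-B^{1/2}\|_\HS$ and $\delta_M(A,B)\le2\|A^{1/2}-B^{1/2}\|_\HS$ for every $M$ as in the statement.

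For the lower bound the point is that, because $M\in\MM_0$ and $M(\,\cdot\,,1)$ is operator monotone (hence a standard operator monotone function $f$), $K^M$ is exactly the monotone metric $K^f$ of \eqref{F-0.6}, since $M(\bL_D,\bR_D)=\bJ_D^f$. By Petz's theorem such a metric contracts under every completely positive trace preserving map on $\cD_n$, and since $\mathcal E$ is such a map and preserves the trace, homogeneity of $M$ extends this to $K^M_{\mathcal E(D)}(\mathcal E(X),\mathcal E(X))\le K^M_D(X,X)$ for all $D\in\bP_n$, $X\in\bH_n$. Hence for any $C^1$ curve $\gamma$ from $A$ to $B$ the dephased curve $\gamma_0:=\mathcal E\circ\gamma$ is a $C^1$ curve in $\bP_n$ from $\mathcal E(A)=A$ to $\mathcal E(B)=B$ with $L_M(\gamma_0)\le L_M(\gamma)$; writing $\gamma_0(t)=\diag(d_1(t),\dots,d_n(t))$ with $d_i(t)=\<e_i,\gamma(t)e_i\>>0$, the same diagonal computation gives $L_M(\gamma_0)=2\int_0^1\bigl\|\tfrac{d}{dt}\diag(\sqrt{d_1(t)},\dots,\sqrt{d_n(t)})\bigr\|_\HS\,dt\ge2\|\diag(\sqrt{b_i}-\sqrt{a_i})\|_\HS=2\|A^{1/2}-B^{1/2}\|_\HS$. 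Taking the infimum over $\gamma$ yields $\delta_M(A,B)\ge2\|A^{1/2}-B^{1/2}\|_\HS$; combined with the upper bound this proves the formula, and since $\gamma_{A,B}$ has constant speed it is a geodesic shortest curve --- valid in particular for $M=M_\mathrm{A}$.

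For uniqueness when $M\ne M_\mathrm{A}$ I would first note that a standard operator monotone function satisfies $f(x)\le(1+x)/2$ with equality only at $x=1$ unless $f(x)\equiv(1+x)/2$: the function $(1+x)/2-f(x)$ is operator convex (an operator monotone function on $(0,\infty)$ being operator concave), hence convex and real-analytic, and a nonnegative convex function with two distinct zeros vanishes on the segment between them, so analyticity forces it to vanish identically; thus $M(x,y)<M_\mathrm{A}(x,y)$ whenever $x\ne y$. Now let $\gamma$ be any shortest curve from $A$ to $B$. If $\gamma(t_0)\gamma'(t_0)\ne\gamma'(t_0)\gamma(t_0)$ for some $t_0$, then by Lemma~\ref{L-4.6} (with $\phi^{(1)}=M$, $\phi^{(2)}=M_\mathrm{A}$) and the distance formula for $M_\mathrm{A}$ just obtained, $L_M(\gamma)>L_{M_\mathrm{A}}(\gamma)\ge\delta_{M_\mathrm{A}}(A,B)=\delta_M(A,B)$, contradicting minimality; so $\gamma'(t)$ commutes with $\gamma(t)$ for all $t$. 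Analysing $\gamma(t)=U(t)\Lambda(t)U(t)^*$ as in the proof of Lemma~\ref{L-3.2}, this forces the eigenvectors of $\gamma(t)$ to stay fixed away from eigenvalue crossings, so $\gamma(t)$ is diagonal in one basis diagonalising both $A$ and $B$; taking $\{e_i\}$ to be that basis, $\gamma(t)=\diag(\lambda_1(t),\dots,\lambda_n(t))$ with $\lambda_i(0)=a_i$, $\lambda_i(1)=b_i$, and then $L_M(\gamma)=2\int_0^1\|\tfrac{d}{dt}\diag(\sqrt{\lambda_i(t)})\|_\HS\,dt=2\|A^{1/2}-B^{1/2}\|_\HS$ forces $t\mapsto\diag(\sqrt{\lambda_i(t)})$ to traverse the straight segment from $A^{1/2}$ to $B^{1/2}$, i.e.\ $\gamma=\gamma_{A,B}$ up to reparametrisation. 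I expect the delicate point to be exactly this last reduction: passing rigorously from ``$\gamma'(t)$ commutes with $\gamma(t)$ for all $t$'' to ``$\gamma(t)$ diagonal in one fixed basis'' requires controlling the branching points of the spectrum of $\gamma(t)$ (via Kato's perturbation theory and a countable exceptional set, as in Lemma~\ref{L-3.2}), whereas the upper and lower bounds are routine once contractivity of $K^M$ under the pinching $\mathcal E$ is in hand.
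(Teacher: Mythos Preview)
Your upper and lower bounds are essentially the paper's argument: the paper also computes $L_M(\gamma_{A,B})$ directly and obtains the lower bound from monotonicity of $K^M$ under a conditional expectation onto the commutative algebra generated by $A,B$ (your pinching $\mathcal E$ is a special case of such a conditional expectation, and works just as well). Your justification of the strict inequality $M(x,1)<M_\mathrm{A}(x,1)$ for $x\ne1$ via concavity and analyticity is also fine, and the first step of the uniqueness argument (Lemma~\ref{L-4.6} forcing $[\gamma(t),\gamma'(t)]=0$) matches the paper exactly.

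The gap is precisely where you flag it, and it is real rather than merely ``delicate'': the implication from $[\gamma(t),\gamma'(t)]=0$ for all $t$ to $\gamma(t)$ being diagonal in one fixed basis is \emph{false} without further input. A $C^1$ curve may sit at a scalar matrix on a subinterval, excurse in one direction before and in a non-commuting direction after; commutation of $\gamma$ with $\gamma'$ holds throughout, yet the values of $\gamma$ are not simultaneously diagonalisable. Eigenvalue crossings let the eigenframe rotate, and Kato's perturbation theory for merely $C^1$ data does not prevent this. To salvage your route you would have to feed minimality back in at this stage, but the argument as written tries to get diagonality from commutation alone.

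The paper bypasses this completely via Lemma~\ref{L-4.9}. Once $[\gamma(t),\gamma'(t)]=0$, only the diagonal values $M(\lambda,\lambda)=\lambda$ of the kernel enter, so $\|M(\bL_{\gamma(t)},\bR_{\gamma(t)})^{-1/2}\gamma'(t)\|_\HS$ is independent of $M$; in particular $L_M(\gamma)=L_{M_{\sqrt{\phantom a}}}(\gamma)$. But by the $\theta=1$ case of Theorem~\ref{T-2.1}, $D\mapsto 2D^{1/2}$ is a \emph{global} isometry from $(\bP_n,K^{M_{\sqrt{\phantom a}}})$ into $(\bH_n,\|\cdot\|_\HS)$, so $L_{M_{\sqrt{\phantom a}}}(\gamma)=2\int_0^1\|\xi'(t)\|_\HS\,dt$ with $\xi:=\gamma^{1/2}$, \emph{without} any diagonalisation. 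Equating this to $2\|A^{1/2}-B^{1/2}\|_\HS$ forces $\xi$ to be the Euclidean segment, hence $\gamma=\gamma_{A,B}$. In short: drop the ``fixed basis'' step and run your square-root computation directly from commutation, using that the square-root map is an isometry on all of $\bP_n$, not just on a diagonal slice.
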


First we give a small lemma.

\begin{lemma}\label{L-4.9}
Assume that $\gamma:[0,1]\to\bP_n$ is a $C^1$ curve and
$\gamma(t)\gamma'(t)=\gamma'(t)\gamma(t)$ for all $t\in[0,1]$. Let
$\xi(t):=\gamma(t)^{1/2}$. Then $L_M(\gamma)=2\int_0^1\|\xi'(t)\|_\HS\,dt$ for all $M$
as stated in Theorem \ref{T-4.8} (i.e., $M\in\MM_0$ with operator monotone $M(x,1)$).
\end{lemma}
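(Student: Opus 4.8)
The plan is to reduce the assertion to a pointwise identity along the curve and then integrate. The first step is to record that $\xi=\gamma^{1/2}$ is itself a $C^1$ curve in $\bP_n$: the map $A\mapsto A^{1/2}$ is smooth (real analytic) on $\bP_n$, so $\xi(t)=\gamma(t)^{1/2}$ is $C^1$ and $\xi'(t)=DG(\gamma(t))(\gamma'(t))$ with $G(x):=\sqrt x$. It therefore suffices to prove that
$$
\sqrt{K_{\gamma(t)}^M(\gamma'(t),\gamma'(t))}=2\,\|\xi'(t)\|_\HS,\qquad t\in[0,1],
$$
because then $L_M(\gamma)=\int_0^1\sqrt{K_{\gamma(t)}^M(\gamma'(t),\gamma'(t))}\,dt=2\int_0^1\|\xi'(t)\|_\HS\,dt$. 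Here $\phi(x,y)=M(x,y)$ (degree $\theta=1$), consistent with the value $2\|A^{1/2}-B^{1/2}\|_\HS$ occurring in Theorem \ref{T-4.8}.

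To prove the pointwise identity I would fix $t$, set $A:=\gamma(t)$ and $H:=\gamma'(t)$, and exploit the commuting hypothesis $AH=HA$: two commuting Hermitian matrices are simultaneously diagonalizable, so there is a unitary $U$ with $U^*AU=\diag(\mu_1,\dots,\mu_n)$ (all $\mu_i>0$) and $U^*HU=\diag(\nu_1,\dots,\nu_n)$. Since $U^*HU$ is diagonal and $M(x,x)=x$ (property (4) of a symmetric homogeneous mean), formula \eqref{F-1.1} collapses to
$$
K_A^M(H,H)=\sum_{i,j}\frac{|(U^*HU)_{ij}|^2}{M(\mu_i,\mu_j)}=\sum_{i=1}^n\frac{\nu_i^2}{M(\mu_i,\mu_i)}=\sum_{i=1}^n\frac{\nu_i^2}{\mu_i}.
$$

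For the right-hand side I would compute $\xi'(t)=DG(A)(H)$ using the differential formula \eqref{F-2.6}: in the diagonalizing basis the Schur-product representation of $DG(A)(H)$ degenerates, because $U^*HU$ is diagonal, to $U\,\diag\bigl(G^{[1]}(\mu_i,\mu_i)\nu_i\bigr)U^*=U\,\diag\bigl(G'(\mu_i)\nu_i\bigr)U^*$, where $G'(\mu_i)=1/(2\sqrt{\mu_i})$. Hence
$$
\|\xi'(t)\|_\HS^2=\sum_{i=1}^n G'(\mu_i)^2\,\nu_i^2=\frac14\sum_{i=1}^n\frac{\nu_i^2}{\mu_i}=\frac14\,K_A^M(H,H),
$$
which is exactly the claimed pointwise identity; integrating over $t\in[0,1]$ completes the proof.

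I do not expect a genuine obstacle here: the argument is a direct computation, and the commuting hypothesis is precisely what makes all matrices in play simultaneously diagonal so that every Schur product reduces to a diagonal scaling. The only points deserving care are the $C^1$-regularity of $\xi=\gamma^{1/2}$ (smoothness of the square root on $\bP_n$) and the correct invocation of the differential formula \eqref{F-2.6} at a foot point that commutes with the direction $H$. It is worth remarking that operator monotonicity of $M(x,1)$ is never used in this lemma; only $M\in\MM_0$ together with the differentiability assumptions enter the proof.
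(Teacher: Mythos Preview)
Your proof is correct. Both your argument and the paper's hinge on the same observation: when $D=\gamma(t)$ and $H=\gamma'(t)$ commute, simultaneous diagonalization forces only the diagonal values $M(\mu_i,\mu_i)=\mu_i$ to enter, so the integrand $\|M(\bL_D,\bR_D)^{-1/2}H\|_\HS$ is independent of the particular $M\in\MM_0$. The difference lies only in how the final identity is obtained. The paper, having established this $M$-independence, simply invokes the $\theta=1$ case of Theorem~\ref{T-2.1} (where $M=M_{\sqrt{\phantom{a}}}$ and $F(x)=2\sqrt x$ is an isometry onto Euclidean space), which immediately yields $L_M(\gamma)=2\int_0^1\|\xi'(t)\|_\HS\,dt$. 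You instead compute $\xi'(t)$ explicitly via the Fr\'echet derivative of $x\mapsto\sqrt x$ and verify the pointwise identity $K_A^M(H,H)=4\|\xi'(t)\|_\HS^2$ by hand. Your route is more self-contained and makes transparent that nothing beyond $M(x,x)=x$ is needed (as you correctly remark, operator monotonicity plays no role here); the paper's route is shorter because it recycles Theorem~\ref{T-2.1}.
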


\begin{proof}
Since $M(x,x)=x$ for all $x>0$, we note that $\|M(\bL_D,\bR_D)^{-1/2}H\|_\HS$ is
independent of the choice of $M$ whenever $D\in\bP_n$ and $H\in\bH_n$ are commuting. This
implies that $L_M(\gamma)$ is independent of $M$ if $\gamma$ is as stated in the lemma.
Hence the lemma follows by the $\theta=1$ case of Theorem \ref{T-2.1}.
\end{proof}

\noindent
{\it Proof of Theorem \ref{T-4.8}.}\enspace
Assume that $AB=BA$, and let $\gamma_{A,B}$ be as given in the theorem. By Lemma
\ref{L-4.9} and Theorem \ref{T-2.1} we have
$$
L_M(\gamma_{A,B})=L_{M_{\sqrt{\phantom{a}}}}(\gamma_{A,B})=2\|A^{1/2}-B^{1/2}\|_\HS
$$
so that $\delta_M(A,B)\le2\|A^{1/2}-B^{1/2}\|_\HS$. To prove the converse, let $\Phi$
denote the conditional expectation (with respect to $\Tr$) of $\bM_n$ onto
the commutative subalgebra generated by $A,B$, and let $\gamma:[0,1]\to\bP_n$ be an
arbitrary $C^1$ curve from $A$ to $B$. Then $\Phi(\gamma)$ is a $C^1$ curve in $\bP_n$
from $A$ to $B$. Since $K^M$ is a monotone metric \cite{Pe3} (see also Introduction),
we get
$$
K_{\Phi(\gamma(t))}^M(\Phi(\gamma'(t)),\Phi(\gamma'(t)))
\le K_{\gamma(t)}^M(\gamma'(t),\gamma'(t)),\qquad0\le t\le1,
$$
so that $L_\phi(\Phi(\gamma))\le L_\phi(\gamma)$. Hence we may assume that $\gamma(t)$'s
are in a commutative subalgebra. When $\xi(t):=\gamma(t)^{1/2}$, we get by
Lemma \ref{L-4.9}
$$
L_M(\gamma)=2\int_0^1\|\xi'(t)\|_\HS\,dt\ge2\|A^{1/2}-B^{1/2}\|_\HS.
$$
Hence $\delta_\phi(A,B)=2\|A^{1/2}-B^{1/2}\|_\HS$ and $\gamma_{A,B}$ is a common geodesic
shortest curve from $A$ to $B$ for all metrics $K^M$ with operator monotone $M$.

Next we show the last assertion on the uniqueness of a geodesic curve.  To prove this,
let $\gamma_1:[0,1]\to\bP_n$ be a $C^1$ curve from $A$ to $B$ such that
$L_M(\gamma_1)=2\|A^{1/2}-B^{1/2}\|_\HS$. Since $M_\mathrm{A}$ is the largest standard
operator monotone function and $M\ne M_\mathrm{A}$, note that $M(x,1)<M_\mathrm{A}(x,1)$
for all $x>0$ with $x\ne1$. Since $L_M(\gamma_1)\ge L_{M_\mathrm{A}}(\gamma_1)$ by
Theorem \ref{T-4.1}, it follows from Lemma \ref{L-4.6} that
$\gamma_1(t)\gamma_1'(t)=\gamma_1'(t)\gamma_1(t)$ for all $t\in[0,1]$. Lemma \ref{L-4.9}
in turn implies that $\int_0^1\|\xi_1'(t)\|_\HS\,dt=\|A^{1/2}-B^{1/2}\|_\HS$, where
$\xi_1(t):=\gamma_1(t)^{1/2}$. Therefore we get $\xi_1(t)=(1-t)A^{1/2}+tB^{1/2}$,
$0\le t\le 1$, so that $\gamma_1=\gamma_{A,B}$.\qed

\bigskip
When $M=M_\mathrm{A}$ and $A,B$ are commuting, it is not known whether $\delta_{A,B}$
is a unique geodesic shortest path joining $A,B$. To prove this, we probably need to
examine the equality case in the monotonicity of $K_D^M(H,H)$ under conditional
expectation. Another problem for commuting $A,B$ is whether $\delta_{A,B}$ gives a
geodesic shortest path for any metric $K^M$ with $M\in\MM_0$ which is not necessarily
operator monotone.

We close the section with a remark on comparison of skew informations given in
\eqref{F-0.9}. Let $f$ and $g$ be two standard operator monotone functions that are
regular, i.e., $f(0),g(0)>0$. It is immediate to see that $I_D^f(K)\ge I_D^g(K)$ for all
$D\in\bP_n$ and $K\in\bH_n$ if and only if $f(0)/f(x)\ge g(0)/g(x)$ for all $x>0$. For
example, as for $f_p=f_{1-p}$, $0<p\le1/2$, given in \eqref{F-0.8}, $f_p(0)/f_p(x)$ is
increasing in $p\in(0,1/2]$ so that the Wigner-Yanase-Dyson skew information
$I_D^\mathrm{WYD}(p,K)$ is increasing in $p\in(0,1/2]$ for fixed $D$ and $K$ (see
\cite{ACH}).

\section{Unitarily invariant norms}
\setcounter{equation}{0}

Let $|||\cdot|||$ be a {\it unitarily invariant norm} on matrices, that is, $|||\cdot|||$
is a norm on $\bM_n$, $n\in\bN$, such that $|||UXV|||=|||X|||$ for all $X,U,V\in\bM_n$
with $U,V$ unitaries. The Hilbert-Schmidt norm $\|\cdot\|_\HS$ is a special example of
such norms. When a kernel function $\phi:(0,\infty)\times(0,\infty)\to(0,\infty)$ is given,
replacing $\|\cdot \|_\HS$ by $|||\cdot|||$ in \eqref{F-1.3} we define the
{\it length}
$$
L_{\phi,|||\cdot|||}(\gamma)
:=\int_0^1|||\phi(\bL_{\gamma(t)},\bR_{\gamma(t)})^{-1/2}\gamma'(t)|||\,dt
$$
of a $C^1$ curve $\gamma:[0,1]\to\bP_n$. The {\it distance}
$\delta_{\phi,|||\cdot|||}(A,B)$ between $A,B\in\bP_n$ is the infimum of
$L_{\phi,|||\cdot|||}(\gamma)$ over all $C^1$ curves $\gamma$ from $A$ to $B$. The manifold
$\bP_n$ with the distance $\delta_{\phi,|||\cdot|||}$ is no longer a Riemannian manifold
but a certain Finsler manifold. When $|||\cdot|||$ is the operator norm, such Finsler
manifolds have been studied by several authors (see \cite{CPR} for example).

In this section we show that many results in the previous sections hold true even when the
Hilbert-Schmidt norm $\|\cdot \|_\HS$ is replaced by a general unitarily invariant norm
$|||\cdot|||$. First, Theorem \ref{T-2.1} can be extended as follows. We omit the proof
that is essentially same as the second proof of Theorem \ref{T-2.1}.

\begin{prop}\label{P-5.1}
Let $|||\cdot|||$ be any unitarily invariant norm. Let $M$, $\theta$, $\phi$ and $F$ be as
in Theorem \ref{T-2.1}. Then the transformation $D\in\bP_n\mapsto F(D)\in\bH_n$ is
isometric from $(\bP_n,\delta_{\phi,|||\cdot|||})$ into $(\bH_n,|||\cdot|||)$ if and only
if $F$ is in the form \eqref{F-2.1} and $M=M_\theta$ (so $\phi=\phi_\theta$). Moreover,
for every $A,B\in\bP_n$,
$$
\delta_{\phi_\theta,|||\cdot|||}(A,B)=\begin{cases}
{2\over|2-\theta|}|||A^{2-\theta\over2}-B^{2-\theta\over2}|||
& \text{if $\theta\ne2$}, \\
|||\log A-\log B||| & \text{if $\theta=2$},
\end{cases}
$$
and this distance is attained by curve \eqref{F-2.3}.
\end{prop}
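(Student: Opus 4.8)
The plan is to re-run the second proof of Theorem \ref{T-2.1} with $\|\cdot\|_\HS$ replaced throughout by $|||\cdot|||$, the only substantive change being at the single spot where a Hilbert-space feature of $\|\cdot\|_\HS$ was used. Let $(a,b)$ be the range of $F$ and $G:=F^{-1}:(a,b)\to(0,\infty)$. For a $C^1$ curve $\gamma:[0,1]\to\bP_n$ set $\xi(t):=F(\gamma(t))$, so that $\gamma(t)=G(\xi(t))$; diagonalizing $\xi(t)=U\diag(\lambda_1,\dots,\lambda_n)U^*$ at a fixed $t$, formula \eqref{F-2.7} gives
$$
\phi(\bL_{\gamma(t)},\bR_{\gamma(t)})^{-1/2}\gamma'(t)
=U\bigl([c_{ij}(t)]\circ(U^*\xi'(t)U)\bigr)U^*,\qquad
c_{ij}(t):={G^{[1]}(\lambda_i,\lambda_j)\over\sqrt{\phi(G(\lambda_i),G(\lambda_j))}}.
$$
Taking $|||\cdot|||$ and using unitary invariance, $L_{\phi,|||\cdot|||}(\gamma)=\int_0^1|||[c_{ij}(t)]\circ(U^*\xi'(t)U)|||\,dt$, so, just as in Theorem \ref{T-2.1}, $F$ is an isometry of $(\bP_n,\delta_{\phi,|||\cdot|||})$ into $(\bH_n,|||\cdot|||)$ precisely when the Schur multiplier $[c_{ij}(t)]$ acts isometrically on $(\bH_n,|||\cdot|||)$ for every eigenvalue configuration, which is the condition $c_{ij}\equiv\pm1$ with a fixed sign, i.e.\ condition \eqref{F-2.8}, equivalently $F^{[1]}(x,y)=\pm{1\over\sqrt{\phi(x,y)}}$.

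For the sufficiency: if $F$ is of the form \eqref{F-2.1} and $M=M_\theta$, then \eqref{F-2.8} holds exactly as in Theorem \ref{T-2.1}, so $c_{ij}(t)\equiv\pm1$ (the sign being that of $G'$) and hence $|||[c_{ij}(t)]\circ(U^*\xi'(t)U)|||=|||\xi'(t)|||$; consequently, for every $C^1$ curve $\gamma$ from $A$ to $B$,
$$
L_{\phi,|||\cdot|||}(\gamma)=\int_0^1|||\xi'(t)|||\,dt\ge|||\xi(1)-\xi(0)|||=|||F(B)-F(A)|||,
$$
with equality when $\xi$ is the segment joining $F(A)$ and $F(B)$, whose $G$-image is the curve \eqref{F-2.3}. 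This gives $\delta_{\phi,|||\cdot|||}(A,B)=|||F(A)-F(B)|||$, and inserting \eqref{F-2.1} yields the asserted distance formula. For the necessity I would use that a distance-isometry $F$ preserves the metric length of $C^1$ curves --- the metric length of $\gamma$ being $\sup\sum_i\delta_{\phi,|||\cdot|||}(\gamma(t_{i-1}),\gamma(t_i))$ over partitions, which for a $C^1$ curve equals its Finsler integral --- so that $L_{\phi,|||\cdot|||}(\gamma|_{[0,s]})=\int_0^s|||\xi'(t)|||\,dt$ for all $s\in[0,1]$; differentiating in $s$, invoking \eqref{F-2.7} once more, and letting $\gamma(s)$ range over $\bP_n$ and $\gamma'(s)$ over $\bH_n$ yields $|||[c_{ij}]\circ X|||=|||X|||$ for every Hermitian $X$ and every eigenvalue configuration. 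Feeding $X=E_{ij}+E_{ji}$ (the $E_{kl}$ being matrix units) into this and using $[c_{kl}]\circ(E_{ij}+E_{ji})=c_{ij}(E_{ij}+E_{ji})$ together with $|||E_{ij}+E_{ji}|||>0$ forces $|c_{ij}|=1$; since the sign of $c_{ij}$ is constant, this is \eqref{F-2.8}. From here the argument closes exactly as for Theorem \ref{T-2.1}: the $s=t$ case of \eqref{F-2.8} is the ODE $F'(x)=\pm x^{-\theta/2}$ (equivalently \eqref{F-2.4}), solved by \eqref{F-2.1}, while the remaining content of \eqref{F-2.8} is \eqref{F-2.5}, which forces $M=M_\theta$.

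The change of variable, formula \eqref{F-2.7}, and the ODE with its solution are all imported wholesale from Section 2. The one point deserving care, and the expected main obstacle, is the reduction in the necessity direction from ``$F$ preserves distances'' to the pointwise Schur-multiplier identity $|||[c_{ij}]\circ X|||=|||X|||$; once that reduction is in place, the matrix-unit step is actually simpler than the $\|\cdot\|_\HS$ argument, since it uses only that every matrix unit has the same positive unitarily invariant norm. A minor caveat to record: in a general normed space a segment need not be the unique shortest curve, so no uniqueness of the geodesic \eqref{F-2.3} is claimed, consistently with the statement of Proposition \ref{P-5.1}.
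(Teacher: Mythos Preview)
Your argument is correct and follows the paper's intended route (the second proof of Theorem~\ref{T-2.1} with $\|\cdot\|_\HS$ replaced by $|||\cdot|||$, which the paper simply omits). The sufficiency direction and the distance formula are handled exactly as one would expect, and your matrix-unit step for the necessity of \eqref{F-2.8} is the right replacement for the Hilbert--Schmidt computation.

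One point deserves comment. In the necessity direction you pass from \emph{distance} isometry to the pointwise Schur-multiplier identity via the assertion that, for a $C^1$ curve, the metric length $\sup\sum_i\delta_{\phi,|||\cdot|||}(\gamma(t_{i-1}),\gamma(t_i))$ equals the Finsler integral $L_{\phi,|||\cdot|||}(\gamma)$. This is indeed a standard fact for continuous Finsler structures (the Finsler norm here is $H\mapsto|||\phi(\bL_D,\bR_D)^{-1/2}H|||$, which varies continuously in $D$), and once granted your argument goes through. It is worth noting, however, that the paper itself in Remark~\ref{R-5.3} says it does \emph{not} know the essentially equivalent statement
\[
\lim_{\eps\searrow0}\frac{\delta_{\phi,\|\cdot\|_\infty}(D,D+\eps H)}{\eps}
=\|\phi(\bL_D,\bR_D)^{-1/2}H\|_\infty,
\]
so you are invoking something the authors did not take for granted. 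The likely reason the paper can nevertheless claim the proof is ``essentially the same'' is that they intend ``isometric'' at the level of the Finsler norm on tangent spaces (as in the first proof of Theorem~\ref{T-2.1}, where isometry is explicitly interpreted as a pull-back condition), which bypasses this step entirely. Your more literal reading of the statement is defensible, but then a short justification of the metric-length\,$=$\,Finsler-length fact (or a reference) would tighten the argument.
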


The next comparison theorem is a partial extension of Theorem \ref{T-4.1}. An essential
point of the proof is similar to that of \cite[Theorem 1.1]{HK}.

\begin{prop}\label{P-5.2}
Let $M^{(1)},M^{(2)}\in\MM_0$, $\theta\in\bR$ and $\phi^{(k)}(x,y):=M_k(x,y)^\theta$,
$k=1,2$. Then the following conditions are equivalent:
\begin{itemize}
\item[\rm(i)] $(M^{(1)}(e^t,1)/M^{(2)}(e^t,1))^{\theta/2}$ is a positive definite function
on $\bR$;
\item[\rm(ii)] $L_{\phi^{(1)},|||\cdot|||}(\gamma)\ge L_{\phi^{(2)},|||\cdot|||}(\gamma)$
for all $C^1$ curve $\gamma$ in $\bP_n$ and for any unitarily invariant norm
$|||\cdot|||$;
\item[\rm(iii)] $L_{\phi^{(1)},\|\cdot\|_\infty}(\gamma)\ge
L_{\phi^{(2)},\|\cdot\|_\infty}(\gamma)$ for all $C^1$ curve $\gamma$ in $\bP_n$ and for
the operator norm $\|\cdot\|_\infty$.
\end{itemize}
\end{prop}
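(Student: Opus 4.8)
The plan is to reduce the whole statement to a contractivity property of Schur multipliers with a translation-invariant symbol, and then to exploit the dictionary between contractive Toeplitz Schur multipliers and Fourier--Stieltjes measures, in the spirit of \cite[Theorem 1.1]{HK}.

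First I would record the reduction. For $\gamma(t):=D+tH$ one has $\eps^{-1}L_{\phi,|||\cdot|||}(\gamma|_{[0,\eps]})\to|||\phi(\bL_D,\bR_D)^{-1/2}H|||$ as $\eps\searrow0$ by continuity of the integrand, so conditions (ii) and (iii) are respectively equivalent to the pointwise inequality
$$
|||\phi^{(1)}(\bL_D,\bR_D)^{-1/2}H|||\ge|||\phi^{(2)}(\bL_D,\bR_D)^{-1/2}H|||,\qquad D\in\bP_n,\ H\in\bH_n,
$$
holding for every unitarily invariant norm (resp. for the operator norm) and every $n\ge2$. Putting $Y:=\phi^{(1)}(\bL_D,\bR_D)^{-1/2}H$ and diagonalizing $D=U\diag(\lambda_1,\dots,\lambda_n)U^*$, by \eqref{F-1.2} this says exactly that Schur multiplication by $\bigl[(M^{(1)}(\lambda_i,\lambda_j)/M^{(2)}(\lambda_i,\lambda_j))^{\theta/2}\bigr]_{ij}$ is contractive in the norm(s) in question. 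Homogeneity of $M^{(1)},M^{(2)}$ turns this matrix into $[g(s_i-s_j)]_{ij}$ with $\lambda_i=e^{s_i}$ and $g(t):=(M^{(1)}(e^t,1)/M^{(2)}(e^t,1))^{\theta/2}$, a continuous, real, even function with $g(0)=1$; and letting $\lambda_1,\dots,\lambda_n>0$ (equivalently $s_1,\dots,s_n\in\bR$) range freely, with $n$ arbitrary, captures every such Schur multiplier.

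For (i) $\Rightarrow$ (ii): if $g$ is positive definite then, since $g(0)=1$, Bochner's theorem gives a probability measure $\mu$ on $\bR$ with $g(t)=\int e^{it\xi}\,d\mu(\xi)$, so $g(s_i-s_j)=\langle v_j,v_i\rangle_{L^2(\mu)}$ for the unit vectors $v_i(\xi):=e^{is_i\xi}$. A Gram-type matrix built from unit vectors is a contractive Schur multiplier for the operator norm, and it is classical that an operator-norm contractive Schur multiplier is contractive for every unitarily invariant norm; hence the pointwise inequality holds in all such norms and, integrating, (ii) follows. The implication (ii) $\Rightarrow$ (iii) is trivial since $\|\cdot\|_\infty$ is unitarily invariant.

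The substance is in (iii) $\Rightarrow$ (i). Specializing $s_j:=jh$ makes $[g((j-k)h)]_{j,k}$ a Toeplitz matrix, and contractivity of all these Toeplitz Schur multipliers is equivalent, by the standard characterization of contractive Toeplitz Schur multipliers, to $(g(kh))_{k\in\mathbb{Z}}$ being the Fourier--Stieltjes coefficients of a complex measure $\mu_h$ on the circle with $\|\mu_h\|\le1$. Since $\widehat{\mu_h}(0)=g(0)=1$ and $|\mu_h(\mathbb{T})|\le\|\mu_h\|$ with equality forcing $\mu_h$ to have constant argument, $\mu_h$ is in fact a probability measure; hence $(g(kh))_k$ is a positive definite sequence. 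Letting $h\searrow0$ and approximating arbitrary reals $t_1,\dots,t_m$ by nearest multiples of $h$, continuity of $g$ yields $\sum_{i,l}\xi_i\overline{\xi_l}\,g(t_i-t_l)\ge0$, which is (i). The main obstacle is exactly this step: invoking (or reproving) the Toeplitz Schur multiplier $\leftrightarrow$ measure correspondence and the normalization trick $\widehat{\mu_h}(0)=\|\mu_h\|=1\Rightarrow\mu_h\ge0$ — here the reality and evenness of $g$, inherited from the mean structure, are what make the normalization available and rule out the signed-measure examples showing that contractivity alone does not imply positive definiteness in general.
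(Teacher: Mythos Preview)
Your argument is correct and follows the paper's route: both reduce (ii) and (iii) to the pointwise Schur-multiplier inequality via short segments at $D$, and then translate contractivity of $[g(s_i-s_j)]$ (with $g(t)=(M^{(1)}(e^t,1)/M^{(2)}(e^t,1))^{\theta/2}$, $s_i=\log\lambda_i$) into positive definiteness of $g$. The only difference is in packaging: for (i)\,$\Rightarrow$\,(ii) the paper directly invokes \cite[1.4.1]{Bh3} (a positive semidefinite matrix with unit diagonal is a contractive Schur multiplier for every unitarily invariant norm), while you reach the same point via Bochner and a Gram representation; and for (iii)\,$\Rightarrow$\,(i) the paper simply appeals to the proof of \cite[Theorem~1.1]{HK}, whereas you unpack that citation with a self-contained Toeplitz/Bennett argument (specialize to $s_j=jh$, identify the multiplier with the Fourier--Stieltjes sequence of a measure of total variation $\le1$, use $g(0)=1$ and the reality/evenness of $g$ to force the measure to be positive, then let $h\searrow0$ and use continuity).
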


\begin{proof}
(i) $\Rightarrow$ (ii).\enspace
It suffices to show that (i) implies that
\begin{equation}\label{F-5.1}
|||\phi^{(1)}(\bL_D,\bR_D)^{-1/2}H|||\ge|||\phi^{(2)}(\bL_D,\bR_D)^{-1/2}H|||
\end{equation}
for all $D\in\bP_n$ and $H\in\bH_n$. To do this, one may assume that
$D=\diag(\lambda_1,\dots,\lambda_n)$. By \eqref{F-1.2} notice that
$$
\phi^{(2)}(\bL_D,\bR_D)^{-1/2}H=\Biggl[\biggl(
{\phi^{(1)}(\lambda_i,\lambda_j)\over\phi^{(2)}(\lambda_i,\lambda_j)}
\biggr)^{1/2}\Biggr]_{ij}\circ(\phi^{(1)}(\bL_D,\bR_D)^{-1/2}H)
$$
and
\begin{equation}\label{F-5.2}
\biggl({\phi^{(1)}(\lambda_i,\lambda_j)\over
\phi^{(2)}(\lambda_i,\lambda_j)}\biggr)^{1/2}
=\biggl({M^{(1)}(\lambda_i/\lambda_j,1)\over
M^{(2)}(\lambda_i/\lambda_j,1)}\biggr)^{\theta/2}
=\biggl({M^{(1)}(e^{\log\lambda_i-\log\lambda_j},1)\over
M^{(2)}(e^{\log\lambda_i-\log\lambda_j},1)}\biggr)^{\theta/2}.
\end{equation}
Since (i) implies that
$\bigl[(\phi^{(1)}(\lambda_i,\lambda_j)/\phi^{(2)}(\lambda_i,\lambda_j)^{1/2}\bigr]_{ij}$
is a positive definite matrix with all diagonal entries equal to $1$, \eqref{F-5.1}
is obtained (see \cite[1.4.1]{Bh3} for example).

(ii) $\Rightarrow$ (iii) is trivial.

(iii) $\Rightarrow$ (i).\enspace
For $k=1,2$, since $D\in\bP_n\mapsto\phi^{(k)}(\bL_D,\bR_D)$ is continuous, it is obvious
that
$$
\lim_{\eps\searrow0}{L_{\phi^{(k)},\|\cdot\|_\infty}([D,D+\eps H])\over\eps}
=\|\phi^{(k)}(\bL_D,\bR_D)^{-1/2}H\|_\infty
$$
for all $D\in\bP_n$ and $H\in\bH_n$, where $[D,D+\eps H]$ denotes the straight segment
$D+t\eps H$, $0\le t\le1$. Hence condition (iii) implies that
$$
\|\phi^{(1)}(\bL_D,\bR_D)^{-1/2}H\|_\infty\ge\|\phi^{(2)}(\bL_D,\bR_D)^{-1/2}H\|_\infty,
\qquad D\in\bP_n,\ H\in\bH_n.
$$
When $D=\diag(\lambda_1,\dots,\lambda_n)$, this means that
$$
\|H\|_\infty\ge\Bigg\|\Biggl[\biggl(
{\phi^{(1)}(\lambda_i,\lambda_j)\over\phi^{(2)}(\lambda_i,\lambda_j)}
\biggr)^{1/2}\Biggr]_{ij}\circ H\Bigg\|_\infty,
\qquad H\in\bH_n.
$$
Now the proof of \cite[Theorem 1.1]{HK} shows that
$\bigl[(\phi^{(1)}(\lambda_i,\lambda_j)/\phi^{(2)}(\lambda_i,\lambda_j))^{1/2}\bigr]_{ij}$
is positive semidefinite, which means (i) thanks to \eqref{F-5.2}.
\end{proof}

\begin{remark}\label{R-5.3}{\rm
The geodesic distance versions of the above (ii) and (iii) are
\begin{itemize}
\item[(iv)] $\delta_{\phi^{(1)},|||\cdot|||}(A,B)\ge\delta_{\phi^{(2)},|||\cdot|||}(A,B)$ for
all $A,B\in\bP_n$ and for any unitarily invariant norm $|||\cdot|||$;
\item[(v)] $\delta_{\phi^{(1)},\|\cdot\|_\infty}(A,B)\ge
\delta_{\phi^{(2)},\|\cdot\|_\infty}(A,B)$ for all $A,B\in\bP_n$.
\end{itemize}
Obviously, (ii) $\Rightarrow$ (iv) and (iii) $\Rightarrow$ (v). It may be expected that
(iv) and (v) are also equivalent to the conditions of Proposition \ref{P-5.2}. This would
be proved as in the proof of (iii) $\Rightarrow$ (i) if we have
$$
\lim_{\eps\searrow0}{\delta_{\phi,\|\cdot\|_\infty}(D,D+\eps H)\over\eps}
=\|\phi(\bL_D,\bR_D)^{-1/2}H\|_\infty
$$
for all $D\in\bP_n$, $H\in\bH_n$ and for $\phi=M^\theta$ with $M\in\MM_0$. Although the
above convergence for $\|\cdot\|_\HS$ is Lemma \ref{L-4.2}, we do not know whether it is
also true for $\|\cdot\|_\infty$.
}\end{remark}

For $M^{(1)},M^{(2)}\in\MM_0$ consider the following conditions:
\begin{itemize}
\item[(a)] $M^{(1)}(x,1)\le M^{(2)}(x,1)$ for all $x>0$;
\item[(b)] $M^{(1)}(e^t,1)/M^{(2)}(e^t,1)$ is positive definite on $\bR$ (in this case we
write $M^{(1)}\preceq M^{(2)}$);
\item[(c)] $M^{(1)}(e^t,1)/M^{(2)}(e^t,1)$ is infinitely divisible in the sense that
$(M^{(1)}(e^t,1)/M^{(2)}(e^t,1))^r$ is positive definite on $\bR$ for any $r>0$ (in this
case we write $M^{(1)}\ll M^{(2)}$).
\end{itemize}

Obviously, (c) $\Rightarrow$ (b) $\Rightarrow$ (a). Condition (a) appeared in Theorem
\ref{T-4.1} while (b) is in the case $\theta=2$ of Proposition \ref{P-5.2}. We also note
that (b) played an essential role in \cite{HK,HK2}. It was recently observed in
\cite{BK,Ko1} that the stronger condition (c) is even satisfied for many cases where
$M^{(1)},M^{(2)}\in\MM_0$ satisfy (b). In fact, Kosaki \cite{Ko3} communicated to us that
$$
M_\mathrm{H}\ll M_\mathrm{G}\ll M_\mathrm{L}\ll M_{\sqrt{\phantom{a}}}\ll M_\mathrm{A}
$$
can be easily shown by applying \cite[Corollary 3]{Ko1} and \cite[Proposition 4]{BK}.
Hence by Proposition \ref{P-5.2} (also Remark \ref{R-5.3}), if $\theta\ge0$ then
$$
\delta_{M_\mathrm{H}^\theta,|||\cdot|||}(A,B)
\ge\delta_{M_\mathrm{G}^\theta,|||\cdot|||}(A,B)
\ge\delta_{M_\mathrm{L}^\theta,|||\cdot|||}(A,B)
\ge\delta_{M_{\sqrt{\phantom{a}}}^\theta,|||\cdot|||}(A,B)
\ge\delta_{M_\mathrm{A}^\theta,|||\cdot|||}(A,B),
$$
and inequalities are reversed if $\theta\le0$. For $\{N_\alpha\}_{0\le\alpha\le2}$ given
in \eqref{F-3.1}, if $0\le\alpha<\beta\le2$ then we have $N_\beta\ll N_\alpha$ by
\cite[Theorem 2]{BK} since
${N_\beta(e^{2t},1)/N_\alpha(e^{2t},1)}=(\beta/\alpha)(\sinh{\alpha t}/\sinh{\beta t})$.
As for $\psi_\alpha=N_\alpha^2$, similarly to Theorem \ref{T-3.3} we have
$$
\delta_{\psi_\alpha,|||\cdot|||}(A,B)
=|||\log(A^{-\alpha/2}B^\alpha A^{-\alpha/2})^{1/\alpha}|||,
\qquad0<\alpha\le2,
$$
which decreases to $\delta_{M_\mathrm{L}^2,|||\cdot|||}(A,B)=|||\log A-\log B|||$ as
$\alpha\searrow0$ (this is also a consequence of Araki's log-majorization \cite{Ar} as
mentioned at the end of Section 3). In particular, the inequality
$$
\delta_{M_\mathrm{G}^2,|||\cdot|||}(A,B)=|||\log(A^{-1/2}BA^{-1/2})|||
\ge|||\log A-\log B|||
$$
is the generalized EMI in \cite{Bh2}.

Finally, as for $\phi_\theta$ we show:

\begin{prop}\label{P-5.4}
Let $|||\cdot|||$ be any unitarily invariant norm and $A,B\in\bP_n$. Then
$\delta_{\phi_\theta,|||\cdot|||}(A,B)$ (see Proposition \ref{P-5.1}) is decreasing in
$\theta\in(-\infty,2]$ and increasing in $\theta\in[2,\infty)$. Furthermore,
$$
\delta_{M_\mathrm{G}^\theta,|||\cdot|||}(A,B)
\begin{cases}\le\delta_{\phi_\theta,|||\cdot|||}(A,B)
& \text{if $\theta\le0$, $\theta\ge4$}, \\
\ge\delta_{\phi_\theta,|||\cdot|||}(A,B) & \text{if $0\le\theta\le4$}.
\end{cases}
$$
\end{prop}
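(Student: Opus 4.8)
The plan is to reduce both assertions to comparisons that are already available: Stolarsky's monotonicity (Lemma~\ref{L-2.2}), the distance-comparison results (Theorem~\ref{T-4.1} for the Hilbert--Schmidt norm, Proposition~\ref{P-5.2} together with Remark~\ref{R-5.3} for general unitarily invariant norms), and the explicit distance formula of Proposition~\ref{P-5.1}; the only genuinely analytic input is the classical positive definiteness of $\sinh(at)/\sinh(bt)$ for $0<|a|\le b$, which is the same fact already used for the family $\{N_\alpha\}$ in this section (cf.\ \cite{BK,Ko1}).

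For the comparison with the geometric-mean power, note that $M_\mathrm{G}=M_4$ in the Stolarsky parametrization, so by Lemma~\ref{L-2.2} one has $M_\theta(x,1)\ge M_\mathrm{G}(x,1)$ for all $x>0$ when $\theta<4$, and $M_\theta(x,1)\le M_\mathrm{G}(x,1)$ when $\theta>4$. Raising to the power $\theta$ and keeping track of its sign gives, on $(0,\infty)\times(0,\infty)$, that $\phi_\theta\ge M_\mathrm{G}^\theta$ for $0\le\theta\le4$ and $\phi_\theta\le M_\mathrm{G}^\theta$ for $\theta\le0$ or $\theta\ge4$, with equality at $\theta=0,4$. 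For the Hilbert--Schmidt norm the stated inequalities between $\delta_{M_\mathrm{G}^\theta}$ and $\delta_{\phi_\theta}$ follow at once from Theorem~\ref{T-4.1}. For an arbitrary unitarily invariant norm I would pass through Proposition~\ref{P-5.2}: since $M_\mathrm{G}(e^t,1)=e^{t/2}$ and $M_\theta$ is homogeneous, the ratio occurring in condition~(i) there is $(M_\theta(e^t,1)/M_\mathrm{G}(e^t,1))^{\theta/2}=M_\theta(e^{t/2},e^{-t/2})^{\theta/2}$, and a short computation using the identity $M_\theta(x,1)^{\theta/2}=\alpha\,x^\alpha(x-1)/(x^\alpha-1)$ with $\alpha:=(\theta-2)/2$ (cf.\ the proof of Lemma~\ref{L-2.5}) collapses it to
$$
M_\theta\bigl(e^{t/2},e^{-t/2}\bigr)^{\theta/2}=\frac{\alpha\,\sinh(t/2)}{\sinh(\alpha t/2)},\qquad t\in\bR.
$$
By the classical fact quoted above, this function --- or its reciprocal, according to which way the pointwise domination between $\phi_\theta$ and $M_\mathrm{G}^\theta$ runs --- is positive definite on $\bR$, so Proposition~\ref{P-5.2} with Remark~\ref{R-5.3} delivers the inequality for every unitarily invariant norm.

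For the monotonicity of $\theta\mapsto\delta_{\phi_\theta,|||\cdot|||}(A,B)$ I would work from Proposition~\ref{P-5.1}. Setting $s:=(2-\theta)/2$, we have $\delta_{\phi_\theta,|||\cdot|||}(A,B)=\frac{1}{|s|}\,|||A^{s}-B^{s}|||$ for $\theta\ne2$, while the value $|||\log A-\log B|||$ at $\theta=2$ arises as the limit $s\to0$; hence monotonicity of the distance on $(-\infty,2]$ and on $[2,\infty)$ is equivalent to monotonicity of $s\mapsto\frac{1}{|s|}|||A^{s}-B^{s}|||$ on $s>0$ and on $s<0$ respectively, and the range $s<0$ reduces to $s>0$ by replacing $(A,B)$ with $(A^{-1},B^{-1})$. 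After diagonalizing $A$ and invoking the Schur-multiplier description \eqref{F-1.2}, the comparison of $\frac{1}{s}(A^{s}-B^{s})$ with $\frac{1}{s'}(A^{s'}-B^{s'})$ for $0<s<s'$ should again reduce to exhibiting a positive semidefinite Schur multiplier connecting the two matrices, from which the norm inequality would follow exactly as in the proof of Proposition~\ref{P-5.2}.

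The algebra here (Lemma~\ref{L-2.2}, the power manipulations, the substitution $s=(2-\theta)/2$) is routine; the substantive content in both parts is the verification of the positive-definiteness claims. For the $M_\mathrm{G}^\theta$ comparison this is modest --- the closed form $\alpha\sinh(t/2)/\sinh(\alpha t/2)$ together with the positive definiteness of $\sinh(at)/\sinh(bt)$ does it --- but for the monotonicity statement I expect the corresponding Schur-multiplier inequality between two \emph{different} exponents to be the main obstacle, and it is precisely the step responsible for the result holding for an arbitrary unitarily invariant norm rather than only for the Hilbert--Schmidt norm.
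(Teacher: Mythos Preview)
For the comparison with $M_\mathrm{G}^\theta$ your argument is correct and coincides with the paper's: the paper simply cites the infinite-divisibility relations $M_\theta\ll M_\mathrm{G}$ for $\theta\ge4$ and $M_\mathrm{G}\ll M_\theta$ for $\theta\le4$ from \cite[\S2.6]{BK}, which is exactly the positive definiteness of the ratio you reduced to $\alpha\sinh(t/2)/\sinh(\alpha t/2)$ (or its reciprocal), and then applies Proposition~\ref{P-5.2}.

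For the monotonicity assertion your outline also matches the paper's method. The paper carries out the Schur-multiplier step you anticipated: for $\theta'<\theta<2$ (or $2<\theta<\theta'$) it introduces
\[
k(x,y):=\frac{2-\theta'}{2-\theta}\cdot\frac{x^{(2-\theta)/2}-y^{(2-\theta)/2}}{x^{(2-\theta')/2}-y^{(2-\theta')/2}},
\]
writes $\frac{2}{2-\theta}\bigl(A^{(2-\theta)/2}-B^{(2-\theta)/2}\bigr)$ as the Schur product of $[k(\lambda_i,\mu_j)]$ with $\frac{2}{2-\theta'}\bigl(A^{(2-\theta')/2}-B^{(2-\theta')/2}\bigr)$ (where $\lambda_i,\mu_j$ are the eigenvalues of $A,B$), asserts that $k$ is positive definite with $k(x,x)=1$, and invokes \cite[1.4.1]{Bh3}. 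Your caution about this ``main obstacle'' is well placed, however: a direct computation gives $k(x,x)=x^{(\theta'-\theta)/2}$, not $1$ (for instance $k(x,y)=2/(\sqrt x+\sqrt y)$ when $\theta'=0$, $\theta=1$), so the Schur-multiplier contraction does not follow as written. More seriously, the monotonicity statement itself fails already on scalar matrices: with $A=\tfrac12 I$ and $B=I$, Proposition~\ref{P-5.1} gives
\[
\delta_{\phi_\theta,|||\cdot|||}(A,B)=\frac{2}{|2-\theta|}\,\bigl|2^{-(2-\theta)/2}-1\bigr|\cdot|||I|||,
\]
equal to $0.375,\ 0.5,\ 2(1-1/\sqrt2)\approx0.586,\ \log2\approx0.693$ at $\theta=-2,0,1,2$ --- strictly \emph{increasing} on $(-\infty,2]$. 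The structural reason is that $\delta_{\phi_\theta,|||\cdot|||}$ rescales by $c^{(2-\theta)/2}$ under $(A,B)\mapsto(cA,cB)$, a factor that depends on $\theta$; no $\theta$-uniform inequality can survive this. Thus the positive-definite Schur multiplier you were seeking is not a contraction in general, and the first assertion as stated cannot be established.
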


\begin{proof}
Assume that $\theta'<\theta<2$ or $2<\theta<\theta'$, and define a kernel function
$k:(0,\infty)\times(0,\infty)\to(0,\infty)$ by
$$
k(x,y):={2-\theta'\over2-\theta}\cdot
{x^{2-\theta\over2}-y^{2-\theta\over2}\over x^{2-\theta'\over2}-y^{2-\theta'\over2}}.
$$
The kernel $k(x,y)$ is positive definite (even infinitely divisible) by
\cite[Theorem 2]{BK} and $k(x,x)=1$ for all $x>0$. With the diagonalizations
$A=U\diag(\lambda_1,\dots,\lambda_n)U^*$ and $B=V\diag(\mu_1,\dots,\mu_n)V^*$ we write
\begin{align*}
{2\over2-\theta}\Bigl(A^{2-\theta\over2}-B^{2-\theta\over2}\Bigr)
&=U\biggl({2\over2-\theta}\biggl[\lambda_i^{2-\theta\over2}
-\mu_j^{2-\theta\over2}\biggr]_{ij}\circ(U^*V)\biggr)V^* \\
&=U\biggl(\bigl[k(\lambda_i,\mu_j)\bigr]_{ij}\circ
{2\over2-\theta'}\biggl[\lambda_i^{2-\theta'\over2}
-\mu_j^{2-\theta'\over2}\biggr]_{ij}\circ(U^*V)\biggr)V^* \\
&=U\biggl(\bigl[k(\lambda_i,\mu_j)\bigr]_{ij}\circ
U^*{2\over2-\theta'}\Bigl(A^{2-\theta'\over2}-B^{2-\theta'\over2}\Bigr)V\biggr)V^*.
\end{align*}
Hence \cite[1.4.1]{Bh3} can be applied to obtain
$\delta_{\phi_\theta,|||\cdot|||}(A,B)\le\delta_{\phi_{\theta'},|||\cdot|||}(A,B)$ thanks
to Proposition \ref{P-5.1}.

The second assertion (extending (4) of Example \ref{E-4.7}) follows since
$M_\theta\ll M_\mathrm{G}$ for $\theta\ge4$ and $M_\mathrm{G}\ll M_\theta$ for
$\theta\le4$ (see \cite[\S2.6]{BK}).
\end{proof}

\section*{Acknowledgments}

This work is partially supported by the Hungarian Research Grant OTKA T068258 (D.P.) and
Grant-in-Aid for Scientific Research (B)17340043 (F.H.) as well as by Hungary-Japan
HAS-JSPS Joint Project (D.P.\ \& F.H.).
D.P.\ thanks to Professors Peter Michor and Gabor Toth for communication about Riemannian
metrics. F.H.\ thanks to Professor Hideki Kosaki for communication about operator
monotonicity and infinite divisibility for means.

\end{document}